\documentclass[11pt]{article}

\usepackage[margin=1in]{geometry}
\usepackage{lmodern}
\usepackage[T1]{fontenc}
\usepackage[utf8]{inputenc}
\usepackage[tbtags]{amsmath}
\allowdisplaybreaks
\usepackage{amssymb,amsthm,mathtools,bbm}
\usepackage{physics}
\usepackage{hyperref}
\usepackage[svgnames]{xcolor}
\usepackage[capitalise,nameinlink]{cleveref}
\definecolor{links}{RGB}{11, 85, 255}
\definecolor{cites}{RGB}{11, 85, 255}
\definecolor{urls}{RGB}{255, 116, 0}
\hypersetup{colorlinks={true},linkcolor={links},citecolor=[named]{cites},urlcolor={urls}}
\usepackage[compress]{natbib} 
\usepackage{pgfplots}
\pgfplotsset{compat=1.14}
\usepgfplotslibrary{fillbetween}
\usepackage{hyphenat}
\usepackage{caption}
\usepackage{subcaption}
\usepackage{appendix}
\usepackage{nicefrac}
\usepackage{tcolorbox}
\usepackage[symbol]{footmisc}
\usepackage{float}
\usepackage[section]{placeins}

\usepackage{colortbl} 
\usepackage{booktabs} 
\usetikzlibrary{calc}

\usepackage{thm-restate}

\usepackage{algorithm}
\usepackage[noend]{algpseudocode}

\usepackage{enumitem}

\newcommand{\cA}{\mathcal{A}}
\newcommand{\R}{\mathbb{R}}
\newcommand{\Z}{\mathbb{Z}}

\newcommand{\B}{\mathbb{B}}

\newcommand{\cI}{\mathcal{I}}
\newcommand{\cF}{\mathcal{F}}

\DeclareMathOperator{\vol}{Vol}
\DeclareMathOperator{\width}{width}
\DeclareMathOperator{\proj}{proj}
\renewcommand{\ip}[1]{\langle #1 \rangle}
\newcommand{\E}[1]{\mathbb{E}\left[#1\right]}

\newcommand{\ind}[1]{\mathbbm{1}\left\{#1\right\}}
\newcommand{\eps}{\varepsilon}

\newcommand{\gft}{\textup{\textsc{GFT}}}
\newcommand{\profit}{\textup{\textsc{Profit}}}

\newcommand{\itb}{\proj(B_t,x_t)}
\newcommand{\its}{\proj(S_t,x_t)}

\newcommand{\us}{\overline{s}_t}
\newcommand{\ls}{\underline{s}_t}

\newcommand{\ub}{\overline{b}_t}
\newcommand{\lb}{\underline{b}_t}

\newtheorem{theorem}{Theorem}[section]
\newtheorem*{theorem*}{Theorem}
\newtheorem{lemma}[theorem]{Lemma}

\newtheorem{observation}[theorem]{Observation}
\newtheorem{fact}[theorem]{Fact}

\begin{document}

\title{Contextual Online Bilateral Trade\footnote{Part of this work was done while Federico Fusco, Stefano Leonardi, and Matteo Russo were visiting Anupam Gupta at New York University. The work of RC and AG was supported in part by NSF awards CCF-2422926 and CCF-2608359. The work of FF and SL was supported in part by the MUR PRIN grant 2022EKNE5K and the FAIR (Future Artificial Intelligence Research) project PE0000013,
funded by the NextGenerationEU program within the PNRR-PE-AI scheme (M4C2, investment
1.3, line on Artificial Intelligence)}}

\date{}
\author{}

\author{Romain Cosson\thanks{Computer Science Department, New York University.} \and  Federico Fusco\thanks{Sapienza University, Rome, Italy} \and Anupam Gupta\footnotemark[2]
\and Stefano Leonardi\footnotemark[3] \and Renato Paes Leme\thanks{Google Research, NYC, United States} \and Matteo Russo\thanks{EPFL, Lausanne, Switzerland}}

\maketitle
\thispagestyle{empty}

\begin{abstract}

    We study repeated bilateral trade when the valuations of the sellers and the buyers are contextual. More precisely, the agents' valuations are given by the inner product of a context vector with two unknown $d$-dimensional vectors---one for the buyers and one for the sellers.

    At each time step $t$, the learner receives a context and posts two prices, one for the seller and one for the buyer, and the trade happens if both agents accept their price. We study two objectives for this problem, gain from trade and profit, proving no-regret with respect to a surprisingly strong benchmark: the best omniscient dynamic strategy.

    In the natural scenario where the learner observes \emph{separately} whether the agents accept their price---the so-called \emph{two-bit} feedback---we design algorithms that achieve $O(d\log d)$ regret for gain from trade, and $O(d  \log\log T + d\log d)$ regret for profit maximization. Both results are tight, up to the $\log(d)$ factor, and implement per-step budget balance, meaning that the learner never incurs negative profit.

    In the less informative \emph{one-bit} feedback model, the learner only observes whether a trade happens or not. For this scenario, we show that the tight two-bit regret regimes are still attainable, at the cost of allowing the learner to possibly incur a small negative profit of order $O(d\log d)$, which is notably independent of the time horizon. As a final set of results, we investigate the combination of one-bit feedback and per-step budget balance. There, we design an algorithm for gain from trade that suffers regret independent of the time horizon, but \emph{exponential} in the dimension $d$. For profit maximization, we maintain this exponential dependence on the dimension, which gets multiplied by a $\log T$ factor.
    
\end{abstract}

\newpage
\pagenumbering{arabic} 
\section{Introduction}
\label{sec:introduction}
    Online pricing is a well-studied problem at the intersection of theoretical computer science, machine learning, and economics \citep{LiuLS21,LemeSTW23,SinglaW24} that dates back to the seminal papers by \citet{KleinbergL03} and \citet{BlumKRW03}. The task consists of finding the right pricing rule for certain types of goods by repeatedly interacting with prospective buyers and observing their reactions to the prices posted. A natural extension consists in studying the situation where the mechanism designer does not directly possess the goods to trade, but acts as an intermediary between demand and supply. While the offline version of this \emph{bilateral trade} problem dates back to classic works of \citet{Vickrey61} and \citet{MyersonS81}, its online version has received a rapidly growing interest within the computer science community \citep{Cesa-BianchiCCF24,Cesa-BianchiCCF24a,AzarFF24,BernasconiCCF24,LunghiCM26}. Indeed, finding the right balance between sellers and buyers gives rise to new and exciting challenges that qualitatively separate one-sided pricing and bilateral trade. In this work, we investigate bilateral trade from an online learning perspective, assuming that sellers' and buyers' valuations are parameterized as functions of context vectors \citep{GaucherBCCP25}.
    
    Consider the following scenario, where an internet platform repeatedly intermediates between supply and demand in a certain market (e.g., ride-sharing, e-commerce, or ads). At each time step $t$, a new seller joins the platform, interested in selling a certain good; at the same time, an interested buyer arrives, and the platform observes their features summarized in a $d$-dimensional context vector $x_t$. The agents' private valuations for the good for sale are then given by the public features, averaged by unknown but fixed \emph{weights}. Formally, we assume that there are two unknown vectors $s$ and $b$ in $\R^d$ that represent the relative importance of the various features, and that the agents' valuations $s_t = \langle s, x_t \rangle$ and $b_t = \langle b, x_t \rangle$ are given by the scalar product of these two ``ground truths'' and the public context. Given the context $x_t$, the platform proposes two prices, $p_t$ to the seller and $q_t$ to the buyer, and the trade occurs if and only if both agents accept their prices. 

    From a learning perspective, the platform's goal is to design a pricing policy that takes as input the context at time $t$ and computes the right prices based on agents' past behavior. In this work, we consider the two most natural performance objectives for this problem: \emph{economic efficiency}, measured in terms of the gain from trade, and \emph{profit}. 
    \begin{itemize}
        \item The \textbf{gain from trade} equals the difference $b_t - s_t$ if the trade happens, and zero otherwise.
        \item The \textbf{profit} is simply the difference between the price paid by the buyer and that paid to the seller, $ q_t-p_t$ if the trade happens, and zero otherwise.
    \end{itemize}
    At a high level, the gain from trade is maximized as soon as the learner posts two prices contained in the interval $[s_t,b_t]$, while to extract good profit, the prices should also be ``close'' to the corresponding valuations. Our goal is to design learning algorithms for these two problems that exhibit small regret with respect to an ambitious benchmark: the best omniscient pricing strategy that knows $s$ and $b$ in advance. Notably, there exists a strategy that is optimal for both efficiency and profit maximization, which consists of posting $p_t = s_t$ and $q_t = b_t$ when $s_t \le b_t$. Therefore, the two metrics share the same benchmark: $\sum_t (b_t-s_t)_+$, where $(\cdot)_+$ denotes the positive part.  
    
    Following the literature \citep[e.g.,][and follow-ups]{Cesa-BianchiCCF24}, we consider two realistic feedback models for repeated bilateral trade: one- and two-bit feedback.
    \begin{itemize}
        \item \textbf{two-bit}: The learner observes the willingness to accept their prices for both agents.
        \item \textbf{one-bit}: The learner only observes whether the trade happened or not.
    \end{itemize}
    These two models are motivated by practical considerations. The two-bit setting models the situation in which the learner communicates prices to the agents and observes whether each agent individually agrees to trade. Conversely, in the one-bit setting, the learner does not receive the disaggregated information on who rejected its proposed price. Furthermore these two feedback structures exhibit a desirable property that characterizes take-it-or-leave-it prices: agents communicate only a single bit of information to the platform, without revealing their private information.

    Beyond the non-trivial interplay of the two agents' valuations in designing good pricing policies, bilateral trade exhibits a qualitative difference with respect to simple (one-sided) pricing: the requirement for the mechanism designer not to run into a deficit to make the trade happen. This is formalized by the \textbf{budget-balance} property \citep{MyersonS81}, which prescribes some constraint on how much the learner can \emph{subsidize} the market (i.e., incur negative profit).
    
    \subsection{Our Results}
    \label{subsec:results}

        We present here our results; we also refer to Table~\ref{tab:results} for a visual summary.

        \paragraph{Two-bit Feedback.}In the two-bit feedback model, we have the following results:
        \begin{itemize}
            \item We construct an \emph{efficiency-maximizing}
              algorithm that suffers $O(d \log d)$ regret
              (\Cref{thm:2-bit-gft}). Our rate removes the dependence
              on $T$ and improves on the dependence on $d$ over the previous
              known bound of $\tilde{O}(d^2 \log T)$\cite{GaucherBCCP25}; this is only a
              logarithmic (in $d$) factor 
              from the lower bound of $\Omega(d)$
              (\Cref{thm:2-bit-gft-lower}).
            \item We construct a \emph{profit-maximizing} algorithm
              that suffers $O(d \log \log T + d \log d)$ regret
              (\Cref{thm:2-bit-profit}). We show that the
              doubly-logarithmic dependence in $T$ is essentially
              tight, by adapting ideas from \citet{KleinbergL03} to
              get lower bound of $\Omega(d\log \log \nicefrac{T}d)$
              (\citet{LiuLS21};
              see also \Cref{thm:2-bit-profit-lower}).
        \end{itemize}

        The above algorithms are deterministic and
        \emph{per-round budget balanced} \citep[][and
        follow-ups]{Cesa-BianchiCCF24}: they only post prices
        $p_t \le q_t$, ensuring that the per-round profit is non-negative.

        \paragraph{One-bit Feedback.}In the one-bit feedback model, we
        present two sets of results, for different notions of budget
        balance.  Our first set of results is when we allow ourselves
        the flexibility of posting prices $p_t > q_t$: 
        \begin{itemize}
        \item We adapt the two-bit feedback algorithms for
          both gain-from-trade and profit to work in the one-bit
          feedback model, at the expense of slightly violating
          \emph{per-round} budget balance. 
          \item In particular, we recover the two-bit regret rates, with an
          overall negative profit\footnote{This is referred to as
            ``budget violation'' in the recent work by
            \citet{LunghiCM26}.} of order $O(d \log d)$
          (\Cref{thm:gft-safe,thm:profit-safe}). Notably, this term is independent of the time horizon.
        \end{itemize}
        We then consider per-round budget balance
        (i.e., where we post prices $p_t \leq q_t$), and design randomized learning algorithms enforcing such property with the following regret rates:
        \begin{itemize}
        \item We construct an \emph{efficiency-maximizing} algorithm
          that exhibits a regret independent of $T$ of order
          $O(d 6^d)$ (\Cref{thm:one-bit-gft}).
        \item We construct a \emph{profit-maximizing} algorithm with
          regret that depends logarithmically on $T$, of order
          $O(d 6^d \log T)$ (\Cref{thm:one-bit-profit}).
        \end{itemize}

        \begin{table}[t!]
        \centering
        \renewcommand{\arraystretch}{1.5}
        \begin{tabular}{l|l|l|l|}
            \cline{2-4}
                                                  & Two-bit / BB                        & One-bit / non-BB                      & One-bit / BB \\ \hline
            \multicolumn{1}{|l|}{$\gft$} & $\tilde \Theta(d)$             & $\tilde \Theta(d)$                     & $O(d6^d)$         \\ \hline
            \multicolumn{1}{|l|}{$\profit$}          & $\tilde \Theta(d\log\log T)$ & $\tilde \Theta(d\log\log T)$ & $O(d6^d \log T)$  \\ \hline
            \end{tabular}
            \caption{\footnotesize Summary of our results. The $\tilde{\Theta}$ hides factors that are logarithmic in the dimension. The results in the middle column hold at the expense of an overall negative profit of order $O(d \log d)$. BB stands for (per-round) budget balance. GFT stands for gain from trade.}
            \label{tab:results}
        \end{table}

        While explicitly requiring budget balance is crucial for efficiency maximization \citep{MyersonS81} (we do not want the learner to subsidize the market), it is less prominent for profit maximization, where the learner's goal of accruing profit is naturally aligned with the budget balance property.  In any case, budget balance is a desirable and natural property also for a profit-maximizing algorithm, and we investigate its impact on both metrics for completeness. 
        Overall, it is remarkable that achieving such a small regret rate (even time-independent for $\gft$) against the best dynamic prices is possible, in a non-stationary (although structured) environment.

        \paragraph{Comparison with previous work.} While we are the first to study profit maximization,
        efficiency maximization in contextual bilateral trade has been
        introduced by \citet{GaucherBCCP25}. They provide an $\tilde
        O(d^2\log T)$ regret result for two-bit feedback, which we
        improve to a nearly-tight bound of $O(d \log d)$,
        removing the dependence on $T$ and a $\tilde
        \Theta(d)$ factor. \citet{GaucherBCCP25} also investigate
        one-bit feedback, constructing algorithms that violate
        per-round budget balance, but have overall non-negative profit
        (This is called the global budget balance condition in \citet{BernasconiCCF24}). However, such results
        are instance-dependent and exhibit regret rates that are
        polynomial in $T.$ In contrast, we either allow for
        (extremely) small negative profit and obtain tight regret
        rates, or maintain per-round budget balance, at the cost of
        worse dependence on the dimension $d$.
        
        \paragraph{Future Directions.} We provide a fairly complete understanding of contextual bilateral trade under different feedback models and budget balance constraints. We mention two open directions for future works. (i) Characterizing the ``budget vs. regret'' trade off for one-bit feedback algorithms: given a violation budget $B=B(d,T)$, what is the minimax regret rate achievable with one-bit feedback and allowing for budget violation $B$ (similarly to what has been recently done for adversarial bilateral trade in \citet{LunghiCM26})? (ii) Investigating contextual 
        models that allow noisy or imprecise feedback. While more natural in applications, these models may not allow for sublinear regret against the best omniscient pricing strategy, so that weaker benchmarks (e.g., ``best-fixed-strategy'') may need to be considered, likely with polynomial regret rates (as in \citet{GaucherBCCP25}).

    \subsection{Technical Challenges and Our Approach}
    \label{subsec:challenges}

    {The algorithmic problem we face has a clean geometric
      interpretation: we want to $s, b \in \R^d$ using
      special types of probes: given context
      $x \in \R^d$, we can post two prices $p$ and $q$, and observes
      the outcome of the trade---whether $\langle s, x \rangle \leq p$
      \emph{and} $\langle b, x \rangle \geq p$. E.g., if the seller
      accepts its price, but the trade fails because the buyer refuses
      $q$, we know that $s$ lies in the half-space
      $\{\tilde s\mid\ip{\tilde s,x} \ge p \}$ and
      $b \in \{\tilde b\mid \ip{\tilde b,x} \le q \}$. The learning
      algorithm can then maintain, for each agent, a ``confidence
      region'' defined by the intersections of these half-spaces,
      given the feedback thus far (see also
      \Cref{fig:visualization-intro}.)}

    \paragraph{Potentials and regret.} 
    A good algorithm for contextual bilateral trade must find the right balance between
      narrowing these confidence regions, and suffering as little
      regret as possible. 
      To capture this, it is natural to define a notion of
      ``potential'' which quantifies such delicate trade-off. In
      particular, our starting point is the family of potential
      functions based on Steiner polynomials introduced by
      \citet{LiuLS21}.

    {For the bilateral trade problem, we need to
      consider two potentials, one corresponding to the seller's
      confidence region, and another to the buyer. Crucially, we
      cannot handle the two 
      separately: 
      (i) The prices must respect budget balance, so the choice of
      price $p$ restricts the values that $q$ can take, and (ii) both
      the gain-from-trade, and the profit depend on \emph{both} agents
      accepting their price. E.g., a good price for the seller may
      still incur constant regret if the buyer rejects the trade. To
      strike the right balance, we consider projections of the
      confidence regions onto the current context: if the two
      projections are disjoint, we can consider the two agents
      separately; otherwise, we focus on the agents corresponding to
      larger projection (i.e., with greater uncertainty along the
      current context), and use the length of its projection as a
      proxy for the incurred regret.  }

\paragraph{Ambiguity in one-bit feedback.} 
Updating the confidence regions for the buyer and seller separately is
only feasible if we have explicit feedback from both agents. In the
one-bit setting, a ``no-trade'' outcome is ambiguous: the learner
cannot discern whether the buyer's price was too high or the seller's
price was too low (or both). Indeed, using this feedback directly gives us
non-convex uncertainty regions, which are difficult to handle. To overcome this ambiguity, we give two different solutions,
  each characterized a different notion of budget balance. 
  
  In the
  first set of results, we show how to ``simulate'' two-bit feedback
  by posting a ``safe'' price for at least one of the two agents;
  i.e., a price that will surely be accepted. To do this, the learner may
  post a price to the buyer which is below that for the seller,
  thereby violating budget balance; hence, we need to carefully bound
  the negative profit incurred. In the second set of results which
handle this ambiguity without violating the budget balance, we update
the confidence regions \emph{only} when a trade successfully occurs,
as this is the only situation where the agents' hidden bits are
unambiguous. However, this creates a new problem: discarding updates
from no-trade rounds breaks the analysis used in two-bit algorithms
(notably, the two-bit profit-maximizing algorithm typically relies on
decreasing the potential doubly exponentially when no trade is
observed).

To resolve this, we introduce a novel randomized pricing strategy. We
demonstrate that the algorithm incurs constant expected regret between
periods of successful trade. This allows us to ``charge'' the regret
accumulated during no-trade steps to the time steps where a trade
occurs and the potential decreases. Because the prices inducing this
potential decrease are randomized, they result in a slower reduction
of potentials (as opposed to the deterministic ones allowed in the
two-bit feedback), yielding a looser dependence on the dimension.

    \subsection{Related Work}
    \label{subsec:related}

    Our work is at the intersection of three main lines of research: online contextual pricing, the study of efficiency and profit maximization in bilateral trade, and its online learning version. 

    \paragraph{Online Contextual Pricing.} Our setting generalizes the (one-sided) contextual pricing problem introduced by \citet{AminRS14}. A sequence of papers \citep{CohenLL20,LemeS22,LiuLS21} introduced techniques from convex geometry to obtain increasingly better bounds for this problem; the state of the art is the $O(d \log \log T)$ algorithm in \citet{LiuLS21}, which is essentially tight \citep{KleinbergL03,LiuLS21}. This problem has also been studied under stochastic noise models \citep{JavanmardN19,Javanmard17,ShahJB19,XuW21,XuW22}, adversarial noise models \citep{KrishnamurthyLPS23,LemePS22} and different assumptions on the valuation \citep{MaoLS18}.  

    \paragraph{(Bayesian) Bilateral Trade.} In the Bayesian Bilateral Trade problem, the agents' private valuations are drawn from some \emph{known} distributions, and the goal is to design trading mechanisms that are robust to the strategic behavior of the agents. The most studied objective is efficiency maximization, measured in terms of the multiplicative gap between the social welfare or gain from trade achieved and the best possible attainable \citep[e.g.,][]{Colini-Baldeschi17a,BlumrosenD21,DuttingFLLR21,DengMSW22}. While social welfare and gain from trade are equivalent in a regret-minimization setting (social welfare is simply gain from trade plus seller's valuation), obtaining a multiplicative approximation for the latter turned out to be far more challenging. The structure of the profit-maximizing mechanism was already well understood in the classical paper by \citet{MyersonS81}, and recently there has been a systematic study of the profit-efficiency trade-off attainable with truthful mechanisms \citep{HajiaghayiHPS25}.

    \paragraph{Online Learning and Bilateral Trade.} A long line of
    research has investigated bilateral trade from the online learning
    perspective. The main difference with our setting is that, in
    those settings, the valuations are generated by an adversary (either i.i.d.~stochastic, $\sigma$-smooth, or adversarial), without a notion of ``context''. The regret is then measured with respect to the best fixed price in hindsight (as opposed to the best dynamic pricing policy), and the rates achievable are typically polynomial in $T$. These differences effectively change the nature of the problem (and the techniques one may use): while in our problem we are trying to locate the $(s,b)$ pair, in online learning the goal is more on ``hedging'' between all possible prices to be competitive with the one that performs better on adversarial or stochastic inputs. For gain-from-trade maximization, tight regret rates are known for various combinations of feedback models, data-generation assumptions, and versions of budget balance constraints \citep{Cesa-BianchiCCF24,Cesa-BianchiCCF24a,AzarFF24,BernasconiCCF24,LunghiCM26}. We also mention works 
    generalizing the study of bilateral trade to 
    to settings with more than one seller and one buyer at each time step \citep{BabaioffFN24}, or 
    to the convergence of natural pricing behaviors \citep{DengMS0W25}. For profit maximization, a recent paper has investigated the online learnability of profit-maximizing mechanisms beyond fixed-price \citep{DiGregorioDFS25}. 

    \paragraph{Independent and concurrent work.} In an independent and concurrent work, \citet{CocciaBC26} study another contextual model for bilateral trade, where the functions mapping contexts to valuations are Lipschitz. While their model is more general than ours, none of our results is subsumed by theirs: they prove regret bounds which exhibit \emph{polynomial} dependence in the time horizon $T$, which are tight for general Lipschitz functions, but not for the linear case we study. Given the nature of their rates, they focus on the case where the time horizon is way larger than the dimension $d$ of the problem, i.e. $T \gg 2^d$. Finally, they only investigate gain-from-trade maximization, and not profit. 

\section{Model and Preliminaries}
\label{sec:model}

    In contextual bilateral trade, a learner repeatedly interacts with a sequence of seller-buyer pairs 
    with the goal of maximizing either the economic efficiency, measured in terms of the gain from trade, or its own profit. The learning protocol is as follows: At each time $t=1, \dots, T$, the learner observes a context $x_t$ in the unit ball $\B = \{w \in \R^d \mid \|w\|_2 \leq 1\}$ and posts two prices, $p_t$ to the seller and $q_t$ to the buyer. The trade happens if and only if both agents accept their prices, i.e., if $p_t$ is at least the seller's valuation $s_t$, and $q_t$ is at most the buyer's valuation $b_t$. The learner then observes feedback that depends on the posted prices and the agents' valuations. We assume that $s_t$ and $b_t$ follow a linear structure\footnote{Throughout the paper, we present our results in terms of linear functions, but the model generalizes to any parameterized function of the type $f(x) = \langle w, \phi(x)\rangle$ for an arbitrary transformation of the context vectors. For example, polynomials of degree $k$ 
    can be written as linear functions of
    $\phi(x) = (\prod_i x_i^{s_i})_{s \in \Z_+^d; \Vert s \Vert_1 \leq k}$. Thus, our bounds extend to polynomials of degree $k$, replacing $d$ with $O(d^k)$.}, namely that there exist two vectors $s,b \in \B$ fixed but unknown to the learner such that $s_t = \ip{s,x_t}$, and $b_t = \ip{b,x_t}$. 
   
    We study two feedback models: (i) two bits, where the learner observes $(\ind{s_t \le p_t}, \ind{q_t \le b_t})$, and (ii) one bit, where it only gets $ \ind{s_t \le p_t}\cdot \ind{q_t\le b_t}$. In the former model, the learner has access to the relative position of both agents' valuations and their posted prices, while in the latter, it only observes whether the trade happened. We consider the gain from trade and profit objectives for the learner, as introduced in \Cref{sec:introduction}. At time $t$, we have (in explicit form):
    \begin{align*}
        \gft_t(p_t,q_t) &= \ip{b-s,x_t}\cdot \ind{\ip{s,x_t} \le p_t}\cdot\ind{q_t \le \ip{b,x_t}}\\
        \profit_t(p_t,q_t) &= (q_t-p_t) \cdot \ind{\ip{s,x_t} \le p_t}\cdot\ind{q_t \le \ip{b,x_t}}.
    \end{align*}

    \begin{observation}
        Given the structure of the agents' valuations and the definitions of gain from trade and profit, we can assume without loss of generality that the contexts $x_t$ have unitary norm. We can also consider other norms beyond the $\ell_2$ one, for instance, we may also assume $x_t,s,$ and $b$ bounded according to other norms (e.g., $\|x_t\|_\infty \le 1$ and $\|s\|_1,\|b\|_1\le 1$), and the results do not change significantly, as long as profit and gain from trade remain bounded.
    \end{observation}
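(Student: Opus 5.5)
The plan is to prove the two claims of the Observation separately by a rescaling argument: first the reduction to unit-norm contexts, then the extension to other norms.

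\emph{Reduction to unit-norm contexts.} Fix a round $t$ with $x_t \neq 0$ and set $\lambda_t = \|x_t\|_2 \in (0,1]$ and $\hat x_t = x_t/\lambda_t$. If $x_t = 0$, then $s_t = b_t = 0$, the benchmark $(b_t - s_t)_+$ is $0$, and the prices $p_t = q_t = 0$ achieve zero regret, zero profit, and no update. Such rounds can therefore be skipped and we may assume $\lambda_t > 0$.

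For $\lambda_t > 0$, the learner runs the unit-norm algorithm on $\hat x_t$, obtains prices $(\hat p_t, \hat q_t)$, and posts $(p_t, q_t) = (\lambda_t \hat p_t, \lambda_t \hat q_t)$. By linearity,
\[
\ind{\ip{s,x_t} \le p_t} = \ind{\ip{s,\hat x_t} \le \hat p_t}, \qquad \ind{q_t \le \ip{b,x_t}} = \ind{\hat q_t \le \ip{b,\hat x_t}}.
\]
Consequently:
\begin{itemize}
\item The feedback (one or two bits) is exactly what the unit-norm instance would produce, so the simulated algorithm's state evolves identically.
\item Budget balance is preserved, since $p_t \le q_t$ if and only if $\hat p_t \le \hat q_t$, and any negative profit is scaled down by the factor $\lambda_t$.
\item Both $\gft_t$, $\profit_t$ and the benchmark term $(b_t - s_t)_+$ scale by exactly $\lambda_t$.
\end{itemize}
Hence the per-round regret equals $\lambda_t$ times the per-round regret of the simulated instance. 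Because $\lambda_t \le 1$ and per-round regret is nonnegative against the omniscient benchmark, summing over rounds shows that total regret (in expectation, for randomized algorithms) and total budget violation are at most those of the unit-norm algorithm.

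\emph{Extension to other norms.} The algorithms only use $s, b$ through membership in a convex body $K$ that initializes the confidence regions. The regret bounds depend on $K$ only through quantities such as its width and intrinsic volumes (the Steiner potentials). Replacing $\B$ by another pair of dual-bounded sets, for example $\|x_t\|_\infty \le 1$ and $\|s\|_1 \le 1$, keeps $|\ip{s,x_t}| \le 1$ by Hölder's inequality. The same rescaling argument then normalizes $x_t$ in its own norm, and the potentials change only by factors depending on the diameters of the initial sets. This is the sense of "do not change significantly."

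The only point requiring care is that per-round regret is nonnegative, so that down-weighting by $\lambda_t \le 1$ cannot increase the total. This holds because the benchmark $(b_t - s_t)_+$ pointwise dominates both $\gft_t$ and $\profit_t$ whenever a trade occurs.
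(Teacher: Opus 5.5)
Your treatment of the first claim is correct and complete. The paper states this Observation without proof, and your argument is what its phrase ``given the structure of the agents' valuations'' alludes to. Rescaling $(x_t,p_t,q_t)$ by $1/\lambda_t$ leaves both acceptance bits unchanged, and $\gft_t$, $\profit_t$ and $(b_t-s_t)_+$ are all exactly $\lambda_t$-homogeneous. Per-round regret is pointwise nonnegative for both objectives, as you correctly verify, so weighting it by $\lambda_t\le 1$ can only help. Your handling of $x_t=0$, of randomization against an oblivious adversary, and of the budget violation is also fine.

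The second claim is where your sketch does not go through as written. You assert that the bounds depend only on the initial body $K$, and that one normalizes $x_t$ ``in its own norm''. But the Steiner-potential lemmas also depend on the \emph{Euclidean} length of $x_t$:
\begin{itemize}
\item The proof of \Cref{lem:balanced} compares a slab of Euclidean thickness $z_{i_t}$ beyond the cut with two cones whose Euclidean heights add up to at least $2^{-i_t-1}$. That height bound is justified only because $\width(S_t,x_t)>2^{-i_t-1}$ and $\|x_t\|_2\le 1$.
\item With $\|x_t\|_\infty=1$ you can have $\|x_t\|_2=\sqrt d$. The Euclidean thickness of $S_t$ can then be $\sqrt d$ times smaller than the width the algorithm indexes on.
\item With the paper's $z_i=2^{-i}/(8d)$, the inequality $\vol(S_t+z_{i_t}\B)\ge 4\nu z_{i_t}$ then fails. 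For a thin cone-shaped $S_t$ and large $d$, the balanced price can even exceed $\overline{s}_t$, so nothing is cut.
\item You cannot fall back on your first-part reduction either. Normalizing in $\ell_2$ now uses $\lambda_t=\|x_t\|_2\in[1,\sqrt d]$, which inflates per-round regret instead of shrinking it. As a black box it only gives $O(d^{3/2}\log d)$.
\end{itemize}

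The missing step is to reopen the analysis:
\begin{itemize}
\item Since $\{v:\|v\|_1\le 1\}\subseteq\B$, the volume bounds $z^d\vol(\B)\le\vol(S_t+z\B)\le\vol((1+z)\B)$ are unchanged.
\item Keep the index tied to $\width(\cdot,x_t)$, but shrink the scales to $z_i=2^{-i}/(8d^{3/2})$. This restores the cone inequality, since the Euclidean length is now at least $2^{-i-1}/\sqrt d$.
\item The change affects $\log(3/z_i)$ only by an additive $O(\log d)$, so $\sum_i 2^{-i}d\log(3/z_i)$ remains $O(d\log d)$.
\item The other algorithms are adjusted the same way, by decoupling each potential term's weight from its Steiner scale.
\end{itemize}
In general, what has to be tracked is a bound on $\|x_t\|_2$ together with the Euclidean diameter of the parameter body. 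Both enter only logarithmically, and that is the actual content of ``the results do not change significantly''.
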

    
    We measure the performance of learning algorithms in terms of their regret with respect to the optimal pricing policy in hindsight. Given a possibly randomized efficiency-maximizing algorithm $\cA$, and an instance $\cI$ of the problem, we define the regret of $\cA$ against $\cI$:
    \[
        R_T(\cA,\cI) = \sum_{t=1}^T (b_t - s_t)_+ - \sum_{t=1}^T\gft_t(p_t,q_t),
    \]
    where the algorithm posts prices $p_t$ and $q_t$, and the instance is characterized by the fixed vectors $(s,b)$, and an arbitrary sequence of contexts. The regret for profit-maximizing algorithms is analogous, with $\profit_t$ instead of $\gft_t$. We want a learning algorithm with small \emph{worst-case} regret: $R_T(\cA) = \sup_{\cI}\E{R_T(\cA,\cI)}$, where the expectation is with respect to the randomness of the algorithm; note, we consider an oblivious adversary, which selects $\cI$ a priori. $\gft_t$ and $\profit_t$ map to $[-2,2]$, so the typical goal is to obtain regret that is sublinear in the time horizon $T$, which implies that the performance of the learning algorithm converges \emph{on average} to the benchmark.

    \paragraph{Budget Balance.} We consider two notions of budget balance: 
    a per-step notion, where  $p_t \le q_t$ for every $t$, and a weaker one, where we require that the overall negative profit is small. In our results (see \Cref{thm:gft-safe,thm:profit-safe}) we enforce that  $\sum_{t=1}^T \max\{0,-\profit_t(p_t,q_t)\} \in O(d \log d).$ We note that the latter requirement is aligned with the notion of ``budget violation'' as in \citet{LunghiCM26}, while it is orthogonal to the global budget balance condition \citep{BernasconiCCF24,GaucherBCCP25}, where it is required that $\sum_{t=1}^T \profit_t(p_t,q_t) \ge 0$. We remark that per-step budget balance is strictly stronger than all other notions.
    
    \paragraph{Steiner Potentials.}A crucial ingredient in our learning algorithms is provided by the Steiner Potentials \citep{LiuLS21}, inspired to the notion of Steiner Polynomials from convex geometry. Given a generic convex set $K$, the Steiner polynomial at scale $z\in \R_+$ is defined as the volume of the set $K + z\B$, where the sum is to be intended according to  Minkowski: $K + z\B = \{y + zw \mid y \in K, w \in \B\}$. 

    \paragraph{Confidence Regions.}All our algorithms maintain two confidence regions: $S_t$ for the seller and $B_t$ for the buyer, and update them according to the feedback received. These regions are initialized to $\B$, and are progressively updated according to hyperplanes orthogonal to the contexts. Formally, if at time $t$ the seller receives price $p_t$ and the algorithm observes $\ind{s_t \le p_t}$, then $S_{t+1}$ is:
    \[
    S_{t+1} = 
    \begin{cases}
    \{v \in S_{t}| \ip{v,x_t} \ge p_t\}  &\text{if the seller refuses $p_t$} \\
    \{v \in S_{t}| \ip{v,x_t} \le p_t\}  &\text{if the seller accepts $p_t$}
    \end{cases}
    \]
    \def\sellerbody{plot[smooth cycle] coordinates {(-1, 0.4) (0, 1.2) (1, 0.8) (0.8, -0.2) (-0.5, -0.3)}}
\def\buyerbody{plot[smooth cycle] coordinates {(-0.6, 0.3) (0, 0.8) (0.6, 0.5) (0.4, -0.1) (-0.3, -0.15)}}



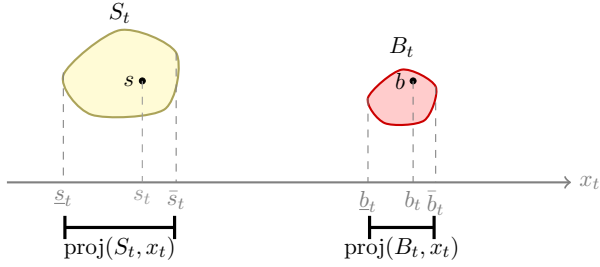
\begin{figure}[t!]
    \centering
    \begin{tikzpicture}[scale=0.75, every node/.style={scale=0.8}]
        \draw[->, gray, thick] (-5,0) -- (5,0) node[right] {$x_t$};

        \begin{scope}[shift={(-3, 1.5)}] 
            \draw[draw=yellow!60!black, fill=yellow!20, thick] \sellerbody;
            \node at (0, 1.5) {$S_t$}; 

            \coordinate (s_left) at (-1, 0.45);   
            \coordinate (s_right) at (1, 0.8);    
            
            \coordinate (s_yel) at (0.4, 0.3);
            
            \fill[draw=black, fill=black, line width=0.2pt] (s_yel) circle (1.5pt) node[left] {$s$};
        \end{scope}

        \draw[dashed, gray] (s_left) -- (s_left |- 0,0) node[below] {$\underline{s}_t$};
        \draw[dashed, gray] (s_right) -- (s_right |- 0,0) node[below] {$\bar s_t$};
        
        \draw[dashed, gray!80] (s_yel) -- (s_yel |- 0,0) node[below] {$s_t$};
        
        \draw[line width=1.2pt, |-|] (s_left |- 0,-0.8) -- (s_right |- 0,-0.8) node[midway, below] {$\its$};

        \begin{scope}[shift={(2, 1.2)}] 
            \draw[draw=red!80!black, fill=red!20, thick] \buyerbody;
            \node at (0, 1.2) {$B_t$}; 

            \coordinate (b_left) at (-0.6, 0.3);  
            \coordinate (b_right) at (0.6, 0.5);  
            
            \coordinate (b_red) at (0.2, 0.6);

            \fill[draw=black, fill=black, line width=0.2pt] (b_red) circle (1.5pt) node[left] {$b$};
        \end{scope}

        \draw[dashed, gray] (b_left) -- (b_left |- 0,0) node[below] {$\underline{b}_t$};
        \draw[dashed, gray] (b_right) -- (b_right |- 0,0) node[below] {$\bar b_t$};
        
        \draw[dashed, gray] (b_red) -- (b_red |- 0,0) node[below] {$b_t$};

        \draw[line width=1.2pt, |-|] (b_left |- 0,-0.8) -- (b_right |- 0,-0.8) node[midway, below] {$\itb$};

    \end{tikzpicture}
    \caption{Illustration of seller and buyer conference regions and projections.}
    \label{fig:visualization-intro} 
\end{figure}
    The buyer set $B_t$ is updated similarly, but with flipped signs. All our algorithms only update the confidence regions when there is no \emph{ambiguity} on $\ind{s_t \le p_t}$ and $\ind{q_t \le b_t}$. By construction, the confidence regions are convex and compact, therefore the sets $\its = \{\ip{v,x_t} \mid v \in S_t\}$ and $\itb=\{\ip{v,x_t} \mid v \in B_t\}$\footnote{The intervals $\its$ and $\itb$ are indeed the projections of $S_t$ and $B_t$ along the direction of $x_t$.} are closed and bounded intervals. Formally, let $\ls = \min\{\ip{v,x_t}| v \in S_t\}, \us = \max\{\ip{v,x_t}| v \in S_t\}$ (and analogously for the buyer), then $\its=[\ls,\us]$, and $\itb = [\lb,\ub]$ whose length we call width: The width of a convex set $K$ along direction $x$ is
    \(
        \width(K,x) = \max_{v_1 \in K} \ip{v_1,x} - \min_{v_2 \in K} \ip{v_2,x}.
    \)
    We refer to \Cref{fig:visualization-intro} for visualization.

\section{Context-free Bilateral Trade}\label{subsec:context-free}

We start by presenting the (simpler) \emph{context-free setting} (where each
$s_t = s$ and $b_t = b$ for some fixed but unknown values $s,b \in
[0,1]$ with $s \leq b$), in the one-bit feedback. While these algorithms 
don't require all the geometric machinery necessary in higher dimensions,
they illustrate some of the difficulties that arise in bilateral trade
that didn't exist one-sided pricing, and allow us to explain some building blocks used in subsequent algorithms. We first give an efficiency-maximizing
algorithm, and then extend it to a profit-maximizing algorithm.

\subsection{Efficiency maximization}
\label{sec:cont-free-effic-maxim}

For gain-from-trade maximization, we describe how to locate a price $p$ such that $s \leq p \leq b$, while suffering at most $O(1)$ regret. The main idea here is clean: the algorithm posts a fixed ``dyadic'' schedule of prices $\nicefrac 12, \nicefrac 14, \nicefrac 34, \nicefrac 18, \nicefrac 38, \nicefrac 58, \nicefrac 78, \hdots$, until a trade is observed. Formally, if
$t \in [2^i, 2^{i+1})$ and the learner has not observed a trade so
far, it posts $p_t = \nicefrac{(1+2(t-2^{i}))}{2^{i+1}}$. Once a
``good'' $p$ is found, the algorithm keeps posting it indefinitely,
and suffers no additional regret.

The main observation is that each price rejection shows that $(s,b)$
does not lie in some rectangle (indicated by the shaded regions in \Cref{fig:onedim-gft})---e.g., if price $\nicefrac 12$ is rejected, then
$(s,b) \notin [0,\nicefrac 12]\times[\nicefrac 12,1]$. Crucially, rejections also provide information of the length of $[s,b]$. If $\nicefrac 12$ is
rejected, then $b-s \leq \nicefrac12$; if both
$\nicefrac 14, \nicefrac 34$ are also rejected, we know that $b-s \leq \nicefrac 14$, etc. In general, if
the first $2^k-1$ prices are rejected, we conclude that
$b-s \leq \nicefrac 1{2^k}$, so that the total regret up to that point
is $(2^k -1) \cdot (b-s) \in O(1)$. 

\paragraph{Random Prices.} There is an intriguing alternate way of
simulating the dyadic search algorithm via randomization: post random
prices drawn uniformly and independently in $[0,1]$ until a trade is
observed, then play that price for the rest of the time horizon. The
number of price rejections is a geometric random variable with
parameter ${(b-s)}$, so that the expected number of rejections is
$\nicefrac{1}{(b-s)}$, for an overall expected regret of $1$. In our
contextual algorithms, we use both a ``dyadic-like'' structured
approach, as well as the randomized one.

\begin{figure}[t!]
\centering

\begin{tikzpicture}[scale=0.7]

\draw (0,0) -- (4,4);
\draw[fill=red!20!white] (0,2) rectangle (2,4);
  \node [shape=circle, fill=black,inner sep=1.5pt,label=right:$\nicefrac 12$] (2,2) at (2,2) {};
\draw[line width=1.5pt] (0,0) rectangle (4,4);

\begin{scope}[xshift=5cm]
\draw (0,0) -- (4,4);
\draw[fill=red!20!white] (0,2) rectangle (2,4);
\draw[fill=red!20!white] (0,1) rectangle (1,2);
\draw[fill=red!20!white] (2,3) rectangle (3,4);
  \node [shape=circle, fill=black,inner sep=1.5pt,label=right:$\nicefrac 14$] (1,1) at (1,1) {};
  \node [shape=circle, fill=black,inner sep=1.5pt,label=right:$\nicefrac 34$] (3,3) at (3,3) {};
\draw[line width=1.5pt] (0,0) rectangle (4,4);
\end{scope}

\begin{scope}[xshift=10cm]
\draw (0,0) -- (4,4);
\draw[fill=red!20!white] (0,2) rectangle (2,4);
\draw[fill=red!20!white] (0,1) rectangle (1,2);
\draw[fill=red!20!white] (2,3) rectangle (3,4);
\draw[fill=red!20!white] (.5,.5) rectangle (0,1);
\draw[fill=red!20!white] (1.5,1.5) rectangle (1,2);
\draw[fill=red!20!white] (2.5,2.5) rectangle (2,3);
\draw[fill=red!20!white] (3.5,3.5) rectangle (3,4);
  \node [shape=circle, fill=black,inner sep=1.5pt,label=right:$\nicefrac 18$] (.5,.5) at (.5,.5) {};
  \node [shape=circle, fill=black,inner sep=1.5pt,label=right:$\nicefrac 38$] (1.5,1.5) at (1.5,1.5) {};
    \node [shape=circle, fill=black,inner sep=1.5pt,label=right:$\nicefrac 58$] (2.5,2.5) at (2.5,2.5) {};
  \node [shape=circle, fill=black,inner sep=1.5pt,label=below:$\nicefrac 78$] (3.5,.35) at (3.5,3.5) {};
\draw[line width=1.5pt] (0,0) rectangle (4,4);

\end{scope}
\end{tikzpicture}
\caption{Visualization of the context-free dyadic search. $x$-axis corresponds to seller, $y$-axis to buyer.}
\label{fig:onedim-gft}
\end{figure}
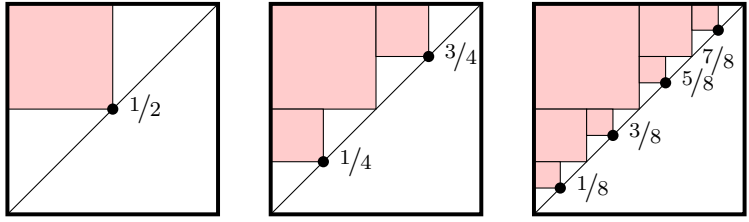
    
\subsection{Profit maximization}
\label{sec:cont-free-profit-maximization}

   For profit-maximization, we give an algorithm that suffers $O(\log \log T)$ regret. In this case, staying with the first accepted price is not enough---we also want the prices $(p_t,q_t)$ to quickly approach $(s,b)$.
    The algorithm runs in two steps. First, it performs the dyadic
    search used for efficiency maximization, suffering at most $O(1)$
    regret. Once it observes a trade for some price $p$, then we have
    identified a ``nice'' square $Q = [p-\eps, p] \times [p, p+\eps]$
    with $(s,b) \in Q$---specifically, if $p_t =
    \nicefrac{(1+2(t-2^{i}))}{2^{i+1}}$ is accepted then $(s,b) \in Q
    = [p-2^{-i-1}, p] \times [p, p+2^{-i-1}]$.

    At this point, the algorithm moves to its second step, and recursively zooms to explore the square $Q$ and locate $(s,b)$. This zooming is recursively performed via a two-dimensional version of the \emph{quadratic search} algorithm, inspired by \citet{KleinbergL03}. 
    Given a generic square $Q^i$ of side $\eps_i$, whose bottom-right corner is $(s^i,b^i)$, the algorithm casts a uniform $\eps_{i+1}=\eps_i^2$ grid inside it and starts offering prices $s^i$, $s^i-\eps_{i+1}, s^i-2\eps_{i+1}, \dots$ to the seller, while keeping $b^i$ to the buyer. As soon as the seller refuses, we can identify a vertical strip of width $\eps_{i+1}$ that contains $(s,b)$ (see \Cref{fig:onedim-profit-a}). We can perform the same operation for the buyer, while keeping fixed the seller price to $s^i$, identifying now a horizontal strip of height $\eps_{i+1}$. Now $(s,b)$ belongs to the intersection of these two stripes, i.e., a square of side $\eps_{i+1}$, into which we recuse (see \Cref{fig:onedim-profit-b}). The recursion can stop as soon as the side of the square $Q^i$ drops below $\nicefrac 1T$ (as this is enough to obtain the target profit, with a constant overall error), i.e., after $O(\log \log T)$ phases.

    For the regret analysis, observe that the algorithm issues $O(\nicefrac{1}{\eps_i})$ prices in each phase. When the trade happens, the regret suffered is $O(\eps_i)$ since the side of $Q^i$ (and thus the error in estimating the target profit) is of that order. In the two steps where the trade does not happen, the regret is constant. Overall, the regret in the generic phase $i$ is constant. Since there are at most $O(\log \log T)$ such phases, this implies that the overall regret is $O(\log \log T)$.

    \begin{figure}[t!]
\centering
\scalebox{.8}{
\begin{subfigure}[b]{0.4\textwidth}
    \centering
    \begin{tikzpicture}[scale=1.2]
    \draw[line width=1.5pt] (0,0) rectangle (4,4);

    \foreach \i in {0,...,4} {
        \foreach \j in {0,...,4} {
            \node [shape=circle, fill=black,inner sep=1.5pt] (\i,\j) at (\i,\j) {};
        }
    }

    \node at (2, -0.4) {$Q^{i}$}; 

    \draw[|<->|] (0,-.4) -- node[below=2pt] {$\varepsilon_{i+1}$} (1,-.4); 
    \draw[|<->|] (-.4,0) -- node[left=2pt] {$\varepsilon_{i}$} (-.4,4); 

    \draw[<-, line width=1.2pt, green!70!black] (0.2, .5) -- (3.8, .5);

    \draw[line width=1.5pt, green!70!black, dashed] (2, 0) -- (2, 4); 

    \node at (4, -0.4) {$(s^{i},b^{i})$};

    \end{tikzpicture}
    \caption{Seller's price discovery in phase $i$}
    \label{fig:onedim-profit-a}
\end{subfigure}
\hfill
\begin{subfigure}[b]{0.4\textwidth}
    \centering
    \begin{tikzpicture}[scale=1.2]
    \draw[line width=1.5pt] (0,0) rectangle (4,4);

    \draw[rounded corners, fill=purple!20!white, line width=0pt] (2, 1) rectangle (3, 2);

    \foreach \i in {0,...,4} {
        \foreach \j in {0,...,4} {
            \node [shape=circle, fill=black,inner sep=1.5pt] (\i,\j) at (\i,\j) {};
        }
    }

    \node at (2, -0.4) {$Q^{i}$}; 

    \draw[|<->|] (0,-.4) -- node[below=2pt] {$\varepsilon_{i+1}$} (1,-.4); 
    \draw[|<->|] (-.4,0) -- node[left=2pt] {$\varepsilon_{i}$} (-.4,4); 

    \draw[<-, line width=1.2pt, blue] (3.5, 3.8) -- (3.5, 0.2);

    \draw[line width=1.5pt, green!70!black, dashed] (2, 0) -- (2, 4); 

    \draw[line width=1.5pt, blue, dashed] (0, 2) -- (4, 2); 

    \node[below right, xshift=0.2cm, yshift=0.2cm, color=black] at (1.85,1.5) {$Q^{i+1}$}; 

    \node at (4, -0.4) {$(s^{i},b^{i})$};

    \end{tikzpicture}
    \caption{Buyer's price discovery in phase $i$}
    \label{fig:onedim-profit-b}
\end{subfigure}





}
\caption{Illustration of the quadratic search step: In phase $i$, we operate within a square $Q^i$ of side $\varepsilon_i$. (a) We fix the buyer's price to $b^i$, then test decreasing seller prices starting from $s^i$. The vertical dashed line indicates the first refused seller price. (b) We then fix the seller's price to $s^i$, then test increasing buyer prices starting from $b^i$. The horizontal dashed line indicates the first refused buyer's price. These two tests identify a new square $Q^{i+1}$ of side $\eps_{i+1}$, where the algorithm is called recursively.}

\label{fig:onedim-profit}
\end{figure}
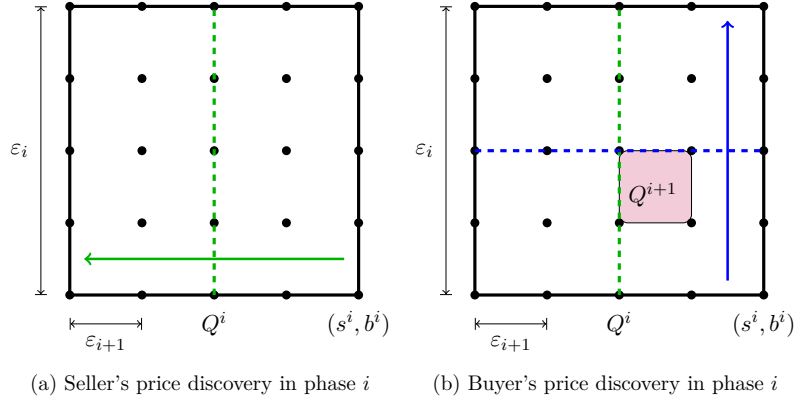
\section{The Contextual Case: Efficiency Maximization}
\label{sec:efficiency}

We move on to the contextual case, and present the three
algorithms for efficiency maximization; recall that the efficiency is measured in terms of the cumulative gain from trade. In \Cref{subsec:two-bit-gft}, we present the algorithm for two-bit feedback and prove its optimality, then in \Cref{subsec:one-to-two-bit-gft} we illustrate how to adapt it to one-bit feedback with a minimal budget-balance violation. Finally, \Cref{subsec:one-bit-gft} is devoted to presenting the randomized one-bit algorithm that enforces per-round budget balance.

    \subsection{Efficiency Maximization with Two-Bit Feedback}
    \label{subsec:two-bit-gft}

        The algorithm\footnote{We refer to \Cref{alg:2bit-gft} in \Cref{app:pseudocodes-efficiency} for the pseudocode.} for efficiency maximization with two-bit feedback maintains the two confidence regions $S_t$ and $B_t$ and uses an auxiliary family of potentials $\vol(S_t + z_i\B)$ and $\vol(B_t + z_i\B)$, for $z_i = \nicefrac{2^{-i}}{(8d)}$ and integers $i$. At each time step, the algorithm decides what to do according to the widths ($\width(S_t,x_t)$ and $\width(B_t,x_t)$) of the confidence regions along $x_t$. It considers three cases: 
        \begin{itemize}
            \item[(i)] if the intervals $\its$ and $\itb$ are disjoint, then it posts the seller \emph{safe price} $p_t = \us$ to both agents (see \Cref{fig:visualization-gft-a})
            \item[(ii)] if $\width(S_t,x_t) \ge \width(B_t,x_t)$, then it posts to both agents the price $p_t$ which solves
        \begin{equation}
        \label{eq:balanced-price-seller}
            \vol(\{v \in S_t + z_{i_t}\B \mid \ip{v,x_t} \le p_t\}) = \tfrac{1}{2} \vol(S_t + z_{i_t} \B),
        \end{equation}
        where $i_t$ is the largest index such that $\width(x_t,S_t) \le 2^{-i}$ (see \Cref{fig:visualization-gft-b}); 
            \item[(iii)] otherwise, it posts to both agents the price $p_t$ which solves
        \begin{equation}
        \label{eq:balanced-price-buyer}
            \vol(\{v \in B_t + z_{i_t}\B \mid \ip{v,x_t} \ge p_t\}) = \tfrac{1}{2} \vol(B_t + z_{i_t} \B),
        \end{equation}
        where $i_t$ is the largest index such that $\width(x_t,B_t) \le 2^{-i}$.
        \end{itemize} The prices defined in \Cref{eq:balanced-price-seller,eq:balanced-price-buyer} are called balanced, as they reduce the corresponding potential by a constant factor, \emph{regardless of the feedback}. If at time $t$ the algorithm posts a balanced price, $i_t$ defined above is called the \emph{index} at time step $t$. 
        Before proving the regret guarantees of our algorithm, we state a lemma formalizing the decrease in potential due to balanced prices. It is an adaptation of Lemma 2.1 in \citet{LiuLS21}, and is re-proved for completeness in \Cref{app:efficiency}.
\def\sellerbody{plot[smooth cycle] coordinates {(-1, 0.4) (0, 1.2) (1, 0.8) (0.8, -0.2) (-0.5, -0.3)}}
\def\buyerbody{plot[smooth cycle] coordinates {(-0.6, 0.3) (0, 0.8) (0.6, 0.5) (0.4, -0.1) (-0.3, -0.15)}}

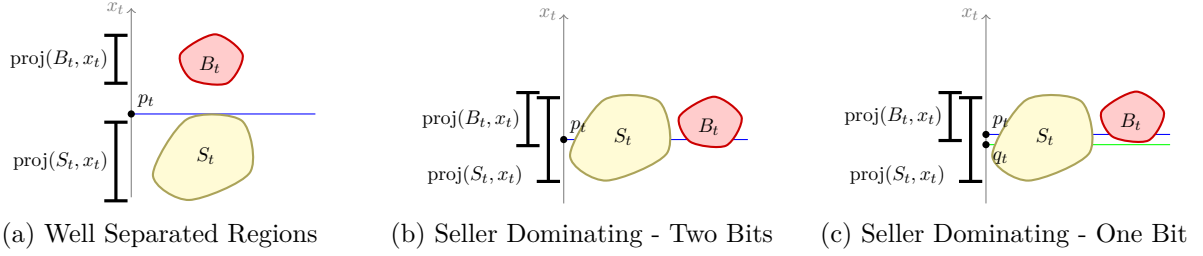
\begin{figure}[t!]
    \centering
    
    
    
    
    \begin{subfigure}[b]{0.32\textwidth}
    \centering
        \begin{tikzpicture}[scale=0.68, every node/.style={scale=0.68}]
            \draw[->, gray] (0,-2.5) -- (0,1.2) node[left] {$x_t$};
            \draw[-, blue] (0,-0.87) -- (3.6,-0.87);
            
            \begin{scope}[shift={(1.5, -2.1)}, rotate=20] 
                \draw[draw=yellow!60!black, fill=yellow!20, thick] \sellerbody;
                \node (s) at (0.1, 0.4) {$S_t$};
            \end{scope}
    
            \begin{scope}[shift={(1.5, -0.1)}, rotate=-10] 
                \draw[draw=red!80!black, fill=red!20, thick] \buyerbody;
                \node (b) at (0, 0.2) {$B_t$};
            \end{scope}
    
            \draw[line width=1.2pt, |-|] (-0.3,-1) -- (-0.3,-2.6) node[midway, left=0pt, color=black] {$\its$}; 
    
            \draw[line width=1.2pt, |-|] (-0.3,-0.3) -- (-0.3,0.7) node[midway, left=0pt, color=black] {$\itb$}; 
    
            \fill (0,-0.87) circle (2pt) node[above right] {$p_t$};
        \end{tikzpicture}
        \caption{Well Separated Regions
        }
        \label{fig:visualization-gft-a}
    \end{subfigure}
    \hfill
    \begin{subfigure}[b]{0.32\textwidth}
    \centering
        \begin{tikzpicture}[scale=0.68, every node/.style={scale=0.68}]
            \draw[->, gray] (0,-2.5) -- (0,1.2) node[left] {$x_t$};
            \draw[-, blue] (0,-1.25) -- (3.6,-1.25);
            
            \begin{scope}[shift={(1.2, -1.6)}, rotate=20] 
                \draw[draw=yellow!60!black, fill=yellow!20, thick] \sellerbody;
                \node (s_label_in_body) at (0.1, 0.4) {$S_t$};
            \end{scope}
    
            \begin{scope}[shift={(2.8, -1.2)}, rotate=-10] 
                \draw[draw=red!80!black, fill=red!20, thick] \buyerbody;
                \node (b_label_in_body) at (0, 0.2) {$B_t$};
            \end{scope}
    
            \draw[line width=1.2pt, |-|] (-0.3,-2.1) -- (-0.3,-0.4) node[midway, below left=10pt, color=black] {$\its$}; 
    
            \draw[line width=1.2pt, |-|] (-0.7,-1.4) -- (-0.7,-0.3) node[midway, left=0pt, color=black] {$\itb$}; 
    
            \fill (0,-1.25) circle (2pt) node[above right] {$p_t$};
        \end{tikzpicture}
        \caption{Seller Dominating - Two Bits}
        \label{fig:visualization-gft-b}
    \end{subfigure}
    \hfill
    \begin{subfigure}[b]{0.32\textwidth}
    \centering
        \begin{tikzpicture}[scale=0.68, every node/.style={scale=0.68}]
            \draw[->, gray] (0,-2.5) -- (0,1.2) node[left] {$x_t$};
            \draw[-, blue] (0,-1.15) -- (3.6,-1.15);
            \draw[-, green] (0,-1.35) -- (3.6,-1.35);
            
            \begin{scope}[shift={(1.2, -1.6)}, rotate=20] 
                \draw[draw=yellow!60!black, fill=yellow!20, thick] \sellerbody;
                \node (s_label_in_body) at (0.1, 0.4) {$S_t$};
            \end{scope}
    
            \begin{scope}[shift={(2.8, -1.1)}, rotate=-10] 
                \draw[draw=red!80!black, fill=red!20, thick] \buyerbody;
                \node (b_label_in_body) at (0, 0.2) {$B_t$};
            \end{scope}
    
            \draw[line width=1.2pt, |-|] (-0.3,-2.1) -- (-0.3,-0.4) node[midway, below left=10pt, color=black] {$\its$}; 
    
            \draw[line width=1.2pt, |-|] (-0.7,-1.3) -- (-0.7,-0.3) node[midway, left=0pt, color=black] {$\itb$}; 
    
            \fill (0,-1.15) circle (2pt) node[above right] {$p_t$};
            \fill (0,-1.35) circle (2pt) node[below right] {$q_t$};
        \end{tikzpicture}
        \caption{Seller Dominating - One Bit}
        \label{fig:visualization-gft-c}
    \end{subfigure}

    \caption{Visualization of the efficiency maximization algorithms.}
    \label{fig:visualization-gft} 
\end{figure}
        \begin{restatable}[Balanced Prices]{lemma}{lembalanced}
        \label{lem:balanced}
            If the algorithm posts a balanced price for the seller, then 
            \[
                \vol(S_{t+1} + z_{i_t}\B) \le \tfrac{3}{4} \vol(S_{t} + z_{i_t}\B).
            \]
            If instead it posts a balanced price for the buyer, then  
            \[
                \vol(B_{t+1} + z_{i_t}\B) \le \tfrac{3}{4} \vol(B_{t} + z_{i_t}\B).
            \]
        \end{restatable}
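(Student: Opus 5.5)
The plan is to compare $S_{t+1}+z\B$ with the half of $S_t+z\B$ cut off by the balanced hyperplane, and to bound the extra volume by a thin slab. Treat the seller case; the buyer case is symmetric with the inequalities flipped. Write $z=z_{i_t}$, $K=S_t+z\B$, and assume w.l.o.g.\ that the seller accepts, so $S_{t+1}=\{v\in S_t \mid \ip{v,x_t}\le p_t\}$. Since $\|x_t\|=1$, every point of $S_{t+1}+z\B$ lies in $K$ and satisfies $\ip{v,x_t}\le p_t+z$. Hence
\[
\vol(S_{t+1}+z\B)\le \vol(\{v\in K\mid \ip{v,x_t}\le p_t\})+\vol(\{v\in K\mid p_t<\ip{v,x_t}\le p_t+z\}).
\]
By the balancing condition \eqref{eq:balanced-price-seller}, the first term equals $\tfrac12\vol(K)$. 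The rejection case is identical, using the complementary half and the slab $[p_t-z,p_t)$. So it remains to show that any slab of thickness $z$ orthogonal to $x_t$ meets $K$ in volume at most $\tfrac14\vol(K)$.

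For the slab bound, let $f(h)$ be the $(d-1)$-dimensional volume of the section $K\cap\{\ip{v,x_t}=h\}$, and let $W=\width(K,x_t)$. By Brunn--Minkowski, $f^{1/(d-1)}$ is concave on an interval of length $W$. The standard consequence is $\max_h f(h)\cdot W\le d\,\vol(K)$. To see this, let $h^*$ be the maximizing level. Concavity of $f^{1/(d-1)}$ gives, on each side of $h^*$, a lower bound by the section function of a cone with base $f(h^*)$ and height equal to the distance from $h^*$ to that end of the support. The two cones have total volume $f(h^*)W/d$. The slab volume is therefore at most $z\max f\le z d\,\vol(K)/W$.

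Next, relate $W$ to $z$. By the choice of $i_t$ as the largest index with $\width(S_t,x_t)\le 2^{-i}$, we have $\width(S_t,x_t)>2^{-i_t-1}$, so $W\ge \width(S_t,x_t)>2^{-i_t-1}$. Combined with $zd=2^{-i_t}/8$, this gives slab volume $\le \tfrac{2^{-i_t}/8}{2^{-i_t-1}}\vol(K)=\tfrac14\vol(K)$. Thus $\vol(S_{t+1}+z\B)\le(\tfrac12+\tfrac14)\vol(K)$, as claimed.

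The main obstacle is justifying the cone inequality $\max f\cdot W\le d\vol(K)$ cleanly from the $1/(d-1)$-concavity. This is the step I would write out in detail. I would also remark on the degenerate case $\width(S_t,x_t)=0$, where $i_t$ is undefined or infinite; there the seller side is already pinned along $x_t$, and either the algorithm would not post a balanced seller price or the case is handled by convention. Finally, existence of $p_t$ solving \eqref{eq:balanced-price-seller} follows from continuity of the half-volume in $p$.
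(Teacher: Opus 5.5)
Your proof is correct and is essentially the paper's argument: you cover $S_{t+1}+z_{i_t}\B$ by the balanced half of $S_t+z_{i_t}\B$ plus a slab of thickness $z_{i_t}$, and bound the slab by $z_{i_t}$ times the largest cross-section. You then control that cross-section by two cones reaching the extreme points along $x_t$, using $\width(S_t,x_t)>2^{-i_t-1}$ and $d\,z_{i_t}=2^{-i_t}/8$ to obtain the factor $\tfrac14$. The only difference is cosmetic: you obtain the cone bound from Brunn's concavity principle, whereas the paper simply observes that the convex hulls of the maximal section with each extreme point lie inside $S_t+z_{i_t}\B$ by convexity, which is more elementary and avoids Brunn--Minkowski.
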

        
    \begin{theorem}
    \label{thm:2-bit-gft}
        Consider the contextual bilateral trade problem in the two-bit feedback model. There exists an efficiency-maximizing algorithm that achieves regret $O(d \log d)$ and enforces per-round budget balance. 
    \end{theorem}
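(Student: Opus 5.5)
We analyze the algorithm described above, which posts a single price $p_t=q_t$ in every round, so per-round budget balance holds trivially. We bound the regret round by round according to the three cases.

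\emph{Case (i).} Suppose $\its$ and $\itb$ are disjoint. We first show that $\us<\lb$ whenever trade is beneficial. If instead $\ub<\ls$, then $b_t\le\ub<\ls\le s_t$, so $(b_t-s_t)_+=0$ and no price loses anything. If $\us<\lb$, the price $p_t=\us$ satisfies $s_t\le \us<\lb\le b_t$. Both agents accept, trade occurs, and the round has zero regret. The same observation holds whenever $p_t\in[s_t,b_t]$, so in every case regret only arises in rounds where the trade fails.

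\emph{Cases (ii)/(iii).} Take case (ii); case (iii) is symmetric. The two intervals intersect and $\width(S_t,x_t)\ge\width(B_t,x_t)$. Since $s\in S_t$ and $b\in B_t$, both $s_t$ and $b_t$ lie in the union of the two intervals. That union is an interval of length at most $2\width(S_t,x_t)\le 2\cdot 2^{-i_t}$. Hence the regret of round $t$ is at most $(b_t-s_t)_+\le 2^{1-i_t}$.

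In these cases both agents' confidence regions are updated with their exact bits. By \Cref{lem:balanced}, the seller potential at scale $z_{i_t}$ then shrinks by a factor $3/4$. It is also necessary that the potentials at every other scale never increase, which holds because $S_{t+1}\subseteq S_t$. Therefore, for each index $i$, the number $N_i$ of rounds with index $i$ charged to the seller satisfies
\[
N_i\le \log_{4/3}\frac{\vol(\B+z_i\B)}{\min \vol(S_t+z_i\B)}.
\]

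\emph{Lower-bounding the potential.} This needs care, because \Cref{lem:balanced} only gives geometric decay and we need a floor. At a round with index $i$ we have $\width(S_t,x_t)>2^{-i-1}$ by maximality of $i_t$. So $S_t$ contains a segment of length greater than $2^{-i-1}$. Consequently $S_t+z_i\B$ contains a cylinder of length $2^{-i-1}$ over a $(d-1)$-ball of radius $z_i$, whose volume is at least $2^{-i-1}\kappa_{d-1}z_i^{d-1}$. Comparing this with $\vol((1+z_i)\B)$ gives
\[
N_i\lesssim d\log(d 2^{i}) = O(d(i+\log d)).
\]
The same holds for the buyer. Summing $N_i\cdot 2^{1-i}$ over $i\ge0$ then gives
\[
\sum_i O\bigl(d(i+\log d)\bigr)2^{-i}=O(d\log d).
\]

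\emph{Main obstacle.} This naive count is in fact enough: the geometric weights $2^{-i}$ absorb the linear-in-$i$ growth of $N_i$. The one step that needs the most care is the cylinder volume bound, in particular the ratio $\kappa_d/\kappa_{d-1}=O(\sqrt d)$ together with the factor $(1+z_i)^d/z_i^{d-1}$. Because $z_i=2^{-i}/(8d)$, we have $(1+z_i)^d=O(1)$, and the logarithm of the ratio is $O(d\log d+di)$. If a sharper constant were needed, one could instead use the Steiner-polynomial argument of \citet{LiuLS21}. Combining the zero regret of case (i) with the $O(d\log d)$ bound for cases (ii) and (iii) yields \Cref{thm:2-bit-gft}.
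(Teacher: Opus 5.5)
Your proof is correct and follows the paper's argument: the same case split, the same per-round regret bound $2\cdot 2^{-i_t}$ when the intervals overlap, the same per-index count of $O(d(i+\log d))$ balanced rounds via \Cref{lem:balanced} and monotonicity of the potentials, and the same geometric summation. The only difference is the floor on the potential. The paper uses $\vol(S_t+z_i\B)\ge\vol(z_i\B)=z_i^d\vol(\B)$, which holds because $S_t\neq\emptyset$ and already gives $O(d\log(1/z_i))$ rounds per index, so your cylinder bound is valid but unnecessary. Also note that widths can be as large as $2$, so $i_t$ can equal $-1$ and the sum should start at $i=-1$; this costs only a constant factor.
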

    \begin{proof}
        The algorithm always posts the same price to both agents, so it naturally respects per-round budget balance. Moreover, it never incurs a \emph{negative} gain from trade as a trade may occur only if $s_t \le p_t \le b_t$ (and thus $b_t \ge s_t$). We can then discard from the analysis all time steps in which $s_t > b_t$, as they do not induce positive gain from trade for the algorithm or the benchmark.

        Consider now the ``well-separated'' time steps, where $\its \cap \itb = \emptyset$, they always result in a trade as both agents accept and the algorithm suffers zero regret:
        \[
            s_t = \ip{x_t,s} \le \us = p_t < \lb \le \ip{b,x_t} = b_t.
        \]
        We are left with analyzing the regret incurred by the algorithm in the seller/buyer dominating cases. The crucial observation is that every time we post a balanced price corresponding to some index $i_t$, either $\vol(S_t + z_{i_t}\B)$ or $\vol(B_t + z_{i_t}\B)$ decreases by a constant factor (\Cref{lem:balanced}). Both confidence regions are never empty, so that all potentials are always at least $\vol(z_{i}\B) = z_{i}^d \vol(\B)$, while they are at most $\vol((z_{i}+1)\B) \le \vol(3\B) = 3^d \vol(\B)$ by definition. Denoting by $n^S_i$ the number of times that the index $i$ is selected in a seller-dominating balanced case, we have the following inequality:
        \[
            \left(\tfrac 34\right)^{n^S_i} 3^d \vol(\B) \ge z_{i}^d \vol(\B) \quad \implies \quad n_i^S \le \frac{d \log \nicefrac{3}{z_i}}{\log \nicefrac{4}{3}} \le 3 d \log \nicefrac{3}{z_i}.
        \]
        The same bound holds for the number of times that index $i$ is selected in a buyer-dominating balanced case. All in all, the algorithm does not pick any index $i$ more than $ 6 d \log(\nicefrac3{z_{i}})$ times.

        Consider now the regret suffered in a generic step when index $i$ is chosen, either because of the buyer or of the seller. Since the two intervals have a non-empty intersection, and the longest one has length at most $2^{-i}$, we can conclude that the gain from trade of the benchmark is at most $2 \cdot 2^{-i}$, which provides an upper bound on the regret suffered. Combining these ingredients, we can conclude that the overall regret is at most 
        \begin{equation}
            \label{eq:bound_balanced}
            \sum_{i=-1}^{\infty} 12 \cdot 2^{-i} d \log(\nicefrac3{z_i}) \le 12d \sum_{i=-1}^{\infty}\frac{1}{2^i}(\log(24d) + i \log(2)) \in O(d \log d),
        \end{equation}
        where the first inequality follows from the definition of $z_i$ as $\nicefrac{2^{-i}}{(8d)}$.
        \end{proof}
    We complement this positive result with a lower bound, showing that the above result is essentially tight, up to a $\log d$ factor. The lower bound construction is based on the simple observation that any learning algorithm cannot avoid suffering constant regret along one of the $d$ directions, and its proof is deferred to \Cref{app:efficiency}.
 
    \begin{restatable}{proposition}{lowergft}
    \label{thm:2-bit-gft-lower}
         Consider the contextual bilateral trade problem in the two-bit feedback model.  Any efficiency-maximizing algorithm that enforces per-round budget balance suffers regret $\Omega(d)$.
    \end{restatable}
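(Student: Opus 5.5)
We will prove the $\Omega(d)$ lower bound by a Yao-style argument: we fix a randomized instance and show that any deterministic budget-balanced algorithm incurs expected regret $\Omega(d)$ on it. The contexts are the standard basis vectors $e_1,\dots,e_d$, presented in blocks: the adversary first shows $e_1$ repeatedly, then $e_2$, and so on. The unknown vectors are built coordinatewise. For each coordinate $j$, independently, we set $s_j = \sigma_j$ and $b_j = \sigma_j + \delta$, where $\delta$ is a small constant and $\sigma_j$ is drawn uniformly from a set of candidate values. To keep $s,b\in\B$ we scale everything by $1/\sqrt d$, or alternatively use the $\ell_\infty/\ell_1$ normalization allowed by the Observation in \Cref{sec:model}. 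With the latter normalization, $b_t - s_t = \delta$ is a constant, and the argument is cleanest.

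The key step is a one-dimensional claim: in the context-free problem with $s\le b$, $b-s=\delta$, and $s$ uniform on a grid of $\Theta(1/\delta)$ points spaced by more than $\delta$, any deterministic per-round budget-balanced algorithm misses the first trade with constant probability. To see this, note that before any trade occurs, the algorithm's first pair $(p_1,q_1)$ with $p_1\le q_1$ is a fixed pair. A trade happens only if $s\le p_1$ and $q_1\le s+\delta$, which forces $s\in[q_1-\delta,p_1]$. This interval has length at most $\delta$, since $p_1 \le q_1$, so it contains at most one grid point. Hence, with probability at least $1-O(\delta)$, or at least $1/2$ once the grid has two or more points, the first round yields no trade. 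That round then costs regret $\delta=\Omega(1)$. Budget balance is exactly what makes this work: without it, posting $p_1=1$ and $q_1=0$ would always trade.

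To assemble the bound, note that coordinates are independent and each block of contexts $e_j$ reveals information only about coordinate $j$. The first round of block $j$ is therefore a fresh instance of the one-dimensional claim, conditioned on everything observed so far. By linearity of expectation over the $d$ blocks, the expected regret is at least $d\cdot\tfrac12\delta = \Omega(d)$. Yao's principle then gives the bound for randomized algorithms. The main point requiring care is the normalization. Under the $\ell_2$ ball, the valuation gaps shrink as $1/\sqrt d$, which would only yield $\Omega(\sqrt d)$. To preserve constant gaps, we would invoke the Observation and work with $\|x_t\|_\infty\le1$, $\|s\|_1,\|b\|_1\le 1$ where appropriate, or else use contexts $e_j$ with $s=\sigma_j e_j$-type constructions realized in separate rounds. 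In that case $s$ and $b$ differ across blocks, so each coordinate may take constant values while we check that the overall norm constraints remain compatible. If this cannot be arranged, we fall back to stating the bound under the $\ell_\infty/\ell_1$ normalization, as the paper explicitly permits.
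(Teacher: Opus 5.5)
Your core argument is the paper's argument: Yao's principle, basis-vector contexts, coordinates drawn independently, and the observation that under $p_t\le q_t$ a trade at the first appearance of $e_j$ requires $s_j\le p_t\le q_t\le b_j$. Since the price pair depends only on the history, this happens with probability at most $1/N$. Taking $\delta=\nicefrac13$ and the grid $\{0,\nicefrac23\}$ gives exactly the paper's instance $(s_j,b_j)\in\{(0,\nicefrac13),(\nicefrac23,1)\}$. Your interval $[q_1-\delta,p_1]$ handles $p_t<q_t$ cleanly. You should also say explicitly that per-round budget balance makes every round's regret nonnegative, which is what justifies discarding all but the first round of each block.

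The step that fails is the normalization you rely on to make the instance admissible. The pairing you invoke, $\|x_t\|_\infty\le 1$ with $\|s\|_1,\|b\|_1\le 1$, is the wrong way round for this construction. With contexts $e_j$ it forces $\sum_j |b_j-s_j|\le\|b\|_1+\|s\|_1\le 2$, so at most $O(1/\delta)$ coordinates can carry a gap $\delta$, and your bound collapses to $O(1)$.

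Your other fallback, letting $s$ and $b$ differ across blocks, is not allowed, because $s$ and $b$ are fixed for the whole instance.

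The normalization that does accommodate constant values on all $d$ coordinates is the dual one: $\|x_t\|_1\le 1$ and $\|s\|_\infty,\|b\|_\infty\le 1$. Basis vectors are admissible contexts, and H\"older's inequality keeps $\gft_t$ and $\profit_t$ bounded, so this falls under the paper's Observation. The paper's proof tacitly uses the same normalization. Its $b$ has every coordinate at least $\nicefrac13$, so $\|b\|_2\ge\sqrt d/3$ and that instance is not in $\B$ either. Under the literal $\ell_2$ model, rescaling this construction yields only $\Omega(\sqrt d)$, as you correctly note.
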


    \subsection{Efficiency Maximization with One-Bit Feedback}
    \label{subsec:one-to-two-bit-gft}

        With a simple tweak, we can apply the previous algorithm (for efficiency maximization with two-bit feedback) to the one-bit feedback setting: whenever there may be ambiguity in the feedback observed by the learner, we post a \emph{safe} price to one of the two agents, to disambiguate the feedback.

        In the well-separated case, there is no ambiguity, as both agents always accept the posted price (or $\ls \ge \ub$ and the trade clearly cannot happen, in which case we do not update anything). For the other two cases, we post a safe price to the dominated agent: if the width of $S_t$ is larger than that of $B_t$, then we post the seller-balanced price to the seller, and a safe price to the buyer (i.e.,  the smallest value $\lb$ in $\itb$ see \Cref{fig:visualization-gft-c}), similarly, if the buyer's width dominates the seller's, then the algorithm post the buyer-balanced price to the buyer, and a safe price to the seller (i.e., the largest value $\us$ in $\its$). This simple modification is enough to obtain the following result.

    \begin{theorem}
    \label{thm:gft-safe}
        Consider the contextual bilateral trade problem in the one-bit feedback model. There exists an efficiency-maximizing algorithm that achieves regret $O(d \log d)$ and whose negative profit is at most $O(d \log d)$. 
    \end{theorem}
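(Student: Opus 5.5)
The plan is to reduce to the analysis of \Cref{thm:2-bit-gft}. First I check that the modified algorithm always receives unambiguous feedback whenever it updates a confidence region, so $S_t$ and $B_t$ evolve exactly as in the two-bit algorithm. In the seller-dominating case the buyer is posted $q_t = \lb \le b_t$, which the buyer surely accepts because $b \in B_t$. The one-bit outcome is then exactly $\ind{s_t \le p_t}$, and the seller update uses the balanced price of \Cref{eq:balanced-price-seller}. The buyer-dominating case is symmetric, with safe price $\us \ge s_t$. In the well-separated case with $\us < \lb$, posting $p_t=q_t=\us$ gives a sure trade, exactly as before. If instead $\ls > \ub$, then $s_t > b_t$, and both the benchmark and the algorithm get zero gain from trade, so no update is needed.

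Second, I bound the regret. The potential argument is unchanged: each balanced step with index $i$ shrinks $\vol(S_t+z_i\B)$ or $\vol(B_t+z_i\B)$ by a factor $\nicefrac34$ by \Cref{lem:balanced}. Hence each index is used at most $6d\log(\nicefrac3{z_i})$ times. Gain from trade is never negative, since a trade requires $s_t \le p_t$ and $q_t \le b_t$, and in the non-separated cases I will check that it implies $b_t \ge s_t$. The regret at a step with index $i$ is at most the benchmark $(b_t-s_t)_+$. Because the intervals intersect and the longer one has length at most $2^{-i}$, this is at most $2\cdot 2^{-i}$. Note that the safe price $q_t=\lb$ is not necessarily $\ge p_t$, but that does not matter for regret, since the benchmark bound alone suffices. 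Summing exactly as in \Cref{eq:bound_balanced} gives $O(d\log d)$.

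Third, I bound the negative profit. Negative profit arises only at balanced steps where a trade occurs and $q_t < p_t$. In the seller-dominating case $p_t \in \its$ and $q_t = \lb \in \itb$, and the two intervals intersect. Therefore $p_t - q_t \le |\its| + |\itb| \le 2\cdot 2^{-i_t}$, using that the dominating width is at most $2^{-i_t}$. Separated steps have $p_t=q_t$, so they contribute no loss. Summing $2\cdot 2^{-i}$ over the at most $6d\log(\nicefrac3{z_i})$ uses of each index gives the same series as before, hence $O(d\log d)$.

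The main obstacle I expect is making sure the balanced price is meaningful and that the case analysis matches the two-bit algorithm precisely. In particular, the disjoint case with $\ls>\ub$ must not be counted among the balanced steps, and index $i_t$ must be well defined, which requires widths at most $2$ so that $i\ge -1$. Once the feedback is seen to be unambiguous, everything else is inherited from \Cref{thm:2-bit-gft}.
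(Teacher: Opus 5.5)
There is a genuine gap in your regret paragraph. You claim that gain from trade is never negative, and hence that the per-step regret is bounded by the benchmark $(b_t-s_t)_+$. This is false for this algorithm, and the check you defer (``a trade implies $b_t\ge s_t$'') cannot succeed.

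A trade only gives $s_t\le p_t$ and $q_t\le b_t$. The safe price $q_t=\lb$ (or $p_t=\us$ in the buyer-dominating case) may satisfy $q_t<p_t$, so nothing rules out $q_t\le b_t<s_t\le p_t$. Concretely, at $t=1$ we have $S_1=B_1=\B$, so both projections equal $[-1,1]$. If the tie is resolved in favor of the seller, the balanced price is $p_1=0$ by symmetry of $(1+z)\B$, and $q_1=\lb=-1$. Then any instance with $-1\le b_1<s_1\le 0$ trades with $\gft_1=b_1-s_1<0$, and the regret is $s_1-b_1>0=(b_1-s_1)_+$; the buyer-favoring tie is symmetric.

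So your remark that $q_t<p_t$ ``does not matter for regret'' is exactly backwards. This is the one point where the two-bit analysis breaks. There, a trade forced $s_t\le p_t\le b_t$, which is why steps with $s_t>b_t$ could be discarded; here they cannot. The paper's proof singles this out as the only difference from \Cref{thm:2-bit-gft}.

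The repair uses what you already prove in your third paragraph. On a trading step, $s_t-b_t\le p_t-q_t\le 2\cdot 2^{-i_t}$. Hence at every balanced step the regret is at most $2\cdot 2^{-i_t}$, whether or not $s_t\le b_t$. Summing over the at most $6d\log(\nicefrac{3}{z_i})$ uses of each index gives the same series as in \Cref{eq:bound_balanced}. With this fix your argument coincides with the paper's.

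One minor imprecision: the regions do not evolve exactly as in the two-bit algorithm, since the dominated agent now gets a safe price and its region is not cut. This is harmless, because the counting only uses that each balanced step shrinks the dominating agent's potential by a factor $\nicefrac34$.
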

    \begin{proof}
        First, note that the algorithm is well-defined, as it always posts a safe price to one of the two agents so that one-bit feedback is enough to update both confidence regions correctly. Moreover, the potential argument is not affected, as it depends on the balanced prices the algorithm still posts. This means that the bound proved in \Cref{thm:2-bit-gft} on the number of times that a given index $i$ is chosen holds, and this is at most $O(d \log(\nicefrac 1{z_i}))$. 
        We want that (i) the gain from trade of the algorithm does not deteriorate too much and (ii) the overall negative profit is at most $O(d \log d).$ 
        
        We start considering the gain from trade. The only difference with the analysis of the two-bit case lies in the fact that the algorithm now may incur \emph{negative} gain from trade when posting prices that are not budget balanced. However, such negative regret is at most of order $2^{-i}$ (for the index $i$ in that step) because both prices lie in the interval with the largest length. All in all, the extra-regret incurred by the one-bit algorithm is bounded as in \Cref{eq:bound_balanced}.
        Similarly, the algorithm may suffer negative profit only when it posts two non-budget-balanced prices. This happens only when $\its$ and $\itb$ have a non-empty intersection, with the two prices being at most $2^{-i}$ far apart, where $i$ is the index selected in the given time step. The $O(d \log d)$ bound on the negative profit follows by the same analysis as in \Cref{eq:bound_balanced}.
    \end{proof}

    \subsection{Efficiency Maximization with One-Bit Feedback and Per-Round Budget Balance}
    \label{subsec:one-bit-gft}

    We move to efficiency-maximization under one-bit feedback and per-round budget balance. Our algorithm posts the same price $p_t$ to both agents and updates confidence regions $S_t$ and $B_t$, only when there is no ambiguity in the (one-bit) feedback. At each time step $t$, the algorithm computes the intervals $\its$ and $\itb$ and considers the following three cases:
        \begin{itemize}
        \item \emph{Well-separated case:} if the two intervals are disjoint, then it posts a ``safe'' price for the seller, i.e., $\us$, the largest value in $\its$;
        \item \emph{Weak overlap case:} let $p_t$ be the middle-point of the larger interval, if it is accepted for sure by the other agent, then the algorithm posts it, \emph{middle price};
        \item \emph{Strong overlap case:} when neither of the two previous cases apply, the algorithm posts a price uniformly at random in $\its \cup \itb$, \emph{uniform price}.
        \end{itemize}
A pseudocode description of the algorithm is available in \Cref{alg:1bit-gft}, in \Cref{app:pseudocodes-efficiency}.

Note that, unlike the other algorithms for efficiency maximization, \Cref{alg:1bit-gft} is defined without explicitly relying on Steiner Potentials; nonetheless, they remain a crucial tool in the analysis.
In the rest of this section, we argue that this one-bit algorithm suffers a regret that is independent of the time horizon. For this purpose, we present a geometric result, \Cref{lem:partition}, whose proof is deferred to  \Cref{app:efficiency}. 
\begin{restatable}{lemma}{lempartition}
        \label{lem:partition}
            Consider any convex set $K\subseteq \R^d$, direction $x \in \B$, and hyperplane $h$ orthogonal to $x$.
            Let $H$ be any half-space induced by $h$, such that $\proj(x, H)$ contains an $\alpha \in [0,1]$ fraction of $\proj(x, K)$. 
            For any $z \le \alpha \width(K,x)$, 
            we have the following inequality:
            \[
                \vol((K \setminus H) + z\B)\leq \left(1-\left(\frac{\alpha}{1+2\alpha}\right)^d\right) \cdot \vol(K + z\B).     
            \]
        \end{restatable}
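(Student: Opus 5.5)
The plan is to show that $H\cap K$ carries a definite fraction of $\vol(K+z\B)$ that is entirely missing from $(K\setminus H)+z\B$. Two ingredients are used: a homothety that sits inside $H\cap K$, and a cone/Brunn--Minkowski argument.

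\textbf{Setup.} Normalize $x$ to be a unit vector and write $w=\width(K,x)$. Write $\proj(x,K)=[a,a+w]$ and $H=\{v:\ip{v,x}\le c\}$ with $c=a+\alpha w$. Note that $K\setminus H$ need not be closed, but it is contained in $K\cap\{\ip{v,x}\ge c\}$, so it suffices to bound $\vol\big((K\cap\{\ip{v,x}\ge c\})+z\B\big)$.

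\textbf{Homothety.} Pick $v_0\in K$ with $\ip{v_0,x}=a$, and set $\lambda=\frac{\alpha}{1+2\alpha}$. Consider the map $\phi(y)=v_0+\lambda(y-v_0)$.
\begin{itemize}
\item By convexity, $\phi(K)\subseteq K$.
\item Every point of $\phi(K)$ has projection at most $a+\lambda w<a+\alpha w$, so $\phi(K)\subseteq K\cap H$.
\item The set $\phi(K+z\B)=\phi(K)+\lambda z\B$ has volume $\lambda^d\vol(K+z\B)$.
\end{itemize}

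\textbf{Disjointness.} I want $\phi(K+z\B)$ to be disjoint from $(K\cap\{\ip{v,x}\ge c\})+z\B$. A point in the latter has projection at least $c-z\ge a+\alpha w-z$. A point in the former has projection at most $a+\lambda w+\lambda z$. So disjointness needs
\[
\lambda(w+z)<\alpha w - z,
\]
which fails when $z$ is close to $\alpha w$. This is where the given hypothesis must be used more cleverly, and it is the main obstacle. A pure projection separation needs $z<\alpha w/2$ or so, and the constant $1+2\alpha$ suggests the intended argument takes the homothety center differently.

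\textbf{Choice of center.} First try: choose the center $v_0$ at distance $z$ below the minimum, i.e.\ on the boundary of $K+z\B$ with $\ip{v_0,x}=a-z$. Then $\phi$ maps $K+z\B$ into itself, and its image has projection at most
\[
a-z+\lambda(w+2z).
\]
Requiring this to be at most $a+\alpha w-z$ reduces to $\lambda(w+2z)\le\alpha w$. Since $z\le\alpha w$, we have $w+2z\le(1+2\alpha)w$, so $\lambda=\alpha/(1+2\alpha)$ works exactly. This matches the constant in the statement, confirming the choice.

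\textbf{Conclusion.} With this center, $\phi(K+z\B)$ is a subset of $K+z\B$ lying in the slab $\{\ip{v,x}\le c-z\}$. Points of $(K\cap\{\ip{v,x}\ge c\})+z\B$ have projection at least $c-z$, so the two sets meet only on a hyperplane, which has measure zero. Subtracting volumes gives
\[
\vol\big((K\setminus H)+z\B\big)\le\vol(K+z\B)-\lambda^d\vol(K+z\B)=\Big(1-\Big(\tfrac{\alpha}{1+2\alpha}\Big)^d\Big)\vol(K+z\B),
\]
which is the claimed bound. The only remaining care is the degenerate case $w=0$ (or $\alpha=0$), where the bound is trivial, and checking that $\phi(K+z\B)\subseteq K+z\B$. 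The latter follows from convexity of $K+z\B$ and the fact that $v_0\in K+z\B$.
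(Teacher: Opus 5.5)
Your proof is correct. The first attempt, with the center in $K$ at height $a$, is indeed too weak when $z$ is close to $\alpha w$, and you can drop it. The final choice of center, the leftmost point $u^{\text{left}}$ of $K+z\B$, works:
\begin{itemize}
\item By convexity, $\phi(K+z\B)\subseteq K+z\B$.
\item The projection of $\phi(K+z\B)$ lies in $[a-z,\,a-z+\lambda(w+2z)]\subseteq[a-z,\,c-z]$, because $w+2z\le(1+2\alpha)w$.
\item Points of $K\setminus H$ satisfy $\langle v,x\rangle>c$, so $(K\setminus H)+z\B$ projects into $(c-z,\infty)$.
\end{itemize}
Hence the two sets are actually disjoint, not merely overlapping in a null set, and the volumes subtract. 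Normalizing $x$ is harmless: since $\|x\|\le 1$, the hypothesis $z\le\alpha\width(K,x)$ implies the same inequality for $x/\|x\|$. Alternatively, you can run the argument directly with shifts $z\|x\|$.

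The paper's route is closely related but organized differently.
\begin{itemize}
\item It cuts $K+z\B$ by the same hyperplane $h'$, at distance $\alpha\ell$ from $u^{\text{left}}$, into a left piece $L$ and a right piece $R\supseteq(K\setminus H)+z\B$.
\item It places inside $L$ the cone $\Gamma$ with apex $u^{\text{left}}$ and base the cross-section $(K+z\B)\cap h'$.
\item It covers $R$ by $\Gamma^+\setminus\Gamma$, where $\Gamma^+$ is the dilation of $\Gamma$ from the apex by the factor $(\ell+2z)/(\alpha\ell)$.
\item The cone-volume formula then gives $\vol(R)/\vol(L)\le\big((\ell+2z)/(\alpha\ell)\big)^d-1$.
\end{itemize}
You instead shrink the entire body into $L$ by a homothety from the same apex. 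Both arguments reach the identical intermediate bound $1-\big(\alpha\ell/(\ell+2z)\big)^d$ before using $z\le\alpha\ell$.

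Your version is shorter and more elementary. It needs neither the cone-volume fact nor the containment $R\subseteq\Gamma^+\setminus\Gamma$, which requires tracing segments from the apex through the base. The paper's version keeps the cross-section and cone toolkit that it also uses in the proof of \Cref{lem:balanced} and of the accepting-unbalanced-prices lemma.
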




We now present two direct applications of \Cref{lem:partition} in  \Cref{lem:unbalanced} and \Cref{lem:balanced-1bit}.
\begin{lemma}[Weak overlap case]
        \label{lem:unbalanced}
            In the weak overlap case, if the algorithm posts the middle point of the seller, then 
            \[
                \vol(S_{t+1} + z \B) \le (1-4^{-d})\vol(S_{t} + z \B), \quad \forall z \text{ such that } z \le \nicefrac 12 \width(S_t,x_t).
                \]
            If the algorithm posts the middle point of the buyer, then:
            \[
                \vol(B_{t+1} + z \B) \le (1-4^{-d})\vol(B_{t} + z \B), \quad \forall z \text{ such that } z \le \nicefrac 12 \width(B_t,x_t).
            \]
        \end{lemma}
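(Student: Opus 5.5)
This is a direct application of \Cref{lem:partition} with $\alpha = \nicefrac12$. We treat the seller case; the buyer case is symmetric, with the inequalities flipped.

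\textbf{Step 1: the feedback is unambiguous.} In the weak overlap case with the seller's interval larger, the algorithm posts $p_t = (\ls + \us)/2$ to both agents. This price is accepted for sure by the buyer. So the one-bit feedback $\ind{s_t \le p_t}\cdot\ind{p_t\le b_t}$ equals $\ind{s_t \le p_t}$, and the seller's region is updated correctly. In either outcome,
\[
S_{t+1} = S_t \setminus H,
\]
where $H$ is an open half-space induced by the hyperplane $h=\{v : \ip{v,x_t}=p_t\}$. If the seller accepts, $H$ is the half-space $\{\ip{v,x_t}>p_t\}$. If the seller refuses, $H$ is $\{\ip{v,x_t}<p_t\}$. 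Boundary points are immaterial, since the hyperplane has measure zero and taking closures does not change $\vol(\cdot + z\B)$.

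\textbf{Step 2: the fraction $\alpha$.} Because $p_t$ is the midpoint of $\its = [\ls,\us]$, each side of $h$ contains exactly half of the projection. So $\proj(x_t,H)$ contains an $\alpha = \nicefrac12$ fraction of $\proj(x_t,S_t)$, whichever outcome occurs.

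\textbf{Step 3: apply \Cref{lem:partition}.} Take $K = S_t$ (convex) and $x = x_t$. The lemma's condition $z \le \alpha \width(K,x)$ becomes exactly the assumption $z \le \nicefrac12 \width(S_t,x_t)$. The lemma then gives
\[
\vol(S_{t+1} + z\B) \le \left(1 - \left(\tfrac{1/2}{2}\right)^d\right)\vol(S_t + z\B) = (1 - 4^{-d})\vol(S_t + z\B).
\]

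\textbf{The one point needing care.} The lemma should be applied with $H$ as the discarded half-space, i.e., the side the new information rules out. Both possible outcomes must be covered by the same $\alpha$, and the midpoint choice guarantees this.
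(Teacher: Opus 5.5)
Your proposal is correct and follows the same route as the paper: because the midpoint satisfies $p_t \le \lb$, the buyer surely accepts, so the one-bit feedback is unambiguous, and applying \Cref{lem:partition} with $\alpha = \nicefrac12$ yields the factor $1-4^{-d}$. You also spell out two details the paper leaves implicit: that $H$ should be the discarded half-space, and that the midpoint gives the same $\alpha$ whichever outcome occurs.
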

        \begin{proof}
            Without loss of generality, we assume that the seller has the larger width, so that the algorithm posts the middle point of $\its$ as price, and that such price is smaller than $\lb$. 
            By design, the buyer is sure to accept the middle price of the seller, thus the feedback received by the algorithm allows to reconstruct whether the seller accepts or not. We can then apply \Cref{lem:partition} with $\alpha = \nicefrac 12$ to conclude the proof.
        \end{proof}

\begin{lemma}[Strong overlap case]
    \label{lem:balanced-1bit}
In the strong overlap case, if both agents accept price $p_t$, then at least one of the following two inequalities is verified: 
        \[
        \begin{cases}
            \vol(S_{t+1} + z \B) \le (1-6^{-d})\vol(S_{t} + z \B), \quad &\forall z \text{ such that } z \le \nicefrac{1}{4}\width(S_t,x_t)    \\
            \vol(B_{t+1} + z \B) \le (1-6^{-d})\vol(B_{t} + z \B), \quad &\forall z \text{ such that } z \le \nicefrac{1}{4}\width(B_t,x_t)
        \end{cases}.
        \]
    \end{lemma}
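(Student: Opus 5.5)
We reduce to \Cref{lem:partition}, applied to whichever agent's projection is cut by $p_t$ into a constant fraction of at least $\nicefrac15$. Since $\nicefrac{\alpha}{1+2\alpha}$ equals $\nicefrac16$ at $\alpha=\nicefrac14$, what we actually need is a fraction $\alpha \ge \nicefrac14$. When both agents accept $p_t$, the feedback is unambiguous: $s_t \le p_t$ and $b_t \ge p_t$. Hence $S_{t+1} = S_t \setminus \{v \mid \ip{v,x_t} > p_t\}$ and $B_{t+1} = B_t \setminus \{v \mid \ip{v,x_t} < p_t\}$. For the seller, the removed half-space $H$ is $\{\ip{v,x_t}>p_t\}$, whose projection covers the portion $[p_t,\us]$ of $\its$. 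For the buyer, it covers $[\lb,p_t]$ of $\itb$. So it suffices to show that, in the strong overlap case and whenever both accept, either $\us - p_t \ge \nicefrac14 \width(S_t,x_t)$ or $p_t - \lb \ge \nicefrac14\width(B_t,x_t)$. Then \Cref{lem:partition} with $\alpha=\nicefrac14$ gives the factor $1-6^{-d}$ for every $z \le \nicefrac14\width$. Since the lemma's bound is monotone in $\alpha$, a larger fraction only helps; alternatively we apply it with the exact fraction and $z\le \nicefrac14\width \le \alpha\width$.

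The key step is this combinatorial claim about intervals. Acceptance by both agents forces $p_t \in [\ls,\us]\cap[\lb,\ub]$, because $s_t\in\its$, $b_t\in\itb$, and $s_t\le p_t\le b_t$. Suppose, for contradiction, that $\us - p_t < \nicefrac14(\us-\ls)$ and $p_t-\lb < \nicefrac14(\ub-\lb)$. Then $p_t$ lies in the top quarter of $\its$ and in the bottom quarter of $\itb$. We must show this contradicts being in the strong overlap case, i.e., contradicts the failure of both the well-separated and the weak overlap conditions.

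Without loss of generality, assume the seller has the larger width $w_S \ge w_B$, and let $m_S$ be the midpoint of $\its$. The weak overlap case fails, so $m_S$ is not surely accepted by the buyer, meaning $m_S > \lb$. From the assumed inequalities, $\lb > p_t - \nicefrac14 w_B \ge p_t - \nicefrac14 w_S > \us - \nicefrac12 w_S = m_S$. This contradicts $m_S > \lb$. The symmetric case, where the buyer is wider and we use $m_B < \us$, works identically.

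The main obstacle is getting the weak overlap condition exactly right. "Accepted for sure by the other agent" means $m_S \le \lb$ when the seller is wider, and $m_B \ge \us$ when the buyer is wider. We also need the choice of which agent is "larger" to match in both the algorithm and the argument. One further point needs checking: the lemma's statement lets the $z$-range refer to the agent whose inequality holds. Our derivation yields the quarter fraction precisely for the agent certified by the contradiction, so each disjunct comes with its own $z$-range as stated.
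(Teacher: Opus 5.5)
Your proof is correct and follows essentially the same route as the paper: assume $p_t$ lies in the top quarter of $\its$ and the bottom quarter of $\itb$, derive from $\width(B_t,x_t)\le\width(S_t,x_t)$ that the midpoint of the wider interval is at most $\lb$ (or, in the buyer-wider case, at least $\us$), which contradicts the failure of the weak overlap condition, and then invoke \Cref{lem:partition} with $\alpha=\nicefrac14$ to get the factor $1-6^{-d}$. One small inaccuracy: acceptance by both agents only gives $\ls\le p_t\le\ub$, not $p_t\in\its\cap\itb$ (a uniform draw from $\its\cup\itb$ can exceed $\us$ and still be accepted). This is harmless, because your contradiction uses only the inequalities $\us-p_t<\nicefrac14\width(S_t,x_t)$ and $p_t-\lb<\nicefrac14\width(B_t,x_t)$, which cover those cases as well.
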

    \begin{proof}
        Assume, towards contradiction, that the statement is not true, then there exists a price $p_t$ that (i) results in a trade, (ii) belongs to the leftmost quarter of the $[\lb,\ub]$ segment and (iii) belongs to the rightmost quarter of the $[\ls,\us]$ segment. Note, the properties (ii) and (iii) are required, otherwise we could apply \Cref{lem:partition} with $\alpha = \nicefrac 14$.

        Consider the case that the seller's width is larger than the buyer's one, 
        then the existence of $p$ satisfying (i) and (iii) implies that the middle point of $[\ls,\us]$ is smaller than $\lb$. This is because $\nicefrac{(\us-\ls)}{2} = \us - \nicefrac{(\us+\ls)}{2} = (\us - p) + (p-\lb) + (\lb-\nicefrac{(\us+\ls)}{2})$ gives $(\lb-\nicefrac{(\us+\ls)}{2})\geq 0$ using $(\us-p)\leq \nicefrac{(\us-\ls)}{4}$ and $(p-\lb)\leq \nicefrac{(\us-\ls)}{4}$.  
But this is a contradiction, because we would then be in the weak overlap case. The case where the buyer's width dominates the seller's is analogous. 
    \end{proof}

We are now ready to prove the main theorem of this section. 
        \begin{theorem}
        \label{thm:one-bit-gft}
            Consider the contextual bilateral trade problem in the one-bit feedback model. There exists an efficiency-maximizing algorithm that achieves regret $O(d6^d)$ and enforces per-round budget balance. 
        \end{theorem}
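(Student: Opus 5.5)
We run the algorithm described above (\Cref{alg:1bit-gft}) and bound its regret through Steiner potentials indexed dyadically, as in the proof of \Cref{thm:2-bit-gft}. Per-round budget balance is immediate because the same price is posted to both agents. As in \Cref{thm:2-bit-gft}, the gain from trade is never negative, so we may discard steps with $s_t>b_t$. Well-separated steps incur zero regret, since $\us$ is accepted by both agents. For the remaining steps, let $w_t$ be the larger of $\width(S_t,x_t)$ and $\width(B_t,x_t)$, and let $i_t$ be the largest index with $w_t\le 2^{-i}$. The two intervals intersect, so the benchmark gain, and hence the regret, at step $t$ is at most $2\cdot 2^{-i_t}$. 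We attach to each index the potentials $\vol(S_t+z_i\B)$ and $\vol(B_t+z_i\B)$ with $z_i = 2^{-i}/8$. These are nonincreasing in $t$, lie between $z_i^d\vol(\B)$ and $3^d\vol(\B)$, and so can shrink by a factor $(1-c^{-d})$ at most $O(c^d d\log(1/z_i))$ times.

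\emph{Weak overlap steps.} The dominating agent has width at least $2^{-i_t-1}$, so $z_{i_t}\le \tfrac12\width$ holds. By \Cref{lem:unbalanced}, the corresponding potential at scale $z_{i_t}$ drops by the factor $1-4^{-d}$. Hence index $i$ is used in weak overlap steps at most $O(4^d d\log(1/z_i))$ times. Summing $2^{-i}\cdot 4^d d\,(i+O(1))$ over $i$ gives $O(d4^d)$.

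\emph{Strong overlap steps.} This is the main obstacle: the potential decreases only when a trade occurs, while no-trade steps carry regret but give no update. The plan is a charging argument. Fix a maximal run of consecutive strong overlap steps sharing index $i$. At each such step the uniform price lies in $[s_t,b_t]$ with probability at least $(b_t-s_t)/(2\cdot 2^{-i+1})$, since $\its\cup\itb$ has length at most $2\cdot 2^{-i}$ because the intervals overlap. The per-step regret is $b_t-s_t$, so the expected regret per step is at most $4\cdot 2^{-i}$ times the probability of a trade. By optional stopping applied to the martingale $\sum_t\big[(b_t-s_t)-4\cdot 2^{-i}\,\Pr(\text{trade}_t\mid\mathcal F_{t-1})\big]$, the expected total regret of strong overlap steps with index $i$ is at most $4\cdot 2^{-i}$ times the expected number of trades in such steps. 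The contexts are adversarial but oblivious, and we condition on the history, so the argument goes through with stopping times. Every such trade produces, via \Cref{lem:balanced-1bit}, a $(1-6^{-d})$ decrease of the seller or buyer potential at every scale $z\le \tfrac14\width$.

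The remaining difficulty is that \Cref{lem:balanced-1bit} guarantees a decrease only for the agent whose width may be smaller, and only at scales below a quarter of that agent's width. To handle this, we index the decrease by that agent's own dyadic width level $j$, using scale $z_j$ with $z_j\le\tfrac14\cdot 2^{-j-1}$. The regret, however, is measured at the larger width $2^{-i}\ge 2^{-j-1}$. To fix this mismatch I would first try to show, from the strong overlap geometry (neither midpoint is safe), that the two widths are within a factor $2$ of each other. Indeed, if the larger interval's midpoint is not safely accepted, then the smaller interval must contain that midpoint or extend past it, and I expect a short case check to give $j\le i+O(1)$. Given this, the number of trades with level $j$ is $O(6^d d\log(1/z_j))$, each charged $O(2^{-j})$ regret, and summing over $j$ yields $O(d6^d)$. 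This dominates the weak overlap contribution, which proves the bound. If the factor-$2$ comparison fails, the fallback is to charge only the trades where the dominating agent's potential decreases, and to treat the other case through the randomization, since a uniformly random accepted price lands in the middle half of the larger interval with constant conditional probability.
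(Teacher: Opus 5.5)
Your treatment of budget balance, of well-separated and weak-overlap steps, and the tower-property bound are fine; that bound says strong-overlap regret at level $i$ is at most $4\cdot 2^{-i}$ times the expected number of level-$i$ trades. The gap is the next step: turning each such trade into a potential decrease at a scale tied to $i$.

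Your primary claim, that in strong overlap the two widths are within a factor $2$, is false. Strong overlap only constrains where the smaller interval \emph{starts}, not how long it is. For example, $\its=[0,1]$ and $\itb=[0.4,0.41]$ is a strong-overlap configuration: the seller's midpoint $\nicefrac12$ exceeds $\lb$, and the buyer does not dominate. Yet the width ratio is $100$.

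The fallback's justification also fails. Conditional on a trade, the price is uniform on $[s_t,b_t]$, and this can lie entirely in an extreme quarter of the larger interval. Take $\its=[0,1]$, $\itb=[0.1,0.95]$, $s_t=0.8$, $b_t=0.9$: there is no constant-probability ``middle half'' event. Without a further argument, \Cref{lem:balanced-1bit} may only certify a decrease for the small agent, at scales below a quarter of its possibly tiny width. A $\Theta(2^{-i})$ charge cannot be paid from that agent's level-$j$ budget when $j\gg i$, so the sum does not close.

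The missing idea is a deterministic split on the width ratio at threshold $4$, which is how the paper proceeds. Let $w$ be the larger width.

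\emph{Comparable widths.} If both widths are at least $w/4$, \Cref{lem:balanced-1bit} gives a $(1-6^{-d})$ decrease for whichever agent it names, at every scale $z\le w/16$. So charge the trade to the scale $z_{i+2}=2^{-i}/32$, not to the small agent's own level.

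\emph{Very different widths.} Suppose instead $\width(B_t,x_t)\ge 4\width(S_t,x_t)$. Strong overlap gives $(\lb+\ub)/2<\us$, hence $\ls\ge\us-w/4>\lb+w/4$. Every accepted price $p\ge s_t\ge\ls$ therefore lies in the upper three quarters of $\itb$. Since the buyer accepted, $B_{t+1}$ drops at least a quarter of the projection. Then \Cref{lem:partition} with $\alpha=\nicefrac14$ gives a $(1-6^{-d})$ decrease of the \emph{dominating} agent's potential at scale $z_i\le w/4$ on every trade, with no probabilistic argument needed. Symmetrically, if the seller dominates, $p\le b_t\le\ub<\us-w/4$.

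With this split your count closes: there are $O(6^d d\log(1/z_{i+2}))$ level-$i$ trades, each charged $O(2^{-i})$, for $O(d6^d)$ in total. Once repaired, your per-level counting plus charging is an equivalent repackaging of the paper's argument. The paper instead shows that $R(t)+P(t)$ is a supermartingale, where $P$ is a $6^d$-weighted sum over dyadic scales $z_i$ of $z_i$ times the two log-potentials.
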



\begin{proof} It is clear that Algorithm \ref{alg:1bit-gft} enforces per-round budget balance, as it always posts the same price to the buyer and to the seller, and is well defined, as it only updates the confidence regions when there is no ambiguity. We then tackle bounding the regret rate. Fix any problem instance $\cI$, we introduce two auxiliary functions: a potential $P(t)$, which is a function of the state of the algorithm at time $t$ and defined formally later, and the cumulative regret $R(t)$ suffered against $\cI$ suffered by the algorithm up to time $t$, excluded. We omit the dependence on the instance $\cI$, but it is clear that our goal is to bound $\E{R(T+1)}$ (with a slight abuse of notation).

The main ingredient in our analysis is the following inequality, which we shall show at all times $t\geq 1$:
\begin{equation}\label{eq:potential-eq}
    \mathbb{E}_t[R(t+1)+P(t+1)]\leq R(t)+P(t),
\end{equation} 
where $\mathbb{E}_t[\cdot]$ is the expectation conditioned on values taken by $p_1,\dots,p_{t-1}$ (so the randomness in the left-hand side of \Cref{eq:potential-eq} is on $p_t$). A direct induction on \Cref{eq:potential-eq} then yields the regret bound at all times $t$,
\[
\E{R(t)} \leq P(0).
\]
In formal terms, Equation \eqref{eq:potential-eq} says that $P(t) + R(t)$ is a super-martingale relative to the filtration $(\mathcal{F}_t)_{t\in [T+1]}$, where $\mathcal{F}_t$ is the sigma-algebra used used to generate $(p_1,\dots,p_{t-1}$). In this terminology, $\mathbb{E}_t[R(t+1)+P(t+1)]$, also denoted $ \mathbb{E}[R(t+1)+P(t+1)|\mathcal{F}_t]$, is an $\mathcal{F}_t$-measurable random variable, just like $R(t)+P(t)$. Similarly, we denote with $\mathbb{P}_t$ the probability conditioned on $\cF_t$. Thus, the inequality in Equation \eqref{eq:potential-eq} is an inequality between $\mathcal{F}_t$-measurable random variables.
    We introduce the potential $P$ that we use in this proof. Let $z_i = 2^{-i}$, we have 
\begin{equation}\label{eq:potential-def}
    P(t) =  64 \cdot 6^d\sum_{i \geq 0} z_i\left[\log \left(\frac{\vol(B_t + z_{i} \B)}{z_i^d\text{Vol}(\B)}\right) + \log \left(\frac{\vol(S_t + z_{i} \B)}{z_i^d\text{Vol}(\B)}\right)\right].
\end{equation}
The non-negativity of such potential immediately follows from the fact that, at all times, $B_t\neq \emptyset$ and $S_t\neq \emptyset$. Also, using that $\forall i: \vol(\B+z_i\B) \le \vol(2\B) = 2^d \vol(\B)$, we get,
\[
P(0) \le 128 \cdot 6^d \sum_{i\geq 0}z_i \log\left(\frac{2^d}{z_i^d}\right) \in O(d6^d),
\]
which corresponds to the desired regret guarantee.

Thus, to complete the proof, it suffices to show that Equation \eqref{eq:potential-eq} is true at all times for the potential introduced in \Cref{eq:potential-def}. To do so, it is useful to notice that all the terms in the definition of $P(t)$ are non-increasing, so that it suffices to show that the decrease of one of these terms outweighs the increase in the regret.  

\paragraph{Well-separated case.} In the well-separated case, the algorithm suffers no regret, i.e., $R(t+1) = R(t)$. Equation \eqref{eq:potential-eq} thus follows immediately from the monotonicity of $P(t)$. 

\paragraph{Weak overlap case.} In the  weak overlap case, notice that $R(t+1)-R(t)\leq 2w$, where $w = \max\{\width(S_t,x_t), \width(B_t,x_t)\}$.  
Suppose, without loss of generality, that $\width(S_t,x_t)\geq \width(B_t,x_t)$, so that $p_t$ is the middle point of $\its$ and is always accepted by the buyer. In this case, denoting by $i$ the smallest integer such that $z_i\leq \nicefrac{1}{2}\width(S_t,x_t)$, and applying Lemma \ref{lem:unbalanced}, we have $\vol(S_{t+1} + z_i \B) \le (1-4^{-d})\vol(S_{t} + z_i \B)$. Therefore, 
\begin{align*}
    P(t+1)-P(t)&\leq 64 \cdot 6^d z_i \log(1-4^{-d})\\
    &\leq -2w,
\end{align*}
where we used $\forall x\in (0,1]: \log(1-x)\leq -x$, and $4z_i\geq w$, by definition of $i$. 

\paragraph{Strong overlap Case.} In the strong overlap case, the increase in the regret satisfies $R(t+1)-R(t)\leq \ip{b-s,x_t}^+$, and the probability that the price selected uniformly at random in $\its \cup \itb$ is accepted satisfies\footnote{With $\mathbb{P}_t$ we denote the conditional probability with respect to the filtration $\cF_t$.} 
\[
\mathbb{P}_t[p_{t}~\text{accepted}]\geq \frac{\ip{b-s,x_t}^+}{2w}.
\]
Thus, in order to prove Equation \eqref{eq:potential-eq}, in light of the observation that for the non-increasing potential,
\[
\mathbb{E}_t[P(t+1)-P(t)] \leq  \mathbb{E}_t[P(t+1)-P(t)|p_{t}~\text{accepted}]\cdot\mathbb{P}_t[p_{t}~\text{accepted}],
\]
it suffices to show the following bound on the conditional expectation
\[
\mathbb{E}_t[P(t+1)-P(t) |p_{t}~\text{accepted}]\leq -2w.
\]
We will show the stronger claim that $P(t+1)-P(t)\leq -2w$ in all cases where $p_t$ is accepted. For this, we consider two sub-cases. 

\paragraph{Sub-case 1: $\width(B_t,x_t)$ and $\width(S_t,x_t)$ are within a factor $4$.} Denote by $i$ the smallest integer such that $z_i\leq \nicefrac{1}{16}\cdot w$ so that $z_i\leq \nicefrac{1}{4}\min\{\width(B_t,x_t),\width(S_t,x_t)\}$. By Lemma \ref{lem:balanced-1bit}, if the price is accepted, then at least one of the following two inequalities is verified: 
        \[
        \begin{cases}
            \vol(S_{t+1} + z \B) \le (1-6^{-d})\vol(S_{t} + z \B), \quad &\forall z \text{ such that } z \le \nicefrac{1}{4}\width(S_t,x_t)    \\
            \vol(B_{t+1} + z \B) \le (1-6^{-d})\vol(B_{t} + z \B), \quad &\forall z \text{ such that } z \le \nicefrac{1}{4}\width(B_t,x_t)
        \end{cases}
        \]
Therefore, the quantity $Q(t)$ defined by
\[
Q(t) = 64\cdot 6^d\cdot z_i\left(\log \left(\frac{\vol(B_t + z_{i} \B)}{z_i^d\text{Vol}(\B)}\right) + \log \left(\frac{\vol(S_t + z_{i} \B)}{z_i^d\text{Vol}(\B)}\right)\right),
\]
satisfies,
\[
Q(t+1)-Q(t)\leq 64\cdot 6^d\cdot z_i \log(1-6^{-d}) \leq -2w,
\]
where we used $\forall x\in (0,1]: \log(1-x)\leq -x$, and $32 \cdot z_i \geq w$.
Observing that $P(t+1)-P(t)\leq Q(t+1)-Q(t)$, we have completed the analysis of this sub-case. 

\paragraph{Sub-case 2: $\width(B_t,x_t)\geq 4\width(S_t,x_t)$ or $\width(S_t,x_t)\geq 4\width(B_t,x_t)$.} We assume without loss of generality that $\width(B_t,x_t)\geq 4\width(S_t,x_t)$ and we treat the other case analogously. Notice that an accepted price must belong to the last three-quarters of $B_t$, so that, taking $i$ the smallest integer such that   $z_i\leq \nicefrac{1}{4}w$, and applying Lemma \ref{lem:partition} with $\alpha = \nicefrac{1}{4}$, we get,
\[
\vol(B_{t+1} + z_i \B) \le (1-6^{-d})\vol(B_{t} + z_i \B).
\]
Thus, if the price is accepted, the contribution of the above term to the decrease of the potential gives,
\[
P(t+1)-P(t) \leq 64\cdot 6^d \cdot z_i\log(1-6^{-d}) \leq -2w,
\]
where we used $\forall x\in (0,1]: \log(1-x)\leq -x$, and $8 \cdot z_i \geq w$.
\end{proof}

\section{The Contextual Case: Profit Maximization}
\label{sec:profit}

In this section, we present our algorithms for profit maximization. We present our near-optimal results for
two-bit feedback in \Cref{subsec:two-bit-profit}, and then adapt it to
one-bit feedback in \Cref{subsec:one-to-two-bit-profit}. Finally, we
give the one-bit algorithm with per-round budget balance in
\Cref{subsec:one-bit-profit}.

    \subsection{Profit Maximization with Two-Bit Feedback}
    \label{subsec:two-bit-profit}

    We now build on the recursive zooming idea from
    \Cref{sec:cont-free-profit-maximization}, combined with the efficiency-maximization
    from \Cref{subsec:two-bit-gft}, to get our profit maximization
    algorithm for two-bit feedback. We report here a high-level description, while the pseudocode is reported in \Cref{app:pseudocodes-profit} as \Cref{alg:2bit-profit}.

    The algorithm maintains the confidence regions
    $S_t$ and $B_t$ and a family of Steiner Potentials for scales
    $z_i = \tfrac{2^{-3 \cdot 2^i}}{16d}$. Every time a new context
    $x_t$ arrives, it proceeds by cases, according to the widths of
    $S_t$ and $B_t$. (i) If both widths are smaller than
    $\nicefrac 1T$, then the error in estimating the actual valuations
    is sufficiently small, so the algorithm posts safe prices to both
    agents. (ii) If $\width(S_t,x_t)\ge \width(B_t,x_t)$, then the
    algorithm posts a safe price for the buyer, and $m^S_t + z_{i_t}$
    to the seller, where $i_t$ is the largest index such that the
    width of $S_t$ is upper bounded by $2^{-2^{i_t}}$ and $m^S_t$
    solves
        \begin{equation}
            \label{eq:unbalanced_seller}
            \vol(\{v \in S_t + z_{i_t}\B \mid \ip{v,x_t} \ge m^S_t\}) = 2^{-2^{{i_t}-1}}\vol(S_t + z_{i_t}\B).
        \end{equation}
        We are left with the last case (iii), where the width of the buyer dominates that of the seller, and it is larger than $\nicefrac 1T$. There, the algorithm posts the safe price to the seller, and $m^B_t-z_{i_t}$ to the buyer, where $i_t$ is the largest index such that the width of $B_t$ is upper bounded by $2^{-2^i}$, and $m^B_t$ solves the following equality:
        \begin{equation}
            \label{eq:unbalanced_buyer}
            \vol(\{v \in B_t + z_{i_t}\B \mid \ip{v,x_t} \le m^B_t\}) = 2^{-2^{{i_t}-1}}\vol(B_t + z_{i_t}\B).
        \end{equation}
        We call the prices induced by solving \Cref{eq:unbalanced_seller,eq:unbalanced_buyer} as \emph{unbalanced} prices. 
        In all three cases, the price for the seller may be larger than that for the buyer. If this happens, the algorithm maintains per-round budget balance by actually posting to both agents the unbalanced price, or any price in case (i) when no unbalanced price is used. 

        Unbalanced prices share the desirable property of drastically reducing the corresponding potential when rejected, while they reduce it by a smaller factor when they are accepted. We formalize this fact in the following two lemmata, adapted from \citet{LiuLS21}, and whose proofs are deferred to \Cref{app:profit}.

            \begin{restatable}{lemma}{lemrefuse}{\textnormal{(Refusing Unbalanced Prices)}}\label{lem:profit-2bit-refuse}
           If $\width(x_t, S_t) \ge \nicefrac 1T$ and the seller refuses $m^S_t + z_{i_t}$, then 
           \[
            \vol(S_{t+1} + z_{i_t}\B) \le 2^{-2^{i_t-1}} \vol(S_t+z_{i_t}\B).
           \]
           Similarly, if $\width(x_t, B_t) \ge \nicefrac 1T$ and the buyer refuses price $m^B_t - z_{i_t}$, then 
           \[
            \vol(B_{t+1} + z_{i_t}\B) \le 2^{-2^{i_t-1}} \vol(B_t+z_{i_t}\B).
           \]
        \end{restatable}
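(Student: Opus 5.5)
The plan is to prove the seller statement; the buyer case is symmetric with reversed inequalities. The proof is the Minkowski-sum shrinking argument from \citet{LiuLS21}. Write $z = z_{i_t}$, $m = m^S_t$ and $x = x_t$. If the seller refuses the posted price $p_t = m + z$, the update gives $S_{t+1} = \{v \in S_t \mid \ip{v,x} \ge m + z\}$. Had the algorithm instead posted a common price to both agents (the budget-balanced fallback), the seller's price would still be at least $m+z$ in the relevant case, so the cut only becomes more restrictive. The key containment I will establish is
\[
S_{t+1} + z\B \subseteq \{ v \in S_t + z\B \mid \ip{v,x} \ge m \}.
\]
Indeed, take $v = y + zw$ with $y \in S_{t+1}$ and $w \in \B$. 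Then $v \in S_t + z\B$ because $S_{t+1} \subseteq S_t$. Moreover, since $\|x\|=1$,
\[
\ip{v,x} = \ip{y,x} + z\ip{w,x} \ge (m+z) - z = m.
\]
This is exactly why the posted price is shifted by $z_{i_t}$ from the quantile point $m^S_t$.

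Taking volumes and applying the defining equation \eqref{eq:unbalanced_seller} immediately yields
\[
\vol(S_{t+1} + z\B) \le \vol(\{v \in S_t + z\B \mid \ip{v,x} \ge m\}) = 2^{-2^{i_t-1}} \vol(S_t + z\B).
\]
The remaining work is to justify that $m^S_t$ is well defined. The map $u \mapsto \vol(\{v \in S_t + z\B \mid \ip{v,x} \ge u\})$ is continuous and nonincreasing, because $S_t + z\B$ is a full-dimensional convex body and hyperplane sections have measure zero. It goes from the full volume to $0$, so the intermediate value theorem gives a solution. Here the hypothesis $\width(x_t,S_t) \ge \nicefrac 1T$ ensures we are in case (ii) or (iii), where the unbalanced price is actually posted, and that $i_t$ is well defined.

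The main obstacle is bookkeeping rather than geometry: I must verify that whenever the seller ``refuses'' in the sense of the lemma, the recorded update really is the half-space $\{\ip{v,x} \ge p_t\}$ with $p_t \ge m+z$. This covers the case where the budget-balance fallback replaces both prices by the unbalanced price, since the seller-side price is then unchanged. Two-bit feedback makes the refusal unambiguous. The buyer case follows verbatim, using the shifted price $m^B_t - z$, the containment $B_{t+1} + z\B \subseteq \{v \in B_t + z\B \mid \ip{v,x} \le m^B_t\}$, and \eqref{eq:unbalanced_buyer}.
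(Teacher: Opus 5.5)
Your proof is correct and follows the paper's argument. The paper likewise notes that $S_{t+1}+z_{i_t}\B$ is disjoint from $\{v\in S_t+z_{i_t}\B \mid \ip{v,x_t}<m^S_t\}$, which is your containment (you verify it explicitly via $z_{i_t}\ip{w,x_t}\ge -z_{i_t}$), and then invokes the defining equation of $m^S_t$. Your extra remarks on why $m^S_t$ is well defined and on the budget-balance fallback are harmless bookkeeping that the paper leaves implicit.
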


        \begin{restatable}{lemma}{lemaccept}{\textnormal{(Accepting Unbalanced Prices)}}\label{lem:profit-2bit-accept}
            If $\width(x_t, S_t) \ge \nicefrac{1}{T}$, and the seller accepts price $m^S_t+z_{i_t}$, then 
            \[
                \vol(S_{t+1} + z_{i_t}\B) \le \left(1-\frac{1}{10\cdot 2^{2^{(i_t-1)}}}\right)\vol(S_{t} + z_{i_t}\B).
            \]
            Similarly, if $\width(x_t, B_t) \ge \nicefrac{1}{T}$, and the buyer accepts price $m^B_t-z_{i_t}$, then
            \[
                \vol(B_{t+1} + z_{i_t}\B) \le \left(1-\frac{1}{10\cdot 2^{2^{(i_t-1)}}}\right)\vol(B_{t} + z_{i_t}\B).
            \]
        \end{restatable}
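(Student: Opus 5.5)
The plan is to show that, after the seller accepts, everything in $S_t+z_{i_t}\B$ lying beyond a slightly shifted hyperplane is removed, and that this removed region still has volume at least $\frac{1}{10}2^{-2^{i_t-1}}\vol(S_t+z_{i_t}\B)$. The tools are a containment argument and a slab-volume bound coming from Brunn--Minkowski. I treat the seller; the buyer case is symmetric, with all inequalities flipped. Write $i=i_t$, $z=z_i$, $x=x_t$, $K=S_t+z\B$, $\beta=2^{-2^{i-1}}$ and $W=\width(S_t,x)$.

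\emph{Step 1 (containment).} If the seller accepts $m^S_t+z$, then $S_{t+1}=\{v\in S_t\mid \ip{v,x}\le m^S_t+z\}$. Since $\|x\|=1$, every point of $S_{t+1}+z\B$ satisfies $\ip{v,x}\le m^S_t+2z$. Hence
\[
S_{t+1}+z\B\subseteq \{v\in K\mid \ip{v,x}\le m^S_t+2z\},
\]
and therefore
\[
\vol(S_{t+1}+z\B)\le \vol(K)-\vol(\{v\in K\mid \ip{v,x}> m^S_t+2z\}).
\]
By \Cref{eq:unbalanced_seller}, the cap $\{v\in K\mid\ip{v,x}\ge m^S_t\}$ has volume exactly $\beta\vol(K)$. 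It therefore suffices to show that the slab $\{v\in K\mid m^S_t\le \ip{v,x}\le m^S_t+2z\}$ has volume at most $\frac{9}{10}\beta\vol(K)$.

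\emph{Step 2 (slab bound).} Let $A$ be the maximal $(d-1)$-dimensional cross-sectional area of $K$ orthogonal to $x$. The slab has thickness $2z$, so its volume is at most $2zA$. To control $A$, I use Brunn--Minkowski: the map $u\mapsto f(u)^{1/(d-1)}$ is concave, where $f(u)$ is the cross-section area at level $u$. Consequently $K$ contains the convex hull of its maximal section and a point at each extreme level, which is a double cone. This gives
\[
\vol(K)\ge A\cdot \frac{\width(K,x)}{d}\ge \frac{AW}{d}.
\]
Hence the slab volume is at most $\frac{2zd}{W}\vol(K)$.

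\emph{Step 3 (plugging in the scales).} By the choice of $i$ as the largest index with $W\le 2^{-2^i}$, we have $W>2^{-2^{i+1}}=2^{-2\cdot 2^i}$. With $z=2^{-3\cdot2^i}/(16d)$ this gives
\[
\frac{2zd}{W}\le \frac{2^{-2^i}}{8}=\frac{\beta^2}{8}\le \frac{\beta}{8}.
\]
So the part of the cap beyond $m^S_t+2z$ has volume at least $(1-\frac18)\beta\vol(K)\ge\frac{1}{10}\beta\vol(K)$, which is the claimed bound. The hypothesis $W\ge 1/T$ is only needed so that the case, and hence $i$, is well defined.

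The main obstacle is Step 2: justifying $\vol(K)\ge A\,\width(K,x)/d$ cleanly. I would prove it via the double-cone construction inside $K$. I would also double-check the index bookkeeping in Step 3, namely that the definition of $i_t$ really gives $W>2^{-2^{i+1}}$, including edge cases such as $i=0$, where $\beta=2^{-1/2}$.
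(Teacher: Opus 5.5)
Your proof is correct and follows the paper's argument step for step. You remove the cap beyond $m^S_t+2z_{i_t}$, bound the slab by $2z_{i_t}\nu$ via the maximal cross-section, use the double-cone bound $\vol(S_t+z_{i_t}\B)\ge 2^{-2^{i_t+1}}\nu/d$, and do the same arithmetic $\tfrac{2dz_{i_t}}{2^{-2^{i_t+1}}}=2^{-2^{i_t}}/8$. The only cosmetic difference is that you cite Brunn--Minkowski for the double cone, whereas convexity of $S_t+z_{i_t}\B$ alone already contains the convex hull of the maximal section and the two extreme points, which is what the paper uses.
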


        We have all the ingredients to prove the regret guarantees of the two-bit algorithm.

        \begin{theorem}
        \label{thm:2-bit-profit}
            Consider the contextual bilateral trade problem in the two-bit feedback model. There exists a profit-maximizing algorithm that achieves regret $O(d \log \log T + d \log d)$ and enforces per-round budget balance. 
        \end{theorem}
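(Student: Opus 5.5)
The proof analyses the algorithm described above (\Cref{alg:2bit-profit}). It is a potential-counting argument in the spirit of \citet{LiuLS21}, run separately on the seller's and the buyer's families of Steiner potentials, as in \Cref{thm:2-bit-gft}.

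\textbf{Budget balance and case (i).} Budget balance holds by construction: whenever the seller's price would exceed the buyer's, the algorithm posts a single common price. Steps in case (i), where both widths are below $\nicefrac1T$, post safe prices $\us$ and $\lb$. In such a step either a trade happens or $\lb<\us$. If a trade happens, the profit loss is at most $(\us-s_t)+(b_t-\lb)\le \nicefrac 2T$. If $\lb<\us$, then $b_t-s_t\le \nicefrac 2T$, so the benchmark is at most $\nicefrac 2T$. Either way these steps contribute $O(1)$ in total.

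\textbf{Per-step regret in case (ii).} Case (iii) is symmetric. The buyer receives the safe price $\lb$, which is accepted. The seller receives $p=m^S_t+z_{i_t}$, where $\width(S_t,x_t)\le 2^{-2^{i_t}}$ and $\width(B_t,x_t)\le\width(S_t,x_t)$.

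If the seller accepts, the trade happens provided $p\le\lb$. The regret is then $(p-s_t)+(b_t-\lb)\le \width(S_t,x_t)+z_{i_t}+\width(B_t,x_t)= O(2^{-2^{i_t}})$. If instead $p>\lb$, the common price is posted; because both prices lie within the width window, the same $O(2^{-2^{i_t}})$ bound applies whenever the benchmark is positive. I would check this by bounding $b_t-s_t\le \ub-\ls$ using overlapping intervals.

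If the seller refuses, the regret is at most $(b_t-s_t)_+\le 2\cdot 2^{-2^{i_t}}$, which can be of order $2^{-2^{i_t}}$, not constant. This is where the doubly exponential schedule pays off.

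\textbf{Counting steps per index.} Fix an index $i$; note that $i\lesssim\log\log T$ in cases (ii)/(iii). Each seller step with index $i$ multiplies $\vol(S_t+z_i\B)$ by at most $1-\frac{1}{10\cdot 2^{2^{i-1}}}$ when the price is accepted (\Cref{lem:profit-2bit-accept}), and by at most $2^{-2^{i-1}}$ when it is refused (\Cref{lem:profit-2bit-refuse}). The potential always lies between $z_i^d\vol(\B)$ and $3^d\vol(\B)$, and $\log(\nicefrac{3}{z_i})=O(2^i+\log d)$. Hence
\begin{itemize}
\item the number of refusals at index $i$ is $O\bigl(d(2^i+\log d)/2^{i-1}\bigr)=O(d+d\log d/2^{i})$;
\item the number of acceptances at index $i$ is $O\bigl(d(2^i+\log d)\,2^{2^{i-1}}\bigr)$.
\end{itemize}

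\textbf{Regret per index.} Refusals cost $O(2^{-2^i})$ each, so they contribute $O\bigl((d+d\log d)2^{-2^i}\bigr)$. Acceptances cost $O(2^{-2^i})$ each, so they contribute $O\bigl(d(2^i+\log d)2^{2^{i-1}-2^i}\bigr)=O\bigl(d(2^i+\log d)2^{-2^{i-1}}\bigr)$. The same holds for the buyer.

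The refusal term is generously bounded by $O(d)$ per index. Summing it over the $O(\log\log T)$ active indices gives $O(d\log\log T)$, and the $d\log d\,2^{-2^i}$ parts sum to $O(d\log d)$. The acceptance terms form a rapidly convergent series in $i$, giving $O(d\log d)$. The total is therefore $O(d\log\log T+d\log d)$.

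\textbf{Main obstacle.} The step I expect to be hardest is the per-step regret bound when the seller's unbalanced price exceeds the buyer's safe price and the algorithm falls back to a common price. There I must verify two things: that the feedback still lets us apply \Cref{lem:profit-2bit-refuse,lem:profit-2bit-accept} to the relevant agent, and that the lost profit remains $O(2^{-2^{i_t}})$. My first attempt would be a direct case analysis on the relative order of $\ls,\lb,p,\us$. Since the common price lies in $\its$, both agents' responses are informative.
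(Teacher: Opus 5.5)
Your overall structure matches the paper's proof. You count refusals and acceptances per index via \Cref{lem:profit-2bit-refuse,lem:profit-2bit-accept}: $O(d+d\log d/2^{i})$ refusals and $O(d(2^i+\log d)2^{2^{i-1}})$ acceptances at index $i$, summed over the $O(\log\log T)$ active indices. However, one per-step bound is false. A seller refusal in case (ii) does \emph{not} cost only $(b_t-s_t)_+\le 2\cdot 2^{-2^{i_t}}$. That bound would require $\its$ and $\itb$ to intersect. Unlike the efficiency algorithm, \Cref{alg:2bit-profit} has no well-separated case: case (ii) only requires $\width(S_t,x_t)\ge\width(B_t,x_t)$. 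So you can have $\its=[0,0.01]$ and $\itb=[0.5,0.505]$, and a refused seller price then loses $b_t-s_t\approx 0.5$. For profit, separation does not help, since the algorithm must still pin down $s_t$ and $b_t$.

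Your own accounting exposes the problem. If refusals cost $O(2^{-2^i})$, your refusal terms $O((d+d\log d)2^{-2^i})$ sum to $O(d\log d)$ over all $i$. You would then get $O(d\log d)$ regret with no $\log\log T$ term, contradicting \Cref{thm:2-bit-profit-lower}. The small-cost bound does hold in the fallback sub-case $p>\lb$, where $\ub-\ls\le\width(S_t,x_t)+\width(B_t,x_t)+2z_{i_t}$. Acceptances are cheap in every sub-case, because $p-s_t\le\width(S_t,x_t)+2z_{i_t}$ and $b_t-\lb\le\width(B_t,x_t)$ regardless of any gap. Only refusals with $p\le\lb$ can be expensive.

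The fix is the paper's: charge each refusal a constant. Your count of $O(d+d\log d/2^{i})$ refusals per index then gives $O(d+d\log d/2^i)$ regret per index. Summed over indices this is $O(d\log\log T+d\log d)$, so your ``generous'' $O(d)$-per-index bound is actually the right one and your final rate stands.

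The role of the doubly exponential schedule is therefore the reverse of what you wrote. Refusals are the constant-cost events, and the schedule makes them rare. Each refusal shrinks $\vol(S_t+z_i\B)$ by a factor $2^{-2^{i-1}}$, while $\log(3/z_i)=O(2^i+\log d)$, so only $O(d)$ refusals occur per index. The $\log\log T$ term is exactly the number of active indices times $O(d)$.

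Finally, your ``main obstacle'' is not an obstacle in the two-bit model. The seller's bit is observed whatever price the buyer is offered, so the two lemmas apply directly.
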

        \begin{proof}
            \Cref{alg:2bit-profit} respects per-round budget balance since it always posts a lower price to the seller than it does to the buyer. Moreover, we can ignore from the analysis all time steps in which $s_t > b_t$, as they do not influence the profit of the algorithm and the benchmark. Likewise, we can disregard all time steps such that $\max\{\width(S_t,x_t),\width(B_t,x_t)\} \le \nicefrac 1T$ since they contribute $O(1)$ to the overall regret.

            We now analyze the regret incurred by the algorithm in the seller/buyer dominating cases. Consider the seller dominating case, as the buyer dominating case proceeds identically, but with signs flipped. We have two subcases: If $m^S_t + z_{i_t} < \ip{s,x_t}$, so that the seller refuses this price, the loss in profit is at most $1$, and, by \Cref{lem:profit-2bit-refuse}, 
            \[
                \vol(S_{t+1} + z_{i_t}\B) \le 2^{-2^{i_t-1}} \vol(S_t+z_{i_t}\B).
            \]
            Then, identically to the proof of \Cref{thm:2-bit-gft}, we know that $z_{i}^d \vol(\B) \le \vol(S_t + z_{i}\B) \le 3^d \vol(\B)$ for all $i$, then the seller cannot refuse a price corresponding to index $i$ more than
            \[
                n_{i}^{S, \textsf{ref}} \leq \frac{d\log \nicefrac{3}{z_{i}}}{2^{i-1}} = 6d + \frac{d\log 48d}{2^{i-1}}.
            \]
            Conversely, if $m^S_t + z_{i_t} \ge \ip{s,x_t}$, so that the seller accepts this price, the loss in profit is much smaller than in the case of refusal, namely at most $2 \cdot 2^{-2^{i_t}}$. Also, by \Cref{lem:profit-2bit-accept}, 
            \[
                \vol(S_{t+1} + z_{i_t}\B) \le \left(1-\frac{1}{10\cdot 2^{2^{{(i_t-1)}}}}\right)\vol(S_{t} + z_{i_t}\B).
            \]
            Then, as above, the seller cannot accept a price corresponding to index $i$ more than
            \begin{align*}
                 n_{i}^{S, \textsf{acc}} &\le \frac{d\log \nicefrac{z_{i}}{3}}{\log \left(1-\frac{1}{10\cdot 2^{2^{(i-1)}}}\right)} \\
                 &\le 10d2^{2^{(i-1)}}\log (48d\cdot 2^{3 \cdot 2^{i}}) \tag{as $\log(1+r) \le r$ for $r > -1$}\\
                 &\le 10 d2^{2^{(i-1)}}(\log (48d)+ 3\cdot 2^{i}).
            \end{align*}
            Finally, note that $i_t \leq 2\log\log T$, and hence, the accrued regret is at most
            \begin{align}
                \sum_{i=0}^{2\log\log T} \left(n_i^{S, \textsf{ref}} + n_i^{S, \textsf{acc}} \cdot 2^{-2^i}\right) &\le \sum_{i=0}^{2\log\log T} \left[6d + \frac{d\log 48d}{2^{i-1}} + 10 d2^{2^{{-(i-1)}}}(\log (48d)+ 3\cdot 2^{i})\right] \nonumber\\
                &\in O(d\log\log T + d\log d).
            \label{eq:final_profit}
            \end{align}
            We can perform a similar analysis for the time steps where the buyer's width dominates the seller's, thus concluding the proof.
        \end{proof}

        The positive result for profit maximization with two-bit feedback is essentially tight, given the lower bound for one-sided pricing from \citet{KleinbergL03}, which we can ``amplify'' using the $d$ directions (Proof deferred to \Cref{app:profit}).

        \begin{restatable}{proposition}{profitlower}
            \label{thm:2-bit-profit-lower}
            Consider the contextual bilateral trade problem in the two-bit feedback model. Any profit-maximizing algorithm suffers regret $\Omega(d \log \log \nicefrac Td)$.
        \end{restatable}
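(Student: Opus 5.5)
We prove \Cref{thm:2-bit-profit-lower} by reducing from the one-sided posted-price lower bound of \citet{KleinbergL03}, and amplifying it by splitting the horizon into $d$ independent blocks, one per coordinate direction.

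\textbf{Instance.} Fix the seller vector $s = 0$, so $s_t = 0$ for every context and the seller's bit carries no information. Draw the buyer vector as $b = \tfrac{1}{\sqrt d}(v_1,\dots,v_d)$, where the coordinates $v_j \in [\tfrac12,1]$ are chosen independently from the hard prior of \citet{KleinbergL03}; this keeps $b \in \B$. Split the horizon into $d$ blocks of length $T/d$. During block $j$, the context is always $x_t = e_j$, so $b_t = v_j/\sqrt d$.

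\textbf{Reduction.} Budget balance does not matter for this lower bound. With $s_t = 0$, posting $p_t < 0$ yields zero profit and only reveals the buyer's bit. Posting $p_t \ge 0$ with $q_t \le b_t$ yields profit $q_t - p_t \le q_t$. So, without loss of generality, $p_t = 0$, and each round of block $j$ is exactly a one-sided posted-price round: price $q_t$, one-bit feedback $\ind{q_t \le b_t}$, revenue $q_t$, and benchmark $b_t$.

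Independence across blocks is what makes the blocks separate problems. Because contexts are axis-aligned and $s$ is fixed, observations in block $j$ depend only on $v_j$. Since the $v_k$ are independent, the posterior on $v_j$ at the start of block $j$ is still its prior. Hence the algorithm's behaviour within block $j$ is a valid one-sided pricing strategy on horizon $T/d$, and its expected regret in that block is at least the Kleinberg–Leighton bound, $c\cdot\tfrac{1}{\sqrt d}\log\log(T/d)$. The factor $\tfrac1{\sqrt d}$ appears because values live in $[\tfrac{1}{2\sqrt d},\tfrac1{\sqrt d}]$, and regret scales linearly with the value range.

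\textbf{Scaling problem.} Summing this naive bound over the $d$ blocks gives only $\Omega(\sqrt d\log\log(T/d))$, which is short of the claimed $\Omega(d\log\log(T/d))$. The statement's $\ell_2$ normalization is what costs the $\sqrt d$. To recover it, I would use the paper's Observation that other norms are allowed, e.g.\ $\|x_t\|_\infty\le1$ with $\|s\|_1,\|b\|_1\le 1$. That changes the constants but not the values in each block, so it does not help directly.

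The cleaner fix is to use contexts $x_t = \tfrac{1}{\sqrt2}(e_0 + e_j)$ with a known offset coordinate. Set $b = (\tfrac{1}{\sqrt2}\cdot c_0, \text{small } v_j\text{'s})$, keeping the unknown part of each coordinate at constant scale relative to the price range. Then the hard distribution places $b_t$ in a constant-length interval, shifted by a known amount, and the reduction still applies because the shift is public. One caveat remains: the unknown coordinates must still fit inside the ball. If the $\ell_2$ constraint forces the unknown ranges down to $1/\sqrt d$, I would instead invoke the $\ell_\infty/\ell_1$ variant from the Observation, or lean on the LLS bound \citep{LiuLS21}, which already proves $\Omega(d\log\log(T/d))$ for one-sided contextual pricing. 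Embedding that bound via $s=0$ gives the result directly.

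In practice, the final route is to cite the \citet{LiuLS21} one-sided contextual lower bound and show, by the $s=0$ reduction above, that any bilateral two-bit algorithm yields a one-sided pricing algorithm with no greater regret.
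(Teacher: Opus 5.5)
Your core construction is exactly the paper's argument: fix $s=0$ so the problem collapses to one-sided posted pricing, split the horizon into $d$ chunks of length $T/d$ with context $e_j$ throughout chunk $j$, and invoke the \citet{KleinbergL03} bound independently in each chunk. Your fallback of citing \citet{LiuLS21} through the $s=0$ embedding is also how the paper's introduction attributes this bound. Your normalization objection is legitimate, and it applies to the paper's proof as well. That proof silently lets every coordinate $b_j$ follow the one-dimensional hard prior on a constant-length interval, which is incompatible with $\|b\|_2\le 1$ once $d$ is moderately large. Shrinking the coordinates to scale $1/\sqrt d$ shrinks the whole chunk problem (both value level and range), so it yields only $\Omega(\sqrt d\log\log(T/d))$.

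The clean repair is the one you reached for, but with the norms the other way round. With $\|x_t\|_\infty\le 1$ and $\|b\|_1\le 1$ (the example in the Observation), coordinates are forced down to $O(1/d)$, which is worse. With $\|x_t\|_1\le 1$ and $\|s\|_\infty,\|b\|_\infty\le 1$, the contexts $e_j$ are admissible and each $b_j$ can independently follow any one-dimensional hard prior supported in $[0,1]$. By H\"older all valuations stay in $[-1,1]$, so the paper's argument (and yours, without the $1/\sqrt d$ rescaling) goes through verbatim; the Observation's ``other norms'' clause is what licenses this.

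You dismissed the offset idea for the wrong reason. With a known offset, a rejection costs the whole buyer value rather than the width of the unknown range, so regret does not simply scale by $1/\sqrt d$. It still falls short, though. Starting from width $\Theta(1/\sqrt d)$, a squaring search ($\delta_{i+1}=\delta_i^2$, at most one rejection per phase) has per-chunk regret $O(1+\log(\log(T/d)/\log d))$. So that family cannot certify $\Omega(d\log\log(T/d))$ when $T$ is polynomial in $d$.

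Finally, deferring to \citet{LiuLS21} only relocates your own objection. You would have to check that their instance respects the unit-ball constraint, since a coordinate-wise amplification there has exactly the same issue.
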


    \subsection{Profit Maximization with One-Bit Feedback}
    \label{subsec:one-to-two-bit-profit}

        We present how to modify the two-bit algorithm to work with only one bit of feedback. As in the efficiency maximization case, we only update the confidence regions when there is no ambiguity in the feedback, i.e., when one of the two agents is proposed a safe price. The tweak is fairly simple. When the two intervals $\its$ and $\itb$ do not intersect, or have both widths smaller than $\nicefrac 1T$, the one-bit algorithm behaves exactly as the two-bit one: At least one of the two agents receives a safe price in such a case, and therefore there is no ambiguity (or $\ls > \ub$, in which case the trade cannot happen and no confidence region is updated). If $\its$ and $\itb$ intersect, and the width is large enough, then the two-bit algorithm posts an unbalanced price to the agent with the largest width, \emph{and a safe price to the other}, to avoid ambiguity. Note that the two-bit feedback algorithm does something similar but caps the other agent's price to prevent violating the per-round budget balance. As we formally argue in the following theorem, we can remove this cap without incurring more than $O(d \log d)$ negative profit.

        \begin{theorem}
        \label{thm:profit-safe}
            Consider the contextual bilateral trade problem in the one-bit feedback model. There exists a profit-maximizing algorithm that achieves regret $O(d \log \log T + d \log d)$ and whose negative profit is at most $O(d \log d)$. 
        \end{theorem}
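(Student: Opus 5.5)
The proof mirrors that of \Cref{thm:gft-safe}: we reduce to the two-bit analysis of \Cref{thm:2-bit-profit} and then charge the budget violation to the same potential counts.

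First we check that the one-bit algorithm is well defined and simulates two-bit feedback. In the seller-dominating case with width at least $\nicefrac 1T$, the buyer receives the safe price $\lb$, which she surely accepts. The observed bit therefore equals $\ind{s_t \le m^S_t+z_{i_t}}$, and we can update $S_t$ exactly as the two-bit algorithm does. The buyer-dominating case is symmetric. In the disjoint case both safe prices are accepted, unless $\ls>\ub$, in which case nothing is updated. The sequence of unbalanced prices posted to the dominating agent, and hence every potential update, is therefore identical to the two-bit one. It follows that \Cref{lem:profit-2bit-refuse,lem:profit-2bit-accept} apply verbatim, and the counts $n_i^{S,\textsf{ref}}$ and $n_i^{S,\textsf{acc}}$ (and their buyer analogues) obey the same bounds.

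\textbf{Regret.} We redo the per-step loss bound of \Cref{thm:2-bit-profit}. When the dominating agent refuses, the loss is at most a constant, here at most $2$ since profit lies in $[-2,2]$. When she accepts, the uncapped safe price of the dominated agent is the endpoint of an interval of width at most $\width(S_t,x_t)\le 2^{-2^{i_t}}$. Moreover, this interval intersects the dominating one, and the unbalanced price lies within $O(2^{-2^{i_t}})$ of the projection $\its$, since $z_{i_t}$ is tiny. Hence both posted prices are within $O(2^{-2^{i_t}})$ of $s_t$ and $b_t$ respectively, and the benchmark $(b_t-s_t)_+$ exceeds $q_t-p_t$ by at most $O(2^{-2^{i_t}})$. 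This is the same order as before, so summing over indices as in \Cref{eq:final_profit} gives $O(d\log\log T+d\log d)$. In the steps where both widths are at most $\nicefrac 1T$, posting safe prices loses at most $O(\nicefrac 1T)$ per step relative to $(b_t-s_t)_+$, which is $O(1)$ in total. Whether $p_t>q_t$ occurs there does not matter for regret, since the loss is still $O(\nicefrac1T)$.

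\textbf{Negative profit.} This is the main point. Negative profit requires $p_t>q_t$ together with a trade, which forces $s_t\le q_t<p_t\le b_t$. In the disjoint case the safe prices satisfy $\us<\lb$, so no violation occurs there. In the tiny-width case, a trade forces overlap, so the deficit is at most $2/T$ per step and $O(1)$ overall. In the dominating case the deficit is at most the distance between the two prices, namely $O(2^{-2^{i_t}})$ by the argument above. We must not sum this against the $n_i^{\textsf{acc}}$ bound multiplied by $\log\log T$ terms naively; instead we sum $\sum_i (n_i^{\textsf{ref}}+n_i^{\textsf{acc}})\cdot O(2^{-2^{i}})$. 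Using $n_i^{\textsf{ref}}\le 6d+\frac{d\log 48d}{2^{i-1}}$ and $n_i^{\textsf{acc}}\le 10d2^{2^{i-1}}(\log 48d+3\cdot 2^i)$, each term is bounded by $d\,2^{-2^{i-1}}(\log 48d+3\cdot2^i)$ plus a doubly exponentially decaying part. The series converges to $O(d\log d)$ with no $\log\log T$ factor, because the refusal steps now carry weight $2^{-2^i}$ rather than $1$.

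The main obstacle is the geometric claim that, on an accepted step, both prices lie within $O(2^{-2^{i_t}})$ of each other. This also covers a refusal step in which the trade nonetheless happened, which is impossible: a refusal means no trade, hence zero profit. The claim requires checking that $m^S_t+z_{i_t}$ stays within $z_{i_t}$ of $\its$ and that $\lb\in[\ls-\width,\us]$ by the overlap and dominance. I would verify this directly from the definitions of the confidence regions and of $i_t$.
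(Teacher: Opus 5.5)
Your handling of the overlapping, dominating steps is correct and matches the paper's argument. There the dominated agent's safe price makes the bit unambiguous, \Cref{lem:profit-2bit-refuse,lem:profit-2bit-accept} apply along your run, and both prices lie within $O(2^{-2^{i_t}})$ of the common overlap. Only accepted steps can carry a deficit, so summing against $n_i^{\textsf{acc}}$ gives $O(d\log d)$.

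The gap is in the remaining cases, and it comes from your claim that a trade with $p_t>q_t$ ``forces $s_t\le q_t<p_t\le b_t$.'' A trade only requires $s_t\le p_t$ and $q_t\le b_t$, and when $q_t<p_t$ these do not imply $s_t\le b_t$. In particular, the safe prices $\us$ and $\lb$ are always both accepted. So when $\ls>\ub$, where the benchmark is $0$, posting uncapped safe prices in the tiny-width case, or ``both safe prices'' in the disjoint case, causes a sure trade with profit $\lb-\us\le-(\ls-\ub)$. This can be a negative constant. The feedback carries no information, so repeating the context repeats the loss, and both regret and negative profit grow linearly in $T$. Your statements that in the tiny-width case ``a trade forces overlap'' and that $p_t>q_t$ ``does not matter for regret'' fail exactly here.

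Second, your regret analysis never treats disjoint steps with $\us<\lb$ and a width above $\nicefrac 1T$. If you post $\us,\lb$ there, as your text says, the trade happens but nothing is learned. You then lose $(b_t-\lb)+(\us-s_t)$, up to $\width(S_t,x_t)+\width(B_t,x_t)$, in every such round. For profit, unlike gain from trade, well-separated rounds are not free, which is why \Cref{alg:2bit-profit} has no well-separated case. This also contradicts your claim that the potential updates coincide with the two-bit ones.

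The fix is what the paper's algorithm and proof rely on. Remove the budget-balance cap only when $\its\cap\itb\neq\emptyset$ and some width exceeds $\nicefrac 1T$; otherwise behave as \Cref{alg:2bit-profit}:
\begin{itemize}
\item In the tiny-width case, keep $q_t=\max\{\us,\lb\}$.
\item When $\us<\lb$ with a large width, still post the unbalanced price to the dominating agent and the safe price to the other. These rounds are then charged to the same counts $n_i^{\textsf{ref}},n_i^{\textsf{acc}}$, and any deficit there is at most $2z_{i_t}$.
\item When $\ls>\ub$, post budget-balanced prices (e.g.\ the capped ones). Then no trade can occur and the round costs nothing.
\end{itemize}
With this, non-budget-balanced prices are posted only when both prices are close to the overlap, which is exactly what your deficit bound uses.
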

        \begin{proof}
            The analysis of the two-bit feedback carries exactly as it is, with the sole difference that we need to account for the negative profit incurred when posting non-budget-balanced prices. This happens only when the intersection of $\its$ and $\itb$ is non-empty, for a negative profit of order $O(2^{2^{-i}})$, where $i$ is the index corresponding to the largest width. Therefore, we can upper bound the negative profit as in \Cref{eq:final_profit}, with the only difference that when we upper bound the negative profit with $2^{2^{-i}}$, for an overall bound of $O(d \log d)$.
        \end{proof}
\subsection{Profit Maximization with One-Bit Feedback and Per-Round Budget Balance}\label{subsec:one-bit-profit}
We finally consider contextual profit-maximization under one-bit feedback and per-round budget balance and study Algorithm \ref{alg:1bit-profit-bb} whose pseudocode is given in Appendix~\ref{app:pseudocodes-profit}. Like the one-bit efficiency-maximizing algorithm analyzed in \Cref{subsec:one-bit-gft}, \Cref{alg:1bit-profit-bb} only updates the confidence regions $S_t$ and $B_t$ when there is no ambiguity in the (one-bit) feedback.  At each time step $t$, the algorithm computes the intervals $\its$ and $\itb$ and considers the following three cases:
        \begin{itemize}
        \item \emph{Small widths case:} if the two intervals have width less than $1/T$, then it posts a ``safe'' price for the seller, i.e., $p_t\gets \us$,  and a ``nearly safe'' price for the buyer, i.e., $q_t\gets \max\{\us,\lb\}$ (\Cref{fig:1bit-profit-a});
        \item \emph{Weak overlap case:} let $p_t$ be the middle-point of the larger interval, if it is accepted for sure by the other agent, then the algorithm posts it, \emph{middle price} (\Cref{fig:1bit-profit-b});
        \item \emph{Strong overlap case:} when neither of the two previous cases apply, the algorithm posts a price uniformly at random in $[-1,1]$, \emph{uniform price} (\Cref{fig:1bit-profit-c}).
        \end{itemize}
 Informally, the \textit{small widths case} replaces the \textit{well-separated} case of the efficiency-maximizing algorithm. The idea is that, while separation of $\proj(B_t,x_t)$ and $\proj(S_t,x_t)$ suffices to get full efficiency at round $t$, it does not suffice to extract a good profit at round $t$. 
 Instead, in the small widths case, we can choose $p_t$ and $q_t$ so that the regret in the profit does not exceed $2/T$.

The main result of this section is that \Cref{alg:1bit-profit-bb} achieves a regret of at most $O(d6^d \log T)$. In the proof of this result, we will use several lemmas and notations introduced in \Cref{subsec:one-bit-gft}, which we refer the reader to. 

\def\sellerbody{plot[smooth cycle] coordinates {(-1, 0.4) (0, 1.2) (1, 0.8) (0.8, -0.2) (-0.5, -0.3)}}
\def\buyerbody{plot[smooth cycle] coordinates {(-0.6, 0.3) (0, 0.8) (0.6, 0.5) (0.4, -0.1) (-0.3, -0.15)}}

\begin{figure}[t!]
    \centering
    \begin{subfigure}[b]{0.32\textwidth}
    \centering
        \begin{tikzpicture}[scale=0.8, every node/.style={scale=0.8}]
            \draw[->, gray] (0,-2.5) -- (0,1.2) node[left] {$x_t$};
            \draw[-, blue] (0,-1.39) -- (3.6,-1.39);
            \draw[-, green] (0,-0.22) -- (3.6,-0.22);
            
            \begin{scope}[shift={(1.5, -1.9)}, rotate=20, scale=0.4] 
                \draw[draw=yellow!60!black, fill=yellow!20, thick] \sellerbody;
                \node (s) at (0.1, 0.4) {$S_t$}; 
            \end{scope}
    
            \begin{scope}[shift={(1.5, -0.1)}, rotate=-10, scale=0.5] 
                \draw[draw=red!80!black, fill=red!20, thick] \buyerbody;
                \node (b) at (0, 0.2) {$B_t$}; 
            \end{scope}
    
            \fill (0,-1.39) circle (2pt) node[above right] {$p_t$};
            \fill (0,-0.22) circle (2pt) node[above right] {$q_t$};
        \end{tikzpicture}
        \caption{Small widths case}
        \label{fig:1bit-profit-a}
    \end{subfigure}
    \hfill
    \begin{subfigure}[b]{0.32\textwidth}
    \centering
        \begin{tikzpicture}[scale=0.8, every node/.style={scale=0.8}]
            \draw[->, gray] (0,-2.5) -- (0,1.2) node[left] {$x_t$};
            \draw[-, blue] (0,-1.1) -- (3.6,-1.1);
            
            \begin{scope}[shift={(1.2, -1.6)}, rotate=20] 
                \draw[draw=yellow!60!black, fill=yellow!20, thick] \sellerbody;
                \node (s_label_in_body) at (0.1, 0.4) {$S_t$}; 
            \end{scope}
    
            \begin{scope}[shift={(2.8, -0.75)}, rotate=-10] 
                \draw[draw=red!80!black, fill=red!20, thick] \buyerbody;
                \node (b_label_in_body) at (0, 0.2) {$B_t$}; 
            \end{scope}

    
            \fill (0,-1.1) circle (2pt) node[above right] {$p_t$};
        \end{tikzpicture}
        \caption{Weak overlap (middle price)}
        \label{fig:1bit-profit-b}
    \end{subfigure}
    \hfill
    \begin{subfigure}[b]{0.32\textwidth}
    \centering
        \begin{tikzpicture}[scale=0.8, every node/.style={scale=0.8}]
            \draw[->, gray] (0,-2.5) -- (0,1.2) node[left] {$x_t$};
            \draw[-, blue] (0,-0.65) -- (3.6,-0.65);
            
            \begin{scope}[shift={(1.2, -1.6)}, rotate=20] 
                \draw[draw=yellow!60!black, fill=yellow!20, thick] \sellerbody;
                \node (s_label_in_body) at (0.1, 0.4) {$S_t$};
            \end{scope}
    
            \begin{scope}[shift={(2.8, -1.2)}, rotate=-10] 
                \draw[draw=red!80!black, fill=red!20, thick] \buyerbody;
                \node (b_label_in_body) at (0, 0.2) {$B_t$}; 
            \end{scope}

    
            \fill (0,-0.65) circle (2pt) node[above right] {$p_t$};
        \end{tikzpicture}
        \caption{Strong overlap (random price)}
        \label{fig:1bit-profit-c}
    \end{subfigure}

    \caption{Visualization of the profit-maximizing algorithm with one-bit feedback and per-round budget balance.}
    \label{fig:1bit-profit}
\end{figure}
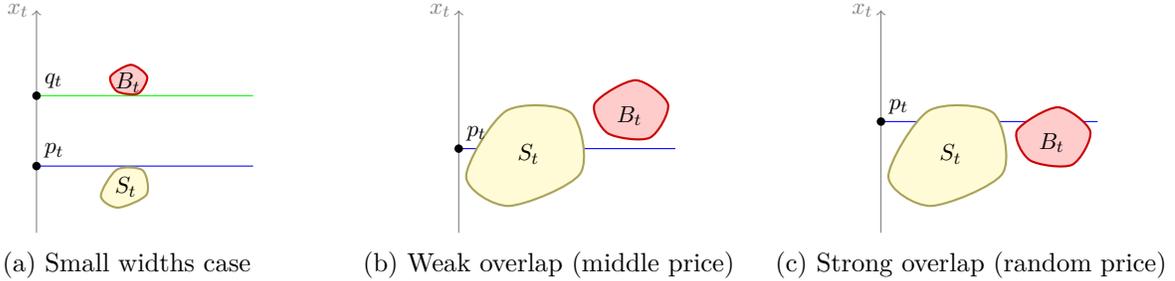

\begin{theorem}
\label{thm:one-bit-profit}
    Consider the contextual bilateral trade problem in the one-bit feedback model. There exists a profit-maximizing algorithm that achieves regret $O(d 6^d \log T)$ and enforces per-round budget balance. 
\end{theorem}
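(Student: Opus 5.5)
I will mirror the super-martingale argument of \Cref{thm:one-bit-gft}. I define $R(t)$ as the cumulative profit regret up to time $t$, and take a potential $P(t)$ of the same shape as \Cref{eq:potential-def}. The sum over scales $z_i=2^{-i}$ is now truncated to $i\le \lceil\log_2 T\rceil+O(1)$, because widths below $1/T$ are never probed. I then aim to show $\mathbb{E}_t[R(t+1)+P(t+1)]\le R(t)+P(t)+O(1/T)$. Each of the $\Theta(\log T)$ scale terms contributes at most $O(6^d z_i\, d\log(2/z_i))$ to $P(0)$, but the truncation alone does not give the right total, since $\sum_i z_i\log(1/z_i)$ is $O(1)$. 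The $\log T$ must therefore come from elsewhere. My plan is to give every scale the \emph{same} weight $C\cdot 6^d$, instead of weight $z_i$, in front of $\log(\vol(K+z_i\B)/(z_i^d\vol(\B)))$. Then $P(0)\le C6^d\sum_{i\le \log T} d\log(2/z_i)=O(d6^d\log^2 T)$, which is too much. So I refine: weight scale $i$ by $C6^d$ but measure the log-volume ratio relative to scale $i+1$, or equivalently charge each potential drop against a regret that is $O(1)$ per drop. Each scale then contributes $O(d6^d)$ in total through $d\log(\cdot)$ of a bounded ratio, for $O(d6^d\log T)$ overall. Concretely, it suffices that each decrease by a factor $(1-6^{-d})$ at a scale $z_i$ comparable to the current width pays for $O(1)$ expected regret, and that each scale admits only $O(d6^d)$ such decreases before its width drops below the next scale. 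I will state the potential with weight $C6^d$ per scale for $i\le \log_2 T$ and verify the bound $P(0)=O(d6^d\log T)$ directly, checking that the $\log(2/z_i)$ factor can be absorbed. If it cannot, I will fall back on counting per-scale: Lemma-based decreases at scale $i$ are at most $O(6^d d\, i)$, and the regret charged per decrease is $O(2^{-i}\cdot\text{stuff})$.

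The case analysis is as follows. In the \textbf{small widths} case, the seller accepts $\us$ surely. The buyer price $q_t=\max\{\us,\lb\}$ is accepted whenever $b_t\ge s_t$ matters, and its loss is at most $2/T$, summing to $O(1)$. In the \textbf{weak overlap} case, the single middle price is surely accepted by the other agent, so the feedback is unambiguous. Regret is at most $2$ here, but \Cref{lem:unbalanced} gives a $(1-4^{-d})$ decrease at the scale $z_i\approx\frac12 w$, which pays $\Omega(1)$ under the uniform weight $C6^d$. The \textbf{strong overlap} case is the crux. A uniform price on $[-1,1]$ yields a trade with probability $(b_t-s_t)_+/2$, and the regret is bounded by $1$ per round (more precisely by the benchmark $(b_t-s_t)_+ \le 2w$, plus a nonpositive profit when trading). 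The potential only moves on a trade, where \Cref{lem:balanced-1bit} (sub-case 1) or \Cref{lem:partition} with $\alpha=\frac14$ (sub-case 2) gives a $(1-6^{-d})$ drop at scale $\approx w/16$. Conditioning exactly as before, the expected drop is at least $\frac{(b_t-s_t)_+}{2}\cdot C\cdot 6^d\cdot 6^{-d}$, which dominates the expected regret $(b_t-s_t)_+$ for $C\ge 2$ with the uniform weighting.

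The main obstacle is reconciling this charging with the total budget. The regret per strong-overlap round is the full benchmark $(b_t-s_t)_+$, not $O(w)$, because profit requires pricing close to both valuations and a trade at a uniform price may yield near-zero profit. So drops must be worth $\Omega(1)$ rather than $\Omega(w)$, which forces scale-independent weights. I must then verify that the number of effective scales is $O(\log T)$ (widths range over $[1/T,2]$) and that each scale's term starts at $O(d)$. Here I will bound $\log(\vol(K+z_i\B)/\vol(K+z_{i+1}\B))$ or use the \emph{largest} relevant scale, and I expect some care with the $\log(1/z_i)$ factor to get exactly $d6^d\log T$.
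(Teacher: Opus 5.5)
You have the case analysis right, and you correctly see that every potential drop must pay for $\Omega(1)$ regret, so the weight in front of the log-volume must be $\Theta(6^d)$ whatever the width. The gap is that you never exhibit a potential with initial value $O(d6^d\log T)$, and each route you sketch fails:
\begin{itemize}
\item \textbf{Uniform weights.} Weight $C6^d$ on $\Theta(\log T)$ dyadic scales gives $P(0)=\Theta(d6^d\log^2 T)$, as you note yourself.
\item \textbf{Consecutive-scale ratios.} With equal weights, the terms $\log\bigl(\vol(K+z_i\B)/\vol(K+z_{i+1}\B)\bigr)$ telescope to $\log\bigl(\vol(K+z_0\B)/\vol(K+z_N\B)\bigr)$. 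This quantity is not non-increasing as $K$ shrinks. By the Steiner formula it is at most $d\log(z_0/z_N)$, with equality exactly when $K$ is a point. So shrinking the confidence region can increase it, and it cannot serve as a potential.
\item \textbf{Per-scale counting.} The claim that each scale admits only $O(d6^d)$ drops ``before its width drops below the next scale'' is unsupported. Widths are measured along the changing context $x_t$ and are not monotone in $t$. The natural count of $(1-6^{-d})$-drops at scale $z_i$ is $6^d d\log(2/z_i)$, which brings back the $\log^2 T$.
\item \textbf{Fallback of $O(2^{-i})$ regret per drop.} This contradicts your own diagnosis. Weak overlap includes rounds where the projections are disjoint but the larger width exceeds $1/T$. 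There $p_t=q_t$ gives zero profit while $(b_t-s_t)_+$ can be $\Theta(1)$. In strong overlap, the uniform price on $[-1,1]$ is accepted only with probability $(b_t-s_t)_+/2$, independent of $w$.
\end{itemize}

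The missing idea is that one scale suffices, namely the finest. \Cref{lem:unbalanced}, \Cref{lem:balanced-1bit} and \Cref{lem:partition} give the multiplicative decrease for \emph{every} $z$ below a constant fraction of the relevant width. Outside the small-widths case the larger width exceeds $1/T$, and in sub-case 1 of strong overlap both widths exceed $\nicefrac{1}{4T}$. Hence $z=\nicefrac{1}{16T}$ is admissible in every non-trivial round. Take
\[
P(t)=2\cdot 6^d\left[\log\frac{\vol(B_t+z\B)}{z^d\vol(\B)}+\log\frac{\vol(S_t+z\B)}{z^d\vol(\B)}\right]+\frac{2(T-t)}{T}.
\]
With this potential:
\begin{itemize}
\item every weak-overlap round and every accepted strong-overlap round decreases $P$ by at least $2$;
\item the last term absorbs the $2/T$ loss of small-widths rounds;
\item $P(0)\le 4\cdot 6^d d\log(2/z)+2=O(d6^d\log T)$.
\end{itemize}
The $\log T$ comes from $\log(1/z)$ at this single scale, not from the number of scales. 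This is exactly the paper's proof.
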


\begin{proof}
Consider the following potential, in which $z = \nicefrac{1}{16T}$,
\begin{equation}\label{eq:pot-profit}
P(t) = 2\cdot 6^d\left[\log \left(\frac{\vol(B_t + z\B)}{z^d\vol(\B)}\right)+\log \left(\frac{\vol(S_t + z\B)}{z^d\vol(\B)}\right)\right] + 2\frac{T-t}{T}.
\end{equation}
The potential satisfies that $\forall t: P(t)\geq 0$ and as well as,
\[
P(0) \le 6^d\cdot 4\cdot d\log(2/z)+ 2 \in O(d6^d\log T),
\]
where we used $\vol(B_t + z\B)\leq 2^d\vol(\B)$. As in the proof of \Cref{thm:one-bit-gft}, we aim to show that the following equation is always satisfied, where $R(t+1)$ denotes the regret of the algorithm after the $t$-th bid, and 
where $\mathbb{E}_t[\cdot]$ is the expectation conditioned on values taken by $p_1,\dots,p_{t-1}$,
\begin{equation}\label{eq:potential-eq2}
    \mathbb{E}_t[R(t+1)+P(t+1)]\leq R(t)+P(t).
\end{equation}
Equation \eqref{eq:potential-eq2} directly implies by induction $\forall t: R(t)\leq P(0)$ and thus the desired result. We refer to the proof of \Cref{thm:one-bit-gft} for more details on the formalism behind Equation \eqref{eq:potential-eq2}, and we turn to showing that Equation \eqref{eq:potential-eq2} holds with the potential defined in \eqref{eq:pot-profit}. We do the following case-disjunction.  

\paragraph{Small widths case.} If the width of both intervals $\proj(S_t,x_t)$ and $\proj(B_t,x_t)$ is less than $\nicefrac{1}{T}$, the algorithm posts a safe price for the seller $p_t = \us$, and posts price $q_t = \max\{p_t,\lb\}$ for the buyer. We argue that in this case, the regret is at most $2/T$. 

First, if $\us\leq \lb$, then $q_t = \lb$ and the trade is accepted with a regret of $(b_t-s_t)^+-(\us-\lb)^+  = (b_t-\lb)+(\us-s_t)\leq \nicefrac{2}{T}$. 
Second, if $\us\geq \lb$ and $b_t\geq \us$ the trade is accepted because $q_t = \us$, but it generates zero profit since $p_t = q_t$. The regret is $(b_t-s_t)^+ = b_t-s_t =  (b_t-\us)+(\us-s_t)\leq (b_t-\lb)+(\us-s_t)\leq \nicefrac{2}{T}$. 
Third, if $\us\geq \lb$ and $b_t\leq \us$, the trade is rejected and the regret is $(b_t-s_t)^+ \leq (\us-s_t)^+\leq 1/T$.

Thus, the increase in the regret satisfies  $R(t+1)-R(t)\leq \nicefrac{2}{T}$. The last term of the potential definition in \eqref{eq:pot-profit} implies that $P(t+1)-P(t)\leq -\nicefrac{2}{T}$, which completes the proof of Equation \eqref{eq:potential-eq2}.

\paragraph{Weak overlap case.} In the weak overlap case, assume, without loss of generality, that the width of the seller is larger, so that $\frac{\us+\ls}{2}\leq \lb$. We naively bound the increase in regret by $R(t+1)-R(t)\leq 2$. Since we have that $z\leq \width(S_t)/2$, we apply Lemma \ref{lem:unbalanced} to obtain that $\vol(S_{t+1} + z \B) \le (1-4^{-d})\vol(S_{t} + z \B).$
Therefore, taking the logarithm, we get
\[
    P(t+1)-P(t)\leq 2\cdot 6^d\log(1-4^{-d})\leq -2,
\]
where we used $\log(1-x)\leq -x$, which implies \eqref{eq:potential-eq2}.

\paragraph{Strong overlap case.} In the case of a strong overlap, we bound the increase of the regret by $R(t+1)-R(t)\leq (b^Tx_t-s^Tx_t)^+$. We then note that the probability that the price selected uniformly at random in $[-1,1]$ is accepted satisfies 
\[
    \mathbb{P}_t[p_{t}~\text{accepted}]\geq (b^Tx_t-s^Tx_t)^+/2.
\]
Thus, to prove Equation \eqref{eq:potential-eq2}, it suffices to show the following bound on the conditional expectation
\[
\mathbb{E}_t[P(t+1)-P(t) |p_{t}~\text{accepted}]\leq -2.
\]
We will show the stronger claim, that for any event in which the price is accepted, the bound $P(t+1)-P(t)\leq -2$ applies. For this, we consider two sub-cases.

\paragraph{Sub-case 1: $\width(B_t,x_t)$ and $\width(S_t,x_t)$ are within a factor $4$.} In this case, since at least one of the two quantities is greater than $\nicefrac{1}{T}$, we have that $\nicefrac{1}{16T} =z \leq \nicefrac{1}{4}\min\{\width(B_t,x_t),\width(S_t,x_t)\}$. By \Cref{lem:balanced-1bit}, when the price is accepted, at least one of the following two inequalities is verified: 
        \[
        \begin{cases}
            \vol(S_{t+1} + z \B) \le (1-6^{-d})\vol(S_{t} + z \B), \quad &\forall z \text{ such that } z \le \nicefrac{1}{4}\width(S_t,x_t)    \\
            \vol(B_{t+1} + z \B) \le (1-6^{-d})\vol(B_{t} + z \B), \quad &\forall z \text{ such that } z \le \nicefrac{1}{4}\width(B_t,x_t)
        \end{cases}
        \]
Therefore, the potential, which uses $z=\nicefrac{1}{16T}$, satisfies
\[
P(t+1)-P(t)\leq 2\cdot 6^d\cdot \log(1-6^{-d}) \leq -2,
\]
where we used $\forall x\in (0,1]: \log(1-x)\leq -x$, completing the analysis of this sub-case. 

\paragraph{Sub-case 2: $\width(B_t,x_t)\geq 4\width(S_t,x_t)$ or $\width(S_t,x_t)\geq 4\width(B_t,x_t)$.} We assume without loss of generality that $\width(B_t,x_t)\geq 4\width(S_t,x_t)$ and we treat the other case analogously. Notice that, in this case, an accepted price must belong to the last three-quarters of $B_t$, and that $z\leq \nicefrac{1}{4}\width(B_t,x_t)$ so that, applying Lemma \ref{lem:partition} with $\alpha = \nicefrac{1}{4}$, we get,
\[
\vol(B_{t+1} + z \B) \le (1-6^{-d})\vol(B_{t} + z \B).
\]
Thus, the decrease in the potential satisfies
\[
P(t+1)-P(t) \leq 2\cdot 6^d \log(1-6^{-d}) \leq -2,
\]
where we used $\forall x\in (0,1]: \log(1-x)\leq -x$.
\end{proof}

{\small
\bibliographystyle{plainnat}
\bibliography{references}
}

\newpage
\noindent {\bf \LARGE Appendix}

\appendix
\section{Pseudocodes for efficiency maximization}\label{app:pseudocodes-efficiency}


\begin{algorithm}
        \caption{Efficiency Maximization with Two-Bit Feedback}
        \begin{algorithmic}[1]
        \State $S_1, B_1 \gets \B$, $z_i = \nicefrac{2^{-i}}{8d}$ for all $i$ \Comment{Initial confidence sets and scale parameters}
        \For{$t=1, \dots, T$}
            \State The adversary reveals the context $x_t$
            \State let $i_t$ be the largest index $i$ such that $\max\{\width(S_{t},x_t),\width(B_{t},x_t)\}\le 2^{-i}$
            \If{$\its \cap \itb= \emptyset$}\Comment{Well-separated}
                \State $p_t \gets \us$ 
            \ElsIf{$\width(S_{t},x_t) \ge  \width(B_{t},x_t)$} \Comment{Seller Dominating}
                \State $p_t \gets$ the solution of $ \vol(\{v \in S_t + z_{i_t}\B \mid \ip{v,x_t} \le p_t\}) = \tfrac{1}{2} \vol(S_t + z_{i_t} \B)$
            \Else \Comment{Buyer Dominating}
                \State $p_t\gets$ the solution of $ \vol(\{v \in B_t + z_{i_t}\B \mid \ip{v,x_t} \ge p_t\}) = \tfrac{1}{2} \vol(B_t + z_{i_t} \B)$
            \EndIf
            \State Post price $p_t$, and update $S_t$ and $B_t$ according to $\ind{\ip{s,x_t} \le p_t}$ and $\ind{\ip{b,x_t} \ge p_t}$
        \EndFor
        \end{algorithmic}
        \label{alg:2bit-gft}
        \end{algorithm}
        
\begin{algorithm}
        \caption{Efficiency Maximization with One-Bit Feedback and Per-Round Budget Balance}
        \begin{algorithmic}[1]
        \State $S_1, B_1 \gets \B$ \Comment{Initial confidence sets}
        \For{$t=1, \dots, T$}
            \State The adversary reveals the context $x_t$
            \If{$\us\leq \lb$}\Comment{Well-separated}
            \State $p_t \gets \us$             \ElsIf{$\width(S_{t},x_t) \ge  \width(B_{t},x_t)$ and $\nicefrac{(\ls + \us)}{2} \le \lb$}\Comment{Weak overlap}
            \State $p_t\gets \nicefrac{(\ls + \us)}{2}$
            \ElsIf{$\width(B_{t},x_t) \ge  \width(S_{t},x_t)$ and $\nicefrac{(\lb + \ub)}{2} \geq \us$}\Comment{Weak overlap}
            \State $p_t\gets \nicefrac{(\lb + \ub)}{2}$
            \Else \Comment{Strong overlap}
            \State $p_t \gets$  a price drawn u.a.r. in $\its \cup \itb$ 
            \EndIf
        \State Post price $p_t$, and update $S_t$ and $B_t$ if there is no ambiguity in the feedback
        \EndFor
        \end{algorithmic}
        \label{alg:1bit-gft}
        \end{algorithm}

\newpage
\section{Pseudocodes for profit maximization}\label{app:pseudocodes-profit}
\begin{algorithm}[!ht]
        \caption{Profit Maximization with Two-Bit Feedback}
        \begin{algorithmic}[1]
        \State $S_1, B_1 \gets \B$, $z_i = \nicefrac{2^{-3 \cdot 2^i}}{16d}$ for all $i$ \Comment{Initial confidence sets and scale parameters}
        \For{$t=1, \dots, T$}
            \State The adversary reveals the context $x_t$
            \State let $i_t$ be the largest index such that $\max\{\width(S_{t},x_t),\width(B_{t},x_t) \}\le 2^{-2^{i_t}}$
            \If{$\max\{\width(S_t,x_t),\width(B_t,x_t)\} \le \nicefrac 1T$}
                \State $p_t \gets \us$, $q_t \gets \max\{p_t,\lb\}$
            \ElsIf{$\width(S_{t},x_t) \ge  \width(B_{t},x_t)$} \Comment{Seller Dominating case}
                \State $m^S_t$ solves $ \vol(\{v \in S_t + z_{i_t}\B \mid \ip{v,x_t} \ge m^S_t\}) = 2^{-2^{{i_t}-1}}\vol(S_t + z_{i_t}\B)$ 
                \State $p_t \gets m^S_t+z_{i_t}$, $q_t \gets \max\{p_t,\lb\}$
            \Else \Comment{Buyer Dominating case}
                \State
                $m^B_t$ solves $ \vol(\{v \in B_t + z_{i_t}\B \mid \ip{v,x_t} \le m^B_t\}) = 2^{-2^{{i_t}-1}}\vol(B_t + z_{i_t}\B)$ 
                \State $q_t \gets m^B_t-z_{i_t}$, $p_t \gets \min\{q_t,\us\}$
            \EndIf
            \State Post price $p_t$ to the seller and $q_t$ to the buyer
            \State Update $S_t$ and $B_t$ according to $\ind{\ip{s,x_t} \le p_t}$ and $\ind{\ip{b,x_t} \ge q_t}$
        \EndFor
        \end{algorithmic}
        \label{alg:2bit-profit}
        \end{algorithm}


\begin{algorithm}
        \caption{Profit Maximization with One-Bit Feedback and Per-Round Budget Balance}
        \begin{algorithmic}[1]
        \State $S_1, B_1 \gets \B$\Comment{Initial confidence sets}
        \For{$t=1, \dots, T$}
            \State The adversary reveals the context $x_t$
            \If{$\max\{\width(S_t,x_t),\width(B_t,x_t)\} \le \nicefrac 1T$} \Comment{Small widths}
                \State $p_t \gets \us$, $q_t \gets \max\{p_t,\lb\}$
            \ElsIf{$\width(S_{t},x_t) \ge  \width(B_{t},x_t)$ and $\nicefrac{(\ls + \us)}{2} \le \lb$}\Comment{Weak overlap}
            \State $p_t \gets \nicefrac{(\ls + \us)}{2}$, $q_t \gets p_t$
            \ElsIf{$\width(B_{t},x_t) \ge  \width(S_{t},x_t)$ and $\nicefrac{(\lb + \ub)}{2} \geq \us$}\Comment{Weak overlap}
            \State $q_t \gets \nicefrac{(\lb + \ub)}{2}$, $p_t \gets p_t$
            \Else \Comment{Strong overlap}
            \State  Let $u_t$ be a price drawn u.a.r. in $[-1,1]$        
            \State $p_t \gets u_t$, $q_t \gets u_t$
            \EndIf
            \State Post price $p_t$ to the seller and $q_t$ to the buyer
            \State Update $S_t$ and $B_t$ if there is no ambiguity in the feedback
        \EndFor
        \end{algorithmic}
        \label{alg:1bit-profit-bb}
        \end{algorithm}

\newpage

\section[Omitted content from Section 4]{Omitted Content from \Cref{sec:efficiency}}\label{app:efficiency}

\lembalanced*

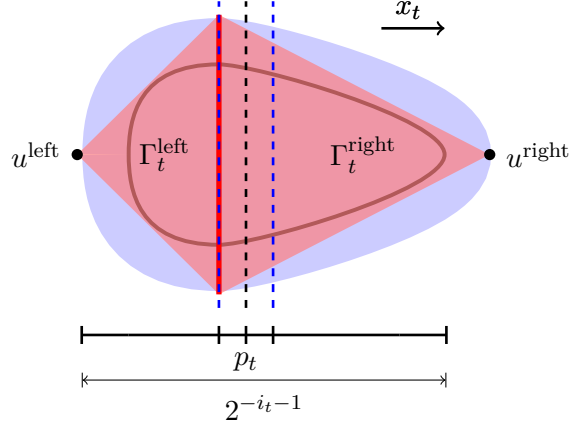
\begin{figure}[H]
\centering
\begin{tikzpicture}[scale=1.2]
  \newcommand{\pathA}{(0,1) .. controls (0,1.4) and (0.1, 2) .. (1,2)
              .. controls (1.4,2) and (3.5,1.4) .. (3.5,1)
              .. controls (3.5,0.6) and (1.4,0) .. (1,0)
              .. controls (0.1,0) and (0,.6) .. (0,1)}

 \begin{scope}
  \draw[line width=35pt, color=blue!20!white] \pathA;
  \draw[line width=1.5pt, fill=blue!20!white] \pathA;

  \fill[line width=0pt, fill=red!50!white, fill opacity=0.7] (-.57,1) -- (1,2.55) -- (1,-.55)  -- cycle;

  \fill[line width=0pt, fill=red!50!white, fill opacity=0.7] (4,1) -- (1,2.55) -- (1,-.55)  -- cycle;

  \begin{scope}[line width=1.0pt]
 \begin{scope}[>=latex]
 \draw[-] (0,-1)--(3,-1);
 \draw[-] (-.52,-1)--(0,-1);
 \draw[-] (3.52,-1)--(3,-1);
 \end{scope}
 \draw (-.52,-.9)--(-.52,-1.1);
 \draw (3.52,-.9)--(3.52,-1.1);
 \draw (1,-.9)--(1,-1.1);
\draw (1.6,-.9)--(1.6,-1.1);
\draw (1.3,-.9)--(1.3,-1.1);
\draw[->] (2.8,2.4)--(3.5,2.4);
 \end{scope}

 \node at (3.1, 2.6) {$x_t$};

 \draw[|<->|] (-.52,-1.5) -- node[below=2pt] {$2^{-i_t-1}$} (3.52,-1.5);
 
 \node at (1.3, -1.3) {$p_t$};
  
 \node at (0.4, 1) {$\Gamma_t^{\text{left}}$};
 \node at (2.6, 1) {$\Gamma_t^{\text{right}}$};

 \draw[line width=2pt, color=red] (1,2.55)--(1,-.55);

  \draw[line width=1pt, dashed, blue] (1,-.7)--(1,2.7);
  \draw[line width=1pt, dashed, blue] (1.6,-.7)--(1.6,2.7);
  \draw[line width=1pt, dashed] (1.3,-.7)--(1.3,2.7);

\node at (3.1, 2.6) {$x_t$};

  \node [shape=circle, fill=black,inner sep=1.5pt,label=left:$u^{\text{left}}$] (-.57,1) at (-.57,1) {};

 \node [shape=circle, fill=black,inner sep=1.5pt,label=right:$u^{\text{right}}$] (3.57,1) at (4,1) {};

\end{scope}
\end{tikzpicture}

\caption{Illustration for the proof of Lemma \ref{lem:balanced}: The blue opaque body is $S_t + z_{i_t}\B$ and the black solid contoured one is $S_t$. In faded red, the cones encapsulating the convex body. The solid red vertical line is the largest cross-section of $(d-1)$-dimensional volume $\nu$.}
\label{fig:proof}
\end{figure}

\begin{proof}
            We prove the statement concerning an accepted seller price, the other three cases are analogous (see \Cref{fig:proof} for an illustration). The seller's confidence region at time $t+1$ reads
            \(
                S_{t+1} = \{v \in S_t \mid \ip{ x_t,v} \leq p_t\}.
            \)
            
            Consider the subset of $S_{t+1} + z_{i_t}\B$ intersected with the half-space $\{ v :
            \ip{x_t,v} \leq p_t\}$. 
            By definition of $p_t$, and the fact that the seller accepts the price, we have that
            \begin{equation}
                \label{eq:S_t+1vsS_t}
                \vol(\{v \in S_{t+1} + z_{i_t}\B \mid \ip{v,x_t} \le p_t\}) \leq \vol(\{v \in S_t + z_{i_t}\B \mid \ip{v,x_t} \le p_t\}) = \tfrac{1}{2} \vol(S_t + z_{i_t} \B).
            \end{equation}
            Consider now the family of all cross-sections of $S_{t} + z_{i_t}\B$ orthogonal to direction $x_t$ that are at distance at most $z_i$ from the subspace $\{v: \ip{x_t,v}= p_t\}$, and denote with $\nu$ the $(d-1)$-dimensional volume of the largest such cross-section. The set $\{v \in S_{t+1} + z_{i_t}\B \mid \ip{v,x_t} \ge p_t\}$ is contained in a cylinder of width $z_i$ and of cross-section $\nu$, therefore:
            \begin{equation}\label{eq:S_t_part2}
               \vol(\{v \in S_{t+1} + z_{i_t}\B \mid \ip{v,x_t} \ge p_t\}) \leq \nu z_{i_t}. 
            \end{equation}
 Combining \Cref{eq:S_t+1vsS_t} with \Cref{eq:S_t_part2}, we thus have that
            \[
    \vol(S_{t+1}+z_{i_t}\B)\leq \tfrac{1}{2} \vol(S_t + z_{i_t} \B) + \nu z_{i_t}.
            \]
            Therefore, to conclude the proof, it suffices to show that $\nu z_{i_t}\leq \tfrac{1}{4} \vol(S_t + z_{i_t} \B)$. 
            
            Denote with $u^{\text{left}}_t$ and $u^{\text{right}}_t$ the extremes of $S_{t} + z_{i_t}\B$ along direction $x_t$; we introduce an orientation ``left'' and ``right'', in the sense that $u^{\text{left}}_t$ is the point corresponding to the smallest value of $\ip{v,s}$ for $v \in S_{t+1} + z_{i_t}\B$, while $u^{\text{right}}_t$ corresponds to the largest. 
            We then consider the cone $\Gamma^{\text{left}}_t$ (resp. $\Gamma^{\text{right}}_t$)  formed by taking the convex hull of the largest cross-section of $(d-1)$-dimension $\nu$ and $u^{\text{left}}_t$ (resp. $u^{\text{right}}_t$). Since the distance of $u^{\text{left}}_t$ and $u^{\text{right}}_t$ along direction $x_t$ is at least $2^{-i_t-1}$, and since $z_{i_t} = \nicefrac{2^{-i_t}}{8d}$, applying \Cref{fct:cone-volume}, we can conclude with
            \[
                \vol(S_t + z_{i_t} \B) \geq  \vol(\Gamma^{\text{left}}_t) + \vol(\Gamma^{\text{right}}_t) \geq \frac{2^{-i_t-1}\nu}{d} = 4\nu z_{i_t}. 
            \]
        \end{proof}

        \lowergft*
    \begin{proof}
    We prove this lower bound via Yao's Principle, by constructing a randomized instance such that any deterministic algorithm suffers, in expectation, the desired regret rate $\Omega(d)$.
    Consider an instance where the contexts are just the canonical basis vectors, arriving one after the other $e_1, e_2, \dots, e_d$. The instance can be arbitrarily long, but we only need these first $d$ contexts. We construct $s$ and $b$ independently per coordinate. Namely, for each coordinate $j \in \{1,\ldots, d\}$:
        \[
            (s_j, b_j) = \begin{cases}
                (0, \nicefrac{1}{3}), \text{ w.p. } \nicefrac{1}{2}\\
                (\nicefrac{2}{3}, 1), \text{ w.p. } \nicefrac{1}{2}
            \end{cases}.
        \]
        We note that, since every new context is not spanned by the previous ones and $(s_j,b_j)$ are determined independently per coordinate, the information provided by context $e_j$ is of no use to the algorithm for the successive contexts $e_{i}$ for $i>j$. 
        
        Consider now any learning algorithm, the problem faced at each new context is independent from the past observation, in particular for any price $p_t$ it posts, it either observes a trade with probability $\nicefrac 12$ (i.e., by posting prices in $[0,\nicefrac 13]\cup[\nicefrac 23,1]$) or it cannot observes a trade (for $p_t \in (\nicefrac 13,\nicefrac 23)$). Overall, its expected gain from trade is at most $\nicefrac d6$. Conversely, the benchmark attains a gain from trade of $\nicefrac{d}{3}$, as it always picks the right price between $\nicefrac 13$ and $\nicefrac 23$ for a per-round gain from trade of $\nicefrac 13$. This concludes the proof.
    \end{proof}

\lempartition*

We first establish the following (arguably folklore) facts:

\begin{fact}\label{fct:vol-scaling}
    Consider a body $K \in \R^d$ whose intrinsic $d$-dimensional volume is denoted by $\vol(K)$. Then, the body $K^\prime = \gamma K$ scaled by some positive scalar $\gamma > 0$ has volume $\gamma^d \cdot \vol(K)$.
\end{fact}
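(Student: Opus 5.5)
Fact~\ref{fct:vol-scaling} is the scaling property of Lebesgue measure, and I will prove it by a change of variables. The plan is to consider the linear map $T_\gamma:\R^d\to\R^d$, $T_\gamma(v)=\gamma v$, which is invertible because $\gamma>0$, and note that $K'=T_\gamma(K)$. Its matrix is $\gamma I_d$, whose determinant is $\gamma^d$. The standard formula for the Lebesgue measure of the image of a measurable set under a linear map $A$ gives $\vol(A K)=|\det A|\,\vol(K)$. Applied here, this yields $\vol(\gamma K)=\gamma^d\vol(K)$ directly.

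To keep the argument self-contained rather than rely on the general determinant formula, I would instead check the claim on boxes and then extend it.
\begin{enumerate}
\item For an axis-parallel box $\prod_{j=1}^d[a_j,b_j]$, the scaled set is $\prod_j[\gamma a_j,\gamma b_j]$. Its volume is $\prod_j\gamma(b_j-a_j)=\gamma^d\prod_j(b_j-a_j)$.
\item Since $T_\gamma$ is a bijection, it maps any countable cover of $K$ by boxes to a countable cover of $\gamma K$ by boxes, with total volume multiplied by exactly $\gamma^d$.
\item Taking infima over such covers shows that the outer measure satisfies $\vol^*(\gamma K)=\gamma^d\vol^*(K)$. The same correspondence of covers applies to $T_{1/\gamma}$, so the reverse inequality holds as well.
\item $T_\gamma$ is a homeomorphism, so it preserves measurability. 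Hence the identity holds for the measure itself, and in particular for the convex bodies used in this paper.
\end{enumerate}

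None of these steps is a real obstacle. The only point that needs care is the passage from boxes to general sets in steps 2 and 3, and it goes through because scaling maps boxes to boxes bijectively. In the paper's applications, e.g.\ $\vol(z\B)=z^d\vol(\B)$, the sets are compact convex bodies. For these one could also approximate from inside and outside by finite unions of grid cubes of side $\eps$. Scaling sends these to unions of grid cubes of side $\gamma\eps$, and letting $\eps\to0$ finishes the argument.
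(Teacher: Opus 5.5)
Your proposal is correct. Your opening paragraph is the paper's argument: the paper applies the change-of-variables formula $\int_{f(U)} g = \int_U (g\circ f)\,|\det Df|$ to $f(x)=\gamma x$ with $g\equiv 1$. Since $Df=\gamma I_d$, this gives $\vol(\gamma K)=\int_K \gamma^d\,dx=\gamma^d\vol(K)$.

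The box-cover argument you then adopt as your main proof is a genuinely different, more elementary route. It works directly from the definition of Lebesgue outer measure, and it uses only that $T_\gamma$ maps boxes bijectively onto boxes while multiplying each volume by $\gamma^d$. What this buys is self-containedness. Standard proofs of the change-of-variables theorem first establish $\vol(AK)=|\det A|\vol(K)$ for linear $A$, usually by essentially your box argument. So the paper's route is shorter, but it rests on a heavier theorem whose proof already contains the statement being proved.

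One small correction to your step 4. A homeomorphism preserves Borel sets, but it need not preserve Lebesgue measurability. For example, $x\mapsto x+c(x)$, with $c$ the Cantor function, sends the Cantor set onto a set of positive measure. It therefore sends some null, hence measurable, subset onto a non-measurable set. For $T_\gamma$, measurability is nevertheless preserved: your outer-measure identity, applied to arbitrary test sets, transports Carath\'eodory's criterion from $K$ to $\gamma K$. In any case, the convex bodies used in the paper are closed, hence Borel, so your conclusion is unaffected.
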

\begin{proof}
    We know that, for an injective, continuously differentiable map $f: U \to \R^d$, it holds that 
       \[
        \int_{f(U)} g(y) dy = \int_U g(f(x)) \cdot |\det(Df(x))| dx,
       \]  
    where $\det(Df)$ is the Jacobian determinant. We let $f(x)=\gamma x$ so that \(Df(x)=\gamma I_d\), and hence, the above becomes  
   \[
     \vol(\gamma K) = \int_K |\det(\gamma I_d)| dx = \gamma^d \cdot \int_K 1 dx = \gamma^d \cdot \vol(K),
   \]  
   as desired.
\end{proof}
\begin{fact}\label{fct:cone-volume}
    Consider a $d$-dimensional cone $\Gamma$, defined as the convex hull of a $d-1$-dimensional convex set $\Psi_{d-1}$ called the base, and a point $q$ called the apex, at distance $\ell$ from the subspace that supports $\Psi_{d-1}$.   
    Its volume satisfies
    \[
        \vol(\Gamma) = \frac{\ell}{d} \cdot \vol(\Psi_{d-1}),
    \]
    where 
    $\vol(\Psi_{d-1})$ is to be intended in dimension $(d-1)$.
\end{fact}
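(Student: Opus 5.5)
We prove the cone volume formula by slicing $\Gamma$ with hyperplanes parallel to the base and integrating the $(d-1)$-dimensional volumes of the slices. Set coordinates so that the subspace supporting $\Psi_{d-1}$ is $\{y \in \R^d \mid y_d = 0\}$ and the apex $q$ lies at height $y_d = \ell$; this is possible after a rigid motion, which preserves volume.

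The first step is to describe the slices. Every point of $\Gamma$ is of the form $(1-\lambda) u + \lambda q$ with $u \in \Psi_{d-1}$ and $\lambda \in [0,1]$: the set of such combinations is convex and contains $\Psi_{d-1}$ and $q$, so it equals the convex hull. Such a point has height $h = \lambda \ell$. Hence the slice of $\Gamma$ at height $h \in [0,\ell]$ is
\[
\Gamma_h = \Big(1-\tfrac{h}{\ell}\Big)\Psi_{d-1} + \tfrac{h}{\ell}\, q,
\]
which is a translate of the scaled base $(1-h/\ell)\Psi_{d-1}$.

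The second step computes the slice volumes. By \Cref{fct:vol-scaling}, applied in dimension $d-1$, and translation invariance of Lebesgue measure,
\[
\vol_{d-1}(\Gamma_h) = \big(1-\tfrac{h}{\ell}\big)^{d-1}\vol(\Psi_{d-1}).
\]

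The third step applies Fubini/Cavalieri, which is legitimate since $\Gamma$ is compact and convex, hence measurable:
\[
\vol(\Gamma) = \int_0^\ell \big(1-\tfrac{h}{\ell}\big)^{d-1}\vol(\Psi_{d-1})\,dh.
\]
Substituting $u = 1 - h/\ell$ gives $\ell\,\vol(\Psi_{d-1})\int_0^1 u^{d-1}\,du = \frac{\ell}{d}\vol(\Psi_{d-1})$.

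The only step needing any care is the slice description, in particular that every point of the convex hull is a single segment combination of the apex with a base point. This holds because the union of segments from $q$ to points of the convex set $\Psi_{d-1}$ is already convex. Beyond that, the computation is routine.
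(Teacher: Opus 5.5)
Your proof is correct and follows essentially the same route as the paper: slice parallel to the base, use \Cref{fct:vol-scaling} to identify each slice as a scaled copy of $\Psi_{d-1}$, and integrate. The only difference is that the paper measures distance from the apex (scaling factor $x/\ell$) where you measure height from the base (factor $1-h/\ell$), which is just a change of variables; your explicit justification that the convex hull is the union of segments from $q$ to $\Psi_{d-1}$ fills in a step the paper leaves implicit.
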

\begin{proof}
    Any cross‐section of the cone taken by a hyperplane parallel to the base at a distance \(x\) from the apex $q$ (with \(0 \le x \le \ell\)) is a scaled copy of \(\Psi_{d-1}\) with scaling factor $\nicefrac{x}{\ell}$.
    Thus, the \((d-1)\)-volume of this slice is
    \[
    \vol\left(\frac{x}{\ell} \cdot \Psi_{d-1}\right) = \left(\frac{x}{\ell}\right)^{d-1} \cdot \vol(\Psi_{d-1}),
    \]
    by \Cref{fct:vol-scaling}.
    We integrate over \(x\) from \(0\) to \(\ell\) to obtain a total volume of
    \[
    \vol(\Gamma) = \int_{0}^{\ell} \left(\frac{x}{\ell}\right)^{d-1}\vol(\Psi_{d-1})\,dx 
    = \vol(\Psi_{d-1}) \cdot \ell^{-d+1} \int_{0}^{\ell} x^{d-1} dx = \frac{\ell}{d} \cdot \vol(\Psi_{d-1}).
    \]
    This establishes the statement.
\end{proof}

\begin{figure}[h]
\centering
\begin{tikzpicture}[scale=1.2]
  \newcommand{\pathA}{(0,1) .. controls (0,1.4) and (.6, 2) .. (1,2)
              .. controls (1.4,2) and (3,1.4) .. (3,1)
              .. controls (3,0.6) and (1.4,0) .. (1,0)
              .. controls (.6,0) and (0,.6) .. (0,1)}
  \fill[blue!0!white] (-.1,-.4) rectangle (3.4,2.6);
  \draw[line width=40pt, color=blue!20!white] \pathA;
  \draw[line width=1.5pt, fill=blue!20!white] \pathA;

 \begin{scope}[line width=1.0pt]
 \begin{scope}[>=latex]
 \draw[-] (0,-1)--(3,-1);
 \draw[-] (-.52,-1)--(0,-1);
 \draw[-] (3.52,-1)--(3,-1);
 \end{scope}
 \draw (0,-.9)--(0,-1.1);
 \draw (-.52,-.9)--(-.52,-1.1);
 \draw (3.52,-.9)--(3.52,-1.1);
 \draw (3,-.9)--(3,-1.1);
 \draw (2,-.9)--(2,-1.1);
 \draw[->] (2.8,2.4)--(3.5,2.4);
 \end{scope}
\node at (3.1, 2.6) {$x$};

 \node at (-.23, -1.3) {$z$};
 \node at (1.7, 2.7) {$H$};

 \node at (3.23, -1.3) {$z$};
 \node at (1, -1.3) {$\alpha \ell$};
 \node at (2.5, -1.3) {$(1-\alpha) \ell$};

  \draw[line width=1pt, dashed] (2,-.7)--(2,2.7);

  \node [shape=circle, fill=black,inner sep=1.5pt,label=left:$u^{\text{left}}$] (-.57,1) at (-.57,1) {};
    \node [shape=circle, fill=black,inner sep=1.5pt,label=right:$u^\prime$] (1.43,1) at (1.43,1) {};

 \node [shape=circle, fill=black,inner sep=1.5pt,label=right:$u^{\text{right}}$] (3.57,1) at (3.57,1) {};


 \begin{scope}[xshift=6.2cm]
  \draw[line width=40pt, color=blue!20!white] \pathA;
  \draw[line width=1.5pt, fill=blue!20!white] \pathA;

   \draw[line width=1pt, dashed] (2,-.7)--(2,2.7);

  \fill[line width=0pt, fill=red!50!white, fill opacity=0.7] (-.57,1) -- (3.57,4) -- (3.57,-2)  -- cycle;
  

 \draw[line width=2pt, color=red] (1.43,2.5)--(1.43,-.5);

 \begin{scope}[line width=1.0pt]
 \begin{scope}[>=latex]
 \draw[-] (-.52,-1)--(2,-1);
 \end{scope}
 \draw (1.43,-.9)--(1.43,-1.1);
 \draw (-.52,-.9)--(-.52,-1.1);
 \draw (2,-.9)--(2,-1.1);
 \end{scope}

\node at (.5, -1.3) {$\alpha \ell$};
 \node at (1.64, -1.3) {$z$};

  \node [shape=circle, fill=black,inner sep=1.5pt,label=left:$u^{\text{left}}$] (-.57,1) at (-.57,1) {};
    \node [shape=circle, fill=black,inner sep=1.5pt,label=right:$u^\prime$] (1.43,1) at (1.43,1) {};

 \node [shape=circle, fill=black,inner sep=1.5pt,label=right:$u^{\text{right}}$] (3.57,1) at (3.57,1) {};

\end{scope}
\end{tikzpicture}

\caption{Illustration for the proof of Lemma \ref{lem:partition}.}
\label{fig:dlogd}
\end{figure}
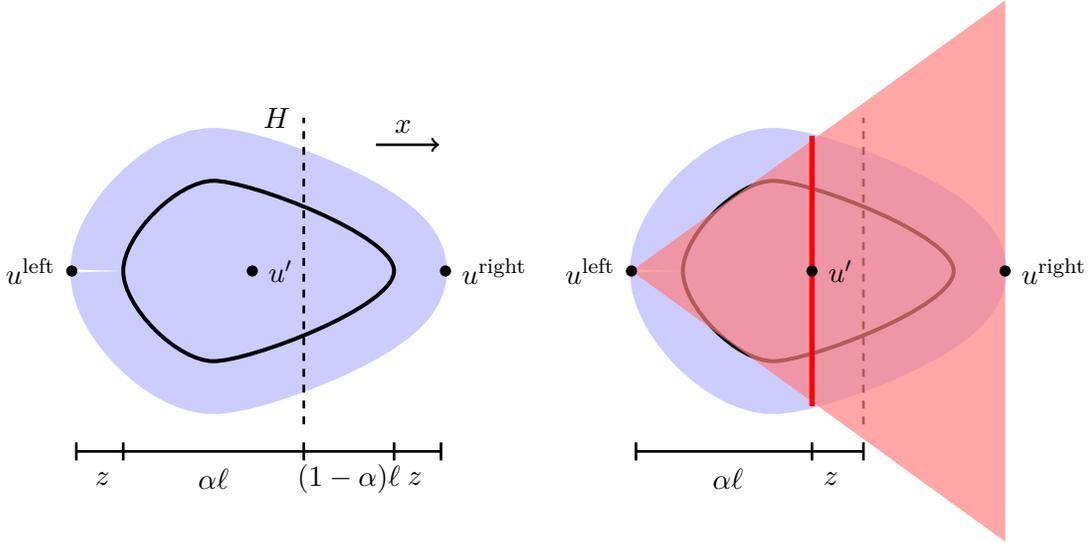
\FloatBarrier

\begin{proof}[Proof of \Cref{lem:partition}]
    We denote by $\ell = \width(K,x)$, the length of the segment $\proj(x, K)$. Without loss of generality, we assume that $H$ is of the form $H = \{y\in \mathbb{R}^d : \ip{x,y}\leq \lambda\}$, where $x$ is the vector normal to 
$h$, and that the convex set $K$ is closed and bounded. The notations introduced in the proof are illustrated in \Cref{fig:dlogd}.

We denote by $u^{\text{left}}$ (resp. $u^{\text{right}}$) a minimizer (resp. maximizer) of $:u\rightarrow\ip{u,x}$ in $K + z\B$. Note that we must have $\ip{u^{\text{right}}-u^{\text{left}},x} = \ell + 2z$. We also define the point $u'$ in the segment $[u^{\text{left}},u^{\text{right}}]$ satisfying $\ip{u'-u^{\text{left}},x} = \alpha \ell$. We denote by $h'$ the hyperplane orthogonal to $x$ passing through $u'$, and by $H'$ the corresponding half-space that contains $u^{\text{left}}$ (note that $H'\subset H$). We partition $K + z\B$ into two convex pieces according to $h'$, we denote by $L$ the part containing $u^{\text{left}}$ (i.e., $L = (K+z\B) \cap H'$), and by $R$ the part containing $u^{\text{right}}$ (i.e., $R = (K+z\B)\setminus H'$). 

Observing that $K\setminus H + z\B \subset R$ and that
    $\vol(K + z\B) = \vol(L) + \vol(R)$, we can write 
    \[
        \frac{\vol(K + z\B)}{\vol(K\setminus H + z\B)}\geq \frac{\vol(K + z\B)}{\vol(R)} = 1 +  \frac{\vol(L)}{\vol(R)}.
    \]
    Our goal is thus now to lower bound the numerator $\vol(L)$, and upper bound the denominator $\vol(R)$. To that end, consider the cone $\Gamma$ generated as the convex hull of the base $(K + z\B) \cap h'$ and the apex $u^{\text{left}}$. By convexity of $L$, we must have $\Gamma \subset L$, and thus $\vol(\Gamma)\leq \vol(L)$. Now further observe that, by convexity of $K+z\B$, $R$ must be contained in $\Gamma^+\setminus \Gamma$, where $\Gamma^+$ is a rescaled copy of $\Gamma$ with rescale factor $\nicefrac{\ell + 2z}{\alpha \ell}$. This implies that  $\vol(R)\leq  ((\nicefrac{\ell + 2z}{\alpha \ell})^d-1)\vol(\Gamma)$, which gives,
    \[
    \frac{\vol(K + z\B)}{\vol(K\setminus H + z\B)} \geq 1 + \frac{1}{(\nicefrac{\ell + 2z}{\alpha \ell})^d-1} = \frac{1}{1-\left(\frac{\alpha\ell}{\ell+2z}\right)^d} \geq \frac{1}{1-\left(\frac{\alpha}{1+2\alpha}\right)^d}
    \]
    since $z \leq \alpha\ell$. This concludes the proof.

\end{proof}

\section[Omitted content from Section 5]{Omitted Content from \Cref{sec:profit}}
\label{app:profit}

\lemrefuse*
        \begin{proof}
            Let us focus on the seller, as the proof for the buyer is identical, but with the signs flipped. Since the seller refuses the price, we know that
            \[
                \{v \in S_t + z_i\B \mid \ip{v,x_t} < m^S_t\} \cap (S_{t+1} + z_{i_t}\B) = \emptyset.
            \]
            Then, by definition of $m^S_t$, we have that
            \[
                \vol(S_{t+1} + z_{i_t}\B) \le 2^{-2^{i_t-1}} \vol(S_t+z_{i_t}\B),
           \]
            which concludes the proof.
        \end{proof}
        \lemaccept*
        \begin{proof}
            The proof proceeds similarly to that of \Cref{lem:balanced}. Again, we focus only on the seller's case, as the buyer's case holds identically but with signs flipped.

            First, observe that in case the seller accepts the price $m^S_t + z_{i_t}$, we have that 
            \[
                \{v \in S_t + z_{i_t}\B \mid \ip{v,x_t} \ge m^S_t + 2z_{i_t}\} \cap (S_{t+1} + z_{i_t}\B) = \emptyset,
            \]
            and hence,
            \[
                \vol(S_{t+1} + z_{i_t}\B) \le \vol(S_{t} + z_{i_t}\B) - \vol(\{v \in S_t + z_{i_t}\B \mid \ip{v,x_t} \ge m^S_t + 2z_{i_t}\}).
            \]
            Thus, our task reduces to bounding from below the volume of $\{v \in S_t + z_i\B \mid \ip{v,x_t} \ge m^S_t + 2z_{i_t}\}$. To this end, let $\nu$ be the volume of the largest cross-section inside $S_t + z_{i_t}\B$. We have that
            \[
                \vol(\{v \in S_t + z_{i_t}\B \mid m^S_t \le \ip{v,x_t} \le m^S_t + 2z_{i_t}\}) \leq 2\nu z_{i_t},
            \]
            since it has a cross-section of volume $\nu$ and a width of at most $2z_{i_t}$ in the $x_t$ direction. 
            By convexity of $S_t + z_{i_t}\B$ (and the same argument as in \Cref{lem:balanced}), we have that 
            \[
                \vol(S_t + z_{i_t} \B) \geq \frac{2^{-2^{i_t+1}}\nu}{d} \quad \Longrightarrow \quad \nu \le \frac{d}{2^{-2^{i_t+1}}} \vol(S_t + z_{i_t} \B). 
            \]
            The above, together with the definition of $m^S_t$, imply that
            \begin{align*}
                &\vol(\{v \in S_t + z_{i_t}\B \mid \ip{v,x_t}  \ge m^S_t + 2z_{i_t}\}) \\
                =~ &\vol(\{v \in S_t + z_{i_t}\B \mid \ip{v,x_t}  \ge m^S_t\}) - \vol(\{v \in S_t + z_{i_t}\B \mid m^S_t \le \ip{v,x_t} \le m^S_t + 2z_{i_t}\}) \\
                \ge~ &2^{-2^{{i_t}-1}}\vol(S_t + z_{i_t}\B) -  \frac{2dz_{i_t}}{2^{-2^{i_t+1}}} \vol(S_t + z_{i_t} \B) \\
                \ge~ &\frac{1}{10\cdot 2^{2^{{(i_t-1)}}}} \vol(S_t + z_{i_t} \B).
            \end{align*}
            This concludes the proof.
        \end{proof}

        \profitlower*
        \begin{proof}
            Consider the one-sided version of the problem, where the seller's valuation is always $0$. From \citet{KleinbergL03}, we know that the context-free version of such a problem admits a lower bound of $\Omega(\log \log T)$ on the regret for any profit-maximizing algorithm. We can amplify such a lower bound to $d$ dimensions by subdividing the time horizon into $d$ chunks of $\nicefrac Td$ time steps, and repeating the one-dimensional lower bound construction along the $d$ vectors of the canonical basis. In each chunk, any algorithm suffers $\Omega(\log \log \nicefrac Td)$ regret, for an overall regret of at least $\Omega(d\log \log \nicefrac Td)$.
        \end{proof}

\end{document}